\renewcommand{\F}{\mathds{F}}
\newcommand{\rnote}[1]{\footnote{{\bf \color{red}Rocco}: {#1}}}
\newtheorem*{rep@theorem}{\rep@title}
\newcommand{\newreptheorem}[2]{
\newenvironment{rep#1}[1]{
 \def\rep@title{#2 \ref{##1}}
 \begin{rep@theorem}\itshape}
 {\end{rep@theorem}}}
\theoremstyle{plain}
\def\colorful{1}
\newcommand{\red}[1]{{\color{red} {#1}}}
\newcommand{\red}[1]{{{#1}}}
\newcommand{\ignore}[1]{}
\newtheorem*{theorem*}{Theorem}
\newtheorem*{noclaim*}{Claim}
\newcommand{\DT}{\mathsf{DT}}
\newcommand{\uhr}{\upharpoonright}
\newcommand{\fixed}{\mathrm{fixed}}
\newcommand{\acz}{\mathsf{AC^0}}
\newcommand{\comm}{\mathrm{comm}}
\newcommand{\err}{\mathrm{err}}
\newcommand{\corr}{\mathrm{corr}}
\newcommand{\target}{\mathrm{target}}
\newcommand{\SL}{\mathrm{SL}}
\newcommand{\SYM}{\mathsf{SYM}}
\newcommand{\THR}{\mathsf{THR}}
\newcommand{\ANY}{\mathsf{ANY}}
\newcommand{\GIP}{\mathrm{GIP}}
\newcommand{\RW}{\mathrm{RW}}
\newcommand{\G}{\mathsf{G}}
\newcommand{\pparagraph}[1]{\medskip \noindent {\bf {#1}}}
\begin{document}

\title{Luby--Veli{\v{c}}kovi{\'c}--Wigderson revisited: \\
Improved correlation bounds and pseudorandom generators \\ for depth-two circuits}



\author{ Rocco A.~Servedio\thanks{Supported by NSF grants CCF-1420349 and CCF-1563155. {\tt rocco@cs.columbia.edu}}\\ 
Columbia University  \and Li-Yang Tan\thanks{Supported by NSF grant CCF-1563122.  Part of this research was done during a visit to Columbia University. {\tt liyang@cs.columbia.edu}} \\ Toyota Technological Institute }

\begin{titlepage}

\maketitle

\begin{abstract}

We study correlation bounds and pseudorandom generators for depth-two circuits that consist of a $\SYM$-gate (computing an arbitrary symmetric function) or $\THR$-gate (computing an arbitrary linear threshold function) that is fed by $S$ $\AND$ gates.  Such circuits were considered in early influential work on unconditional derandomization of Luby, Veli{\v{c}}kovi{\'c}, and Wigderson \cite{LVW93}, who gave the first non-trivial PRG with seed length $2^{O(\sqrt{\log(S/\eps)})}$ that $\eps$-fools these circuits.

In this work we obtain the first strict improvement of~\cite{LVW93}'s seed length: we construct a PRG that $\eps$-fools size-$S$ $\{\SYM,\THR\} \circ\mathsf{AND}$ circuits over $\zo^n$ with seed length 
\[ 2^{O(\sqrt{\log S })} +  \polylog(1/\eps), \]
an exponential (and near-optimal) improvement of the $\eps$-dependence of \cite{LVW93}. The above PRG is actually a special case of a more general PRG which we establish for constant-depth circuits containing multiple $\SYM$ or $\THR$ gates, including as a special case $\{\SYM,\THR\} \circ \acz$ circuits. These more general results strengthen previous results of Viola \cite{Vio06} and essentially strengthen more recent results of Lovett and Srinivasan \cite{LS11}. 

Our improved PRGs follow from improved correlation bounds, which are transformed into PRGs via the Nisan--Wigderson ``hardness versus randomness'' paradigm \cite{NW94}.  The key to our improved correlation bounds is the use of a recent powerful \emph{multi-switching} lemma due to H{\aa}stad \cite{Has14}.

\end{abstract}

\thispagestyle{empty}

\end{titlepage}

\section{Introduction} 

Depth-2 circuits which have a  $\SYM$ or $\THR$ gate at the output and $\AND$ gates (of arbitrary fan-in) adjacent to the input variables are central objects of interest in concrete complexity, lying at the boundary of our understanding for many benchmark problems such as lower bounds, learning, and pseudorandomness.  The class of $\SYM \circ \AND$ circuits (also known as $\SYM^+$ circuits) has received much attention even in the restricted case of $\polylog(n)$ bottom fan-in because of the well-known connection with the complexity class $\mathsf{ACC}^0$~\cite{Yao:90,BeigelTarui:94,CP16}, a connection that is at the heart of Williams's breakthrough circuit lower bound~\cite{Williams11ccc} showing $\mathsf{NEXP} \ne \mathsf{ACC}^0$.  Another well-studied subclass, corresponding to the special case where the $\SYM$ gate computes the parity of its inputs, is the class of $S$-sparse polynomials over $\F_2$, which have been intensively studied in a wide range of contexts such as learning \cite{SchapireSellie:96,Bshouty:97ipl,BshoutyMansour:02},
approximation and interpolation
\cite{Karpinski:89,GKS:90,rothben:91}, deterministic
approximate counting \cite{EK89,KL93,LVW93}, and property testing
\cite{DLM+:07,DLMW10:algorithmica}.  Turning to $\THR$ gates (which compute an arbitrary linear threshold function of their inputs) as the top gate, the class of $\THR\circ \AND$ circuits of size-$S$ is easily seen to contain the class of $S$-sparse polynomial threshold functions over $\zo^n$.  This class, and special cases of it such as low-degree polynomial threshold functions, has also been intensively studied in complexity theory, learning theory, and derandomization, see e.g.~\cite{MinskyPapert:68,Goldmann:97,KrausePudlak:98,KKMS:08,Podolskii:09,MZstoc10,DKNfocs10,DOSW:11,Kane12,DS14stoc} and many other works. In this work we focus on \emph{pseudorandom generators} for these $\{\SYM, \THR\} \circ \AND$ circuits. 

In 1993 Luby, Veli{\v{c}}kovi{\'c}, and Wigderson~\cite{LVW93} gave the first pseudorandom generators for these depth-2 circuits.  As we shall discuss in detail below, this result was subsequently extended in various ways by different authors, but prior to the present work no strict improvement of Theorem~\ref{thm:LVW} was known for the class of circuits that it addresses.

\begin{theorem}[Luby--Veli{\v{c}}kovi{\'c}--Wigderson 1993]  \label{thm:LVW}
There is a PRG with seed length $2^{O(\sqrt{\log(S/\eps)})}$ that $\eps$-fools the class of size-$S$ $\SYM \circ \AND$ circuits over $\zo^n$. The same is true for the class of size-$S$ $\THR \circ \AND$ circuits.\footnote{\cite{LVW93} does not actually consider $\THR \circ \AND$ circuits, but as we discuss later their arguments also apply to this class. } \end{theorem}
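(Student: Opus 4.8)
The plan is to reduce fooling such a depth-two circuit to fooling a single integer statistic of the input — the number of satisfied bottom $\AND$ gates — and then to build a generator for that statistic by a recursion on the circuit size. A size-$S$ $\SYM\circ\AND$ circuit $C$ computes $C(x)=g(N(x))$, where $N(x)=\sum_{i=1}^{S}T_i(x)\in\{0,1,\ldots,S\}$ is the number of satisfied $\AND$ gates and $g$ is symmetric; for $\THR\circ\AND$ the same holds with $g$ a fixed univariate threshold. Hence it suffices to build a generator $G$ under which the distribution of $N(G(s))$ is within total variation distance $\eps$ of the distribution of $N$ under the uniform input: this simultaneously $\eps$-fools every circuit of both types, since the only property of $g$ used is that it is bounded. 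Before recursing I would carry out the standard bottom-fan-in reduction: an $\AND$ gate of fan-in exceeding $w:=O(\log(S/\eps))$ is satisfied with probability at most $\eps/S$ under the uniform input, so with probability $1-\eps$ all such gates are falsified and may be deleted without changing $N$; to get the same conclusion under $G$, one additionally asks $G$ to $\eps$-fool the DNF formed by these wide gates, which itself lies in the target class. So it remains to fool the distribution of $N=\sum_{i\le S}T_i$ when every $T_i$ has fan-in at most $w$.

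For this residual problem a natural first move is to pass to small moduli. By the Chinese Remainder Theorem and a Fourier-inversion identity, the distribution of $N\in\{0,\ldots,S\}$ is pinned down to within $\eps$ once $\mathbb{E}[\omega^{N(x)}]$ is fooled to within $\eps/\mathrm{poly}(S)$ for every root of unity $\omega$ of order at most $\mathrm{poly}(S)$; and for $\omega$ of order $q$ this quantity depends only on $N(x)\bmod q$, which — because each $T_i$ has fan-in at most $w$ — is a polynomial of degree at most $w$ over $\mathbb{Z}_q$. One could then invoke an off-the-shelf generator for bounded-degree polynomials over $\mathbb{Z}_q$ (for instance a sum of $O(w)$ small-bias generators), but the error of such generators degrades exponentially in the degree $w=\Theta(\log(S/\eps))$, so this route costs $(S/\eps)^{\Omega(1)}$ seed bits — far worse than the target. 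Escaping this blowup is, I expect, the crux of the whole argument.

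The way out is to exploit that $N$ is not an arbitrary degree-$w$ polynomial but a sum of $S$ bounded-width conjunctions, and to recurse on $S$: split the $S$ gates into blocks and re-express $N$ through the block counts, each of which is itself the count of a far smaller $\SYM\circ\AND$ circuit and therefore ranges over a far smaller interval; a recursively obtained generator for the smaller size handles an individual block, and the blocks are then combined not by retaining their exact values — which would reintroduce the full problem — but only through their residues modulo a carefully chosen product of small primes, so that each of the $O(\log\log S)$ levels of recursion commits only $2^{O(\sqrt{\log(S/\eps)})}$ fresh seed bits rather than $\mathrm{poly}(S)$; making the combining step at each level this cheap is the real obstacle. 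Summing the contributions of the $O(\log\log S)$ levels (and propagating the error honestly through the fan-in reduction and the modular step) yields total seed length $2^{O(\sqrt{\log(S/\eps)})}$, which is Theorem~\ref{thm:LVW}. The argument applies to $\THR\circ\AND$ verbatim, since there too the output is a fixed function of $N(x)$.
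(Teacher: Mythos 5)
There is a genuine gap, and it sits exactly where you flag it yourself: the ``combining step.'' Your sketch correctly reduces the problem to fooling the gate-count $N(x)=\sum_i T_i(x)$ after a bottom fan-in reduction to width $w=O(\log(S/\eps))$, and correctly observes that the naive route (CRT plus small-bias generators for degree-$w$ polynomials over $\Z_q$) costs $(S/\eps)^{\Omega(1)}$ seed bits. But the proposed escape --- split the $S$ conjunctions into blocks, recurse on each block, and merge the block counts through residues modulo small primes using only $2^{O(\sqrt{\log(S/\eps)})}$ fresh seed bits per level --- is asserted, not constructed. No mechanism is given for why the merge can be fooled that cheaply: the residue of a block count modulo $q$ is again a symmetric function of width-$w$ conjunctions, so the merge problem is of the same type as the original one, and nothing in the sketch breaks this circularity. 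Relatedly, nothing in your argument explains where the square root in the exponent comes from; you never exhibit the two quantities whose balancing produces $2^{O(\sqrt{\log(S/\eps)})}$, and without that the claimed seed length is unsupported. (The fan-in reduction step also has a mildly circular phrasing --- asking $G$ to fool the DNF of wide gates, which is in the class being fooled --- though that is repairable, e.g.\ by fooling each wide conjunction individually to error $\eps/S$ with an almost $w$-wise independent distribution.)

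For comparison, this paper does not reprove Theorem~\ref{thm:LVW} from scratch: it is cited from \cite{LVW93}, and the same seed length is recovered (for the larger class $\SYM \circ \acz$) via the Nisan--Wigderson hardness-versus-randomness paradigm applied to correlation bounds, as worked out in Appendix~\ref{sec:NW}. In that route the square root has a transparent origin: one balances the bottom fan-in $k$ of $\SYM \circ \AND_k$ circuits (which controls the $(k+1)$-party communication cost via Fact~\ref{fact:HG}) against the $\exp(-\Omega(m/4^k))$ correlation bound for $\GIP$ from Theorem~\ref{thm:BNS}, and then optimizes the \cite{NW94} design parameters; setting the hard function's input length to $2^{\Theta(\sqrt{\log(S/\eps)})}$ is exactly what makes both the size and the advantage requirements hold simultaneously. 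Your proposal, by contrast, attempts a direct recursive construction in the spirit of deterministic approximate counting; such an argument may well be possible (it is closer in spirit to what \cite{LVW93} originally did), but as written it stops precisely at the step that carries all the quantitative content, so it does not constitute a proof.
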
 

The main contribution of the present work is an exponential improvment of Theorem~\ref{thm:LVW}'s dependence on $\eps$, giving the first strict improvement of the \cite{LVW93} seed length: 

\begin{theorem}[Our main result] \label{thm:main}
There is a PRG with seed length $2^{O(\sqrt{\log S})} + \polylog(1/\eps)$ that $\eps$-fools the class of size-$S$ $\SYM \circ \acz$ circuits. The same is true for $\THR \circ \acz$ circuits.
\end{theorem}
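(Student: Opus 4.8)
The plan is to follow the Nisan–Wigderson ``hardness versus randomness'' paradigm, as the abstract promises: the PRG will be built from a correlation bound against $\{\SYM,\THR\}\circ\acz$ circuits, using a suitable hard function for the reconstruction argument. The overall seed length $2^{O(\sqrt{\log S})}+\polylog(1/\eps)$ has two additive pieces, which suggests a two-stage construction: one generator handling the $\acz$/size-$S$ complexity of the target class (contributing the $2^{O(\sqrt{\log S})}$ term) composed with an independent object that drives the error down to $\eps$ (contributing the $\polylog(1/\eps)$ term). Concretely, I would first establish the \emph{correlation bound}: any size-$S$ $\{\SYM,\THR\}\circ\acz$ circuit $C$ on $n$ bits has correlation at most, say, $1/2 - 1/\mathrm{poly}$ (or inverse-quasipolynomial, sufficient for the NW construction) with a function like the inner-product-mod-$2$ function $\GIP$ on an appropriate number of variables. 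The key technical input here, as flagged in the abstract, is H{\aa}stad's multi-switching lemma \cite{Has14}: after a random restriction, the $\acz$ sub-circuits feeding the top $\SYM$ (or $\THR$) gate \emph{simultaneously} collapse to small-depth decision trees, which lets one replace the entire $\acz$ layer by a single shallow decision tree and thereby reduces a $\{\SYM,\THR\}\circ\acz$ circuit to a $\{\SYM,\THR\}\circ\AND$ circuit of bounded bottom fan-in. Against such bounded-bottom-fan-in $\{\SYM,\THR\}\circ\AND$ circuits one applies the classical discriminator/correlation arguments (e.g. via the spectral norm of the $\SYM$ or $\THR$ function, or a direct sign-rank / Gauss-sum estimate for $\GIP$) to get the correlation bound.

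Next I would feed this correlation bound into the NW generator \cite{NW94} with a combinatorial design of the right parameters, yielding a PRG with seed length $\mathrm{poly}(\log S, \text{design parameters})$ that $\eps_0$-fools size-$S$ $\{\SYM,\THR\}\circ\acz$ circuits for $\eps_0$ as small as the correlation bound allows — but crucially the seed length here is governed by the \emph{output length} of the generator and the strength of the hardness, which after optimizing the multi-switching parameters comes out to $2^{O(\sqrt{\log S})}$ and produces a fixed inverse-polynomial (or slightly subconstant) $\eps_0$. To push the error all the way down to an arbitrary $\eps$ while paying only an additive $\polylog(1/\eps)$, I would compose this with a second-stage hardness-based generator (again NW-style, instantiated with a function hard on the $\polylog(1/\eps)$-bit scale, such as $\GIP$ or a Reed–Muller-based construction) that $\eps$-fools the class; the point is that error reduction in the NW framework costs seed length only proportional to the logarithm of the circuit complexity times a factor depending on $\log(1/\eps)$, and a careful bookkeeping separates the $S$-dependence (already absorbed) from the $\eps$-dependence. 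Composing the two generators (feeding the output of the cheap high-error generator through the seed of the expensive low-error one, or XOR-ing independent copies, whichever the parameters favor) gives total seed length $2^{O(\sqrt{\log S})}+\polylog(1/\eps)$.

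**The main obstacle** I anticipate is getting the multi-switching lemma to interact correctly with the top $\SYM$ or $\THR$ gate. For $\SYM$ the issue is that after restriction one must argue the symmetric function of the decision-tree-computed bits still has low correlation with the hard function; for $\THR$ the analogous statement about linear threshold functions of decision trees requires the discriminator lemma and control of the weights, and one must ensure the restriction does not blow up the relevant complexity measure (e.g. the number of distinct ``levels'' the $\SYM$ gate sees, or the margin of the $\THR$ gate). A second delicate point is the parameter optimization: the multi-switching lemma's savings depend on the trade-off between the restriction probability, the number of $\acz$ sub-circuits ($\le S$), and the target decision-tree depth, and one must verify that the optimal choice genuinely yields $2^{O(\sqrt{\log S})}$ rather than, say, $S^{o(1)}$ with a worse exponent — this is exactly where H{\aa}stad's improved lemma beats the older switching-lemma-based arguments of \cite{LVW93, Vio06, LS11}. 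Finally, handling $\acz$ of unbounded depth (as opposed to just $\AND$ at the bottom) means iterating the switching lemma depth-many times, so one must track how the parameters degrade with the depth $d$ of the $\acz$ part and confirm the $2^{O(\sqrt{\log S})}$ bound survives for constant $d$.
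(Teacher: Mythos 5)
Your skeleton (correlation bound via H{\aa}stad's multi-switching lemma, then Nisan--Wigderson) matches the paper's, but the proposal has a genuine gap exactly at the point that makes the theorem new: the mechanism by which the seed length acquires an \emph{additive} $\polylog(1/\eps)$ term. You propose proving a correlation bound of $1/2 - 1/\poly$ ``or inverse-quasipolynomial,'' getting a moderate-error PRG, and then driving the error down to $\eps$ by composing with a second-stage generator (XOR-ing copies or feeding one seed through another). An inverse-quasipolynomial correlation bound against $n^{\Omega(\log n)}$-size circuits is precisely what Viola already had, and inside the NW framework it yields seed length $2^{O(\sqrt{\log(S/\eps)})}$, not $2^{O(\sqrt{\log S})}+\polylog(1/\eps)$. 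Moreover, there is no generic error-reduction composition for PRGs against a circuit class: XOR-ing independent outputs or seed composition does not reduce fooling error, and an NW-style ``second stage'' on a $\polylog(1/\eps)$-bit scale would itself require a correlation bound that is exponentially small in a polynomial power of the hard function's input length --- which is exactly the missing ingredient, not bookkeeping. In the paper the additive structure arises inside a \emph{single} NW instantiation: the main technical result is the exponentially small bound $\exp(-\Omega(n^{0.499}))$ against size-$n^{\tau\log n}$ $\{\SYM,\THR\}\circ\acz_d$ circuits (Theorem~\ref{thm:cor-bound}), and one then chooses the hard function's input length $r = 2^{O(\sqrt{\log s})} + (\log(s/\eps))^{2.005}$ so that simultaneously $s\cdot 2^{\ell}\le r^{\tau\log r}$ and $\eps/s \ge \exp(-r^{0.499})$, giving seed length $O(r^2/\ell) = 2^{O(\sqrt{\log s})} + (\log(1/\eps))^{4.01}$.

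A second gap is that your correlation-bound sketch addresses only the ``approximator simplifies'' half. The hard function cannot be plain $\GIP$: under the random restrictions needed to run the multi-switching lemma, the fan-in-$(k+1)$ ANDs of $\GIP$ are destroyed with overwhelming probability, so $\GIP$ itself does not survive the restriction. The paper instead uses the Razborov--Wigderson function $\RW_{m,k,r}$ ($\GIP$ composed with parities on blocks of $\approx \sqrt{n/\log n}$ variables), and must verify that after the $\calR_p$ restriction, the walk down the common restriction tree, \emph{and} the extra fan-in-trimming restriction, the restricted target still contains a perfect copy of $\GIP_{m/2,k+1}$; the fair restriction distribution has to be constructed jointly so that this event and the circuit-collapse event hold together (Lemma~\ref{lem:retain-structure}). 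Finally, the post-restriction correlation between $\{\SYM,\THR\}\circ\AND_{k}$ circuits and $\GIP$ is obtained from the BNS multiparty NOF lower bound combined with the H{\aa}stad--Goldmann protocol for $\SYM\circ\AND_k$ and Nisan's randomized protocol for $\THR\circ\AND_k$; the spectral-norm or sign-rank estimates you suggest are not known to give the required exponentially small bound for an arbitrary $\SYM$ or $\THR$ gate over ANDs of fan-in $\Theta(\log n)$.
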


Theorem~\ref{thm:main} improves on a result of Viola \cite{Vio06} which, building on \cite{LVW93}, gave a $2^{O(\sqrt{\log (S/\eps)})}$-seed-length PRG for size-$S$ $\SYM \circ \acz$ circuits.  The \cite{Vio06} PRG combines correlation bounds against $\SYM \circ \acz$ circuits with the Nisan--Wigderson  ``hardness versus randomness'' paradigm, which yields pseudorandom generators from correlation bounds; we similarly prove Theorem \ref{thm:main} by establishing improved correlation bounds and using the Nisan--Wigderson paradigm.  

\pparagraph{Near-optimal hardness-to-randomness conversion.}
A major theme in computational complexity over the the last several decades, dating back to the seminal works of~\cite{Sha81,Yao82,BM84,Nis91,NW94}, has been that \emph{computational hardness} can be converted into \emph{pseudorandomness}. This insight is at the heart of essentially all unconditional pseudorandom generators, and motivates the goal of understanding when and how this conversion can be carried out in a quantitatively optimal manner.  With this perspective in mind, we observe that the dependence on $\eps$ in Theorem~\ref{thm:main} is optimal up to polynomial factors, and as we discuss in Section~\ref{sec:barriers}, achieving better dependence on $S$ even for the special case of $\{\SYM, \THR\} \circ \AND$ circuits would require groundbreaking new lower bounds against low-degree $\F_2$ polynomials and $\mathsf{ACC}^0$ circuits. Hence Theorem~\ref{thm:main} achieves a near-optimal hardness-to-randomness conversion for $\{\SYM, \THR\} \circ \acz$ circuits; the seed length of our  PRG is essentially the best possible given current state-of-the-art correlation bounds and circuit lower bounds. 

The exponential improvement in $1/\eps$ over~\cite{LVW93}'s seed length translates immediately into significantly improved deterministic approximate counting and deterministic search algorithms for $\{ \SYM, \THR\} \circ \acz$ circuits, two basic algorithmic tasks in unconditional derandomization\footnote{Indeed, the work of~\cite{LVW93} was explicitly motivated by deterministic approximate counting of $S$-sparse $\F_2$ polynomials; see the abstract of~\cite{LVW93}.} (see e.g.~\cite{AjtaiWigderson:85} for formal definitions of these tasks and a discussion of how PRGs yield deterministic algorithms for them).

In the rest of this introduction we provide background and context for our results and explain the main ingredients that underlie them.

\subsection{Prior PRGs and correlation bounds for $\{\SYM,\THR\} \circ \acz$} 

As mentioned above, the first results on PRGs for $\SYM \circ \AND$ circuits were given in early influential work of Luby, Veli{\v{c}}kovi{\'c}, and Wigderson~\cite{LVW93}, who constructed a PRG that $\eps$-fools size-$S$ $\SYM \circ \AND$ circuits over $n$ variables with seed length $2^{O(\sqrt{\log (S/\eps)})}$.  The work of \cite{LVW93} employed ideas similar to those in the ``hardness versus randomness'' paradigm of \cite{NW94}, which subsequently came to be well understood as a versatile technique for constructing pseudorandom generators from correlation bounds.

\ignore{
\cite{LVW93} was done before the~\cite{NW94} hardness-versus-randomness paradigm really crystalized, was not explicit then. Their arguments employed the ideas in this framework (already present in~\cite{Nis91}'s work) but more complicated. }

A number of years later, with the~\cite{NW94} framework in hand, Viola~\cite{Vio06} made the useful observation that correlation bounds against the larger class of $\SYM \circ \AND \circ \OR$ circuits translate to PRGs for $\SYM \circ \AND$ circuits in a ``black-box'' manner via~\cite{NW94}, and the same is true when the top gate is $\THR$ instead of $\SYM$.  
(Informally, the~\cite{NW94} translation ``costs'' two layers of depth:  with typical parameter settings, it yields PRGs for a class $\calC$ from correlation bounds against $\calC \circ \ANY_{\log n}$ circuits, where an $\ANY_t$ gate computes an arbitrary $t$-variable Boolean function.  By rewriting the $\ANY_{\log n}$ gate as a CNF, it is possible to collapse the two adjacent layers of $\AND$ gates, yielding Viola's observation.)  Roughly speaking, in this translation from correlation bounds against $\calC \circ \ANY_{\log n}$ to PRGs that $\eps$-fool $\calC$,

\begin{itemize}

\item the larger the $\calC \circ \ANY_{\log n}$ circuits for which the correlation bound holds,  the better (smaller) is the PRG's seed length for fooling size-$S$ functions in $\calC$; and

\item the smaller the advantage over random guessing that the correlation bound establishes, the better (smaller) is the PRG's seed length's dependence on the fooling parameter~$\eps$.

\end{itemize}

Motivated by this template, \cite{Vio06} established $n^{-\Omega(\log n)}$ correlation bounds against $\SYM\circ\acz$ circuits of size $n^{\Omega(\log n)}$.  This translates (see Appendix \ref{sec:NW}) into a PRG with seed length 
$2^{O(\sqrt{\log(S/\eps)})}$ for size-$S$ $\SYM \circ \acz$ circuits over $\zo^n$, matching the seed length achieved by \cite{LVW93} but for a larger class of circuits (and also with a simpler and more modular proof).  While \cite{Vio06} does not explicitly discuss $\THR$ gates, his proof like that of \cite{LVW93} also goes through for $\THR \circ \acz$ as remarked in the earlier footnote.

Subsequent work of~\cite{LS11} established a strong correlation bound of $\exp(-\Omega(n^{1-o(1)}))$ against $\SYM \circ \acz$ and a correlation bound of $\exp(-\Omega(n^{1/2 - o(1)}))$ against $\THR \circ \acz$, but in both cases only for such circuits of size $n^{O(\log\log n)}$. Via the Nisan--Wigderson framework \cite{NW94} this translates into a PRG with seed length $2^{O(\log S/\log\log  S)} + \polylog(1/\eps)$ for size-$S$ $\{\SYM, \THR\} \circ \acz$ circuits over $\zo^n$; while this is a very good dependence on $\eps$, it comes at the cost of a significantly worse dependence on the circuit size $S$.  Thus both the seed length and correlation bounds of \cite{LS11} are incomparable to those of~\cite{LVW93,Vio06}; see Table~1.

(We further note that other incomparable results have been achieved in separate lines of work  on pseudorandom generators for degree-$d$ polynomial threshold functions~\cite{DKNfocs10,MZstoc10,Kane12} and degree-$d$ $\F_2$ polynomials~\cite{Bog05,BV10,Lov09,Vio09}, which correspond to $\THR \circ \AND_d$ and $\PAR \circ \AND_d$ circuits respectively. The seed lengths of these PRGs all have an exponential dependence on $d$, and thus do not yield non-trivial results for general $\poly(n)$-size $\THR \circ \AND$ or $\PAR \circ \AND$ circuits. For constant $d$, the~\cite{Lov09,Vio09} PRGs for $\PAR \circ \AND_d$ circuits achieve optimal seed length, while the~\cite{MZstoc10,Kane12} PRGs for $\THR\circ \AND_d$ have seed length $\poly(1/\eps)\cdot \log n$.)

\begin{table}[t]
\renewcommand{\arraystretch}{1.6}
\centerline{
\begin{tabular}{|c|c|c|c|c|}
\hline
&  Circuit type &  Circuit size $S$ & Correlation bound & PRG seed length\\ \hline
\cite{Vio06} & $\{\SYM,\THR\} \circ \acz_d$ & $n^{ c_d \log n}$ & $n^{-c_d \log n}$  & $2^{O(\sqrt{\log (S/\eps)})}$ \\ \hline 
\cite{LS11} & $\SYM \circ \acz_d$ & $n^{c_d \log \log n}$ & $\exp(-n^{1-o(1)})$   & $2^{O\big({\frac {\log S}{\log \log S}}\big)} + (\log (1/\eps))^{2+o(1)}$ \\ \hline
\cite{LS11} & $\THR \circ \acz_d$ & $n^{c_d \log \log n}$ & $\exp(-n^{1/2-o(1)})$   & $2^{O\big({\frac {\log S}{\log \log S}}\big)} + (\log (1/\eps))^{4 + o(1)}$ \\ \hline \hline
{\bf This work} & $\{\SYM,\THR\} \circ \acz_d$ & $n^{ c \log n}$ & $\exp(-\Omega(n^{0.499}))$     & $2^{O(\sqrt{\log S})} + (\log(1/\eps))^{4.01}$ 
\\ \hline
\end{tabular}
}
\caption{Correlation bounds against $\{\SYM,\THR\} \circ \acz_d$ circuits and the PRGs that follow via the \cite{NW94} paradigm. In all cases the ``hard function'' is the $\RW$ function that is defined in (\ref{eq:RW}) and was first considered by Razborov and Wigderson \cite{RW93}.  For a given row, a circuit size of $s$ and a correlation bound of $\alpha$ means that every size-$s$ circuit of the stated type agrees with the $n$-variable $\RW$ function on at most ${\frac 1 2} + \alpha$ fraction of inputs.   For each row, see Appendix~\ref{sec:NW} for a derivation of how the final column (seed length for a Nisan--Wigderson based PRG) follows from the earlier columns via \cite{NW94}.}
\end{table}

\subsection{Our main technical contribution:  New correlation bounds against $\{\SYM,\THR\} \circ \acz$ circuits}
The technical heart of our main result is a new exponential correlation bound against $\{\SYM,\THR\} \circ \acz$ circuits of size $n^{\Omega(\log n)}$: 
\begin{theorem} \label{thm:cor-bound}
There is an absolute constant $\tau>0$ and an explicit $\poly(n)$-time computable function $H : \zo^n \to \zo$ with the following property:  for any constant $d$, for $n$ sufficiently large it is the case that for any $n$-variable circuit $C$ of size $n^{\tau \log n}$ and depth $d$ with a $\SYM$ or $\THR$ gate at the top, we have 
\[
\Prx_{\bx \leftarrow \zo^n}[H(\bx) = C(\bx)] \leq {\frac 1 2} + \exp(-\Omega(n^{0.499})).
\]  
\end{theorem}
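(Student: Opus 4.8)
The plan is to establish the bound by combining H{\aa}stad's multi-switching lemma~\cite{Has14} with the polynomial-method argument of Razborov and Wigderson~\cite{RW93}, via the random-restriction method. Write the given circuit as $C = g\circ(C_1,\dots,C_S)$, where $g\in\{\SYM,\THR\}$ and each $C_i$ is a depth-$(d-1)$ $\acz$ circuit, with $S \le n^{\tau\log n}$. First I would apply a random restriction $\bm\rho$ that keeps each variable free independently with probability $p$ and otherwise sets it to a uniform bit. Since a uniform $\bx$ can be sampled by drawing $\bm\rho$ and then a uniform assignment $y$ to its free coordinates, we have $\Prx_{\bx}[H(\bx)=C(\bx)] = \mathbb{E}_{\bm\rho}\big[\Prx_{y}[(H\uhr\bm\rho)(y)=(C\uhr\bm\rho)(y)]\big]$, so it suffices to show that for all but an $\exp(-n^{\Omega(1)})$-fraction of $\bm\rho$ the restricted correlation is $\le \tfrac12+\exp(-\Omega(n^{0.499}))$, absorbing the bad $\bm\rho$'s into the additive error.

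For the first step I would use the multi-switching lemma to simplify all $S$ subcircuits \emph{simultaneously}. When $d>2$ one first composes $d-2$ preliminary restriction stages in the usual H{\aa}stad fashion, reducing the $\acz$ part to a depth-$2$ circuit of bottom fan-in $O(\log S)$; a final application of the multi-switching lemma then produces, except with probability $\exp(-n^{\Omega_d(1)})$, a single common decision tree $\mathcal{T}$ of depth $t = O(\log S) = O(\log^2 n)$ such that at every leaf of $\mathcal{T}$ each $C_i\uhr\bm\rho$ has collapsed to a decision tree of depth $\ell = O(\log S)$. The size bound enters exactly here: the multi-switching failure probability is controlled by quantities of the form $S\cdot(O(p\log S))^{\ell}$, so one chooses the absolute constant $\tau$ small enough that, across all $d-1$ layers, this stays $\exp(-n^{\Omega_d(1)})$ while still leaving $N := \Theta(pn) = n^{1-o_d(1)}$ variables free. (Using the multi-switching lemma rather than a union bound of standard switching lemmas is what keeps the leaf-trees shallow enough that the final correlation bound is $\exp(-n^{\Omega(1)})$ rather than merely $\exp(-\polylog(n))$.)

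The second step is to exploit the two structural features that the Razborov--Wigderson function $H=\RW$ of~(\ref{eq:RW}) is designed to have: (i) with probability $1-\exp(-n^{\Omega(1)})$ over $\bm\rho$, the restriction $H\uhr\bm\rho$ contains an embedded copy of $\RW$ on $N = n^{1-o_d(1)}$ free variables; and (ii) $\RW$ on $N$ variables is an exclusive-or (equivalently a $\mathrm{MOD}$) over $\Theta(\sqrt N)$ nearly independent sub-blocks, so fixing a further $t = O(\log^2 n) \ll \sqrt N$ coordinates destroys at most $t$ of the blocks and still leaves an ``$\RW$-like'' function on $\Omega(\sqrt N)$ blocks. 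Hence at every leaf $z$ of $\mathcal{T}$ the twice-restricted function $H_z$ still has the Razborov--Wigderson correlation-resistance of $\RW$ at scale $N$. Averaging the restricted correlation over the leaves of $\mathcal{T}$, it then remains to bound $\Prx_{y}\big[H_z(y)=g(T_1^z(y),\dots,T_S^z(y))\big]$, where each $T_i^z$ is a depth-$\ell$ decision tree with $\ell=O(\log^2 n)$. Here I would invoke the Razborov--Smolensky polynomial method (as in~\cite{RW93}): no symmetric or linear-threshold combination of $n^{\tau\log n}$ decision trees of depth $O(\log^2 n)$ can correlate with the $\RW$-like function $H_z$ better than $\exp(-\Omega(\sqrt N))$. (For the $\THR$ case one can alternatively route through the discriminator lemma, bounding the correlation by $S$ times the best correlation of a single depth-$\ell$ decision tree with $H_z$, absorbing $S = \exp(O(\log^2 n))$ since $\log^2 n \ll \sqrt N$.) Putting the three steps together yields $\Prx_{\bx}[H(\bx)=C(\bx)] \le \tfrac12 + \exp(-n^{\Omega(1)}) + \exp(-\Omega(\sqrt N)) \le \tfrac12 + \exp(-\Omega(n^{0.499}))$ for $n$ large, since $N=n^{1-o_d(1)}$ makes $\sqrt N \ge n^{0.499}$ eventually.

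I expect the main obstacle to be the parameter bookkeeping in the first two steps: one must set the restriction strength $p$, the common-tree depth $t$, and the leaf-tree depth $\ell$ so that the multi-switching lemma crushes all $S=n^{\tau\log n}$ depth-$(d-1)$ subcircuits to the common-tree-plus-shallow-trees form \emph{with failure probability $\exp(-n^{\Omega(1)})$} and \emph{with leaf-trees shallow enough that the RW bound is still $\exp(-n^{\Omega(1)})$}, all while keeping $n^{1-o_d(1)}$ variables free through all $d-1$ layers --- and the whole argument has only the thin $0.5\to0.499$ margin to absorb both the $o_d(1)$ loss in the surviving-variable count and the square-root loss coming from the block structure of $\RW$. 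A secondary technical point is verifying property (i) of the second step, namely that $\RW$ restricted by a \emph{generic} random restriction (not a hand-tailored one) still embeds a hard copy of itself with overwhelming probability; this is the reason $\RW$ must be defined with a good deal of built-in redundancy.
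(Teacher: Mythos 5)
Your overall skeleton (random restrictions, H{\aa}stad's multi-switching lemma to collapse the $\acz$ part, then a correlation bound for the simplified circuit against the restricted $\RW$ function) matches the paper's, but the final step as you state it does not work, and the missing ingredient is precisely what the paper spends its third step on. After your switching stage the circuits below the $\SYM/\THR$ gate are decision trees of depth $\ell = O(\log S) = O(\log^2 n)$, and you then assert that no $\SYM$ or $\THR$ of $n^{\tau\log n}$ such trees correlates with the restricted $\RW$ function better than $\exp(-\Omega(\sqrt N))$, attributing this to a ``Razborov--Smolensky polynomial method'' argument from \cite{RW93}. No such bound is known in that parameter regime, and \cite{RW93} does not use the polynomial method: its argument (and the paper's) is the H{\aa}stad--Goldmann/Nisan multiparty-protocol reduction combined with the Babai--Nisan--Szegedy lower bound for $\GIP$, whose strength is $\exp(-\Omega(m/4^{k+1}))$ for $(k+1)$ players, i.e.\ it becomes vacuous as soon as the bottom fan-in (equivalently leaf-tree depth) $k$ exceeds roughly $\tfrac12\log_4 m$. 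With $\ell=\Theta(\log^2 n)$ your claim would in particular give strong correlation bounds against $\SYM\circ\AND_{\omega(\log n)}$ circuits --- already for a $\Parity$ top gate this means exponentially small correlation bounds for $\F_2$ polynomials of degree $\omega(\log n)$, which the paper itself flags (Sections~1.3 and~1.5, ``Open Question 1'' of \cite{Viola09now}) as a longstanding open problem and the reason $t$ cannot exceed $O(\log n)$ in this framework.

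The paper closes this gap with an extra \emph{trimming} stage that your proposal omits: after the multi-switching step leaves $\{\SYM,\THR\}\circ\AND_{\log s}$ circuits at the leaves, it fixes all variables outside a random $q$-dense set (with $q\approx n^{-0.01}$) so that every bottom $\AND$ retains at most $k=0.0005\log m$ live variables, and it argues by a probabilistic-existence/Chernoff argument that this same fixing can be chosen so that each surviving bottom parity block of $\RW$ keeps at least one live variable, so a perfect copy of $\GIP_{m/2,k+1}$ survives in $\RW\uhr\brho$. Only then do the $(k+1)$-party protocols (Fact~\ref{fact:HG} for $\SYM$, Theorem~\ref{thm:nisan-PTF} for $\THR$) plus Theorem~\ref{thm:BNS} yield correlation $\exp(-\Omega(m/4^k))=\exp(-\Omega(m^{0.999}))=\exp(-\Omega(n^{0.499}))$. (Your alternative ``discriminator lemma'' route for the $\THR$ case also does not patch this: the discriminator lemma concerns exact computation by a threshold of subfunctions and does not convert a correlation bound for a single deep decision tree into one for the whole $\THR$ circuit with only a factor-$S$ loss, and it says nothing about the $\SYM$ case.) So the proposal is missing both the fan-in-reduction step and the correct tool for the post-restriction correlation bound; as written, the final step would require solving a known open problem.
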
 

Theorem~\ref{thm:cor-bound} strictly improves on the correlation bound provided by Theorem~4 of \cite{Vio06}, as it establishes correlation bounds for the same class of $n^{\Omega(\log n)}$-size circuits, but gives a much smaller $ \exp(- \Omega(n^{0.499}))$ upper bound on the correlation rather than $n^{-\Omega(\log n)}.$  As described in Appendix~\ref{sec:NW}, our  PRG result for $\{\SYM,\THR\} \circ \acz$ (Theorem~\ref{thm:main}) follows directly from Theorem~\ref{thm:cor-bound} via the Nisan--Wigderson framework.  In Section~\ref{sec:struc} we give an overview of the ideas that underlie our new correlation bound.

\pparagraph{Correlation bounds and PRGs for constant-depth circuits with multiple $\SYM$ or $\THR$ gates.} The main correlation bound and PRG of \cite{Vio06} are actually for $n^{c_d \log n}$-size depth-$d$ circuits with $c_d (\log n)^2$ many $\SYM$ gates, and similarly the main result of \cite{LS11} is a correlation bound for constant-depth circuits with $n^{1-o(1)}$ many $\SYM$ gates or $n^{1/2 - o(1)}$ many $\THR$ gates.  Our results similarly extend to constant-depth circuits with multiple $\SYM$ or $\THR$ gates. Our most general correlation bound is
the following: 

\begin{theorem} \label{thm:cor-bound-many-gates}
There is an absolute constant $\tau>0$ and an explicit $\poly(n)$-time computable function $H : \zo^n \to \zo$ with the following property:  for any constant $d$, for $n$ sufficiently large, any  $n$-variable circuit $C$ of size $n^{\tau \log n}$ and depth $d$ containing $n^{0.249}$ many $\SYM$ or $\THR$ gates (the circuit is allowed to contain both types of gates) satisfies
\[
\Pr_{\bx \leftarrow \zo^n}[H(\bx) = C(\bx)] \leq {\frac 1 2} + \exp(-\Omega(n^{0.249})).
\]  

\end{theorem}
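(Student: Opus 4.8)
\pparagraph{Proof idea.}
The plan is to take $H = \RW$ (padded so that $H\colon \zo^n \to \zo$) and run the standard restriction-based template for correlation bounds, with H{\aa}stad's multi-switching lemma \cite{Has14} doing the work of collapsing the $\acz$ part of the circuit. Fix a depth-$d$ circuit $C$ of size $n^{\tau \log n}$ with at most $s = n^{0.249}$ gates of type $\SYM$ or $\THR$, and suppose toward a contradiction that $\Pr_{\bx \leftarrow \zo^n}[C(\bx) = \RW(\bx)] \ge \tfrac12 + \exp(-o(n^{0.249}))$. First I would hit $C$ with a $p$-random restriction $\bm{\rho}$, where $p = n^{-\gradual}$ for a small $\gradual = \gradual(d) = o(1)$ (say $\gradual = 1/\log\log n$), applied in the usual $d-1$ rounds. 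By the multi-switching lemma, with probability $1-o(1)$ over $\bm{\rho}$ every maximal $\acz$ ``band'' of $C$ (the sub-circuitry below the bottom-most $\SYM/\THR$ gates and between consecutive $\SYM/\THR$ gates) collapses, so that $C\uhr{\bm{\rho}}$ is computed by a decision tree of depth $t = O_d(\log n)$ whose leaves feed depth-$t$ decision trees into each $\SYM/\THR$ gate and into the output. The multi-switching lemma is what makes this possible for $d \ge 3$: we must simultaneously control as many as $n^{\tau\log n}$ $\acz$ gates across the $d$ bands, and the ordinary switching lemma would blow the decision-tree depth up to $(\log n)^{\Omega(d)}$ --- far too large for the hardness step below --- whereas the multi-switching lemma keeps $t$ logarithmic while only depleting the surviving density to $n^{-O(\gradual d)}$, so that $m := pn = n^{1-o(1)}$ coordinates remain alive. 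I would also arrange, exploiting the generalized-inner-product block structure of $\RW$ (built from $\GIP$ on $\Theta(\log n)$-coordinate blocks, which is essentially closed under balanced restrictions; cf.\ \cite{RW93,Vio06}), that with probability $1-o(1)$ the restricted target $\RW\uhr{\bm{\rho}}$ is again a copy of $\RW$ (possibly complemented) on $m$ of the surviving coordinates.

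Next I would observe that at a typical leaf of its decision tree, $C\uhr{\bm{\rho}}$ is a \emph{simple object}: each $\SYM/\THR$ gate is fed by depth-$t$ decision trees, equivalently by $\AND$s of fan-in $\le t$, and collapsing adjacent $\AND$ layers in the manner of Viola's observation, the whole thing becomes a bounded-depth circuit with at most $s$ gates of type $\SYM/\THR$ over $\AND$s of fan-in $t = O_d(\log n)$ and of size $N = n^{O_\tau(\log n)}$. The technical heart is then a Razborov--Wigderson-style bound, adapting the Fourier/cylinder-intersection analysis of \cite{RW93} (and, for the $\THR$ case, the threshold-approximation step of \cite{LS11}): any such simple object $D$ on $m$ variables satisfies $\Pr_{\bx \leftarrow \zo^m}[D(\bx) = \RW(\bx)] \le \tfrac12 + \exp\bigl(-\Omega(\sqrt m\,/\,(s \cdot n^{o(1)}))\bigr)$, with $\sqrt m$ improvable to $m$ when only $\SYM$ gates occur. (Here a degree-$t$ monomial is a cylinder missing one of $k = \Theta(\log n) > t$ blocks; Fourier-expanding each $\SYM$ gate over $\mathbb Z$ turns $D$ into a small combination of $k$-cylinder intersections, against which $\GIP$ on $k$ parties has correlation $\exp(-\Omega(m/4^k)) = \exp(-\Omega(m/n^{o(1)}))$, and the losses from the $s$ Fourier expansions --- factors $n^{O(\log n)}$ and $\exp(O(s))$ --- are dwarfed by $\exp(m^{\Omega(1)})$, so they wash out.)

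Finally I would average over the $1-o(1)$ fraction of good restrictions and over the leaves of the decision tree of $C\uhr{\bm{\rho}}$: the assumed advantage $\exp(-o(n^{0.249}))$ of $C$ against $\RW$, diminished only by the $o(1)$-type and $\tfrac12$-type losses above, would have to survive down to some single leaf, producing a simple object of advantage $\gg \exp(-\Omega(n^{0.249}))$ against $\RW$ on $m = n^{1-o(1)}$ variables --- contradicting the Razborov--Wigderson bound, since $\exp(-\Omega(\sqrt m/(s \cdot n^{o(1)}))) = \exp(-\Omega(n^{1/2 - o(1)}/n^{0.249})) = \exp(-\Omega(n^{1/4 - o(1)})) \le \exp(-\Omega(n^{0.249}))$. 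This proves the theorem.

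The main obstacle is the parameter juggling in the first step: I need to choose the restriction density $p = n^{-\gradual}$, the decision-tree depth $t$, the number of rounds, and the size $m$ of the surviving $\RW$-copy all at once so that (i) the multi-switching lemma's failure probability, union-bounded over all $\le n^{\tau\log n}$ gates and all $d$ bands, is $o(1)$; (ii) $t = O_d(\log n)$ is small enough --- and $k = \Theta(\log n)$ correspondingly --- that the Razborov--Wigderson bound still reads $\exp(-n^{\Omega(1)})$ after paying the $4^k = n^{o(1)}$ factor; and (iii) $m = n^{1-o(1)}$ coordinates survive carrying an intact copy of $\RW$. Balancing (i) against (ii) across the $d$ bands --- keeping the decision-tree depth logarithmic rather than quasipolynomial in $n$ --- is precisely where the multi-switching lemma is indispensable and is the delicate part; verifying the restriction-closure of $\RW$ in (iii), and correctly tracking the $\THR$-versus-$\SYM$ gap and the ``$s$ gates'' loss in the Razborov--Wigderson bound, are the remaining points requiring care.
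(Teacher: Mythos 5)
There is a genuine gap, and it is at the structural heart of the problem. Your step ``every maximal $\acz$ band of $C$ collapses under $\brho$'' is not something the multi-switching lemma can deliver. Switching lemmas (including Theorem~\ref{thm:H14-SL}) apply to depth-2 circuits over the input variables being randomly restricted; the $\acz$ circuitry sitting \emph{above or between} $\SYM/\THR$ gates has among its inputs the outputs of other $\SYM/\THR$ gates, and those outputs are never restricted and do not simplify. In the worst case the circuitry above the gates genuinely depends on all $u=n^{0.249}$ gate values (an $\acz$ circuit over $u$ bits need not be a depth-$O(\log n)$ decision tree), so the ``simple object'' you arrive at --- a shallow decision tree whose leaves are $\{\SYM,\THR\}\circ\AND_{O(\log n)}$ circuits combined only through the tree --- cannot be obtained by restrictions alone. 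This is exactly the difficulty the paper's proof is organized around: it first proves the correlation bound for the normal form $\ANY_u\circ\{\SYM,\THR\}\circ\acz_d$ (Theorem~\ref{thm:cor-bound-analogue}), where the top function depends \emph{only} on the $u$ subcircuit outputs --- which is what lets the H{\aa}stad--Goldmann/Nisan protocols plus \cite{BNS92} absorb the arbitrary combining function at a cost of only $u\cdot\polylog$ communication --- and then reduces circuits with $u$ gates at arbitrary (possibly nested) positions to this normal form via the conditioning-on-gate-values argument of \cite{LS11} (guessing the $u$ gate outputs, hardwiring them into each gate's $\acz$ fan-in, and paying roughly a $2^u$ factor, which the parameters are chosen to absorb). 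Your proposal contains no analogue of this reduction; it asserts the conclusion of that step as if it followed from switching. Relatedly, your final ``Razborov--Wigderson-style'' bound for the simple object also implicitly needs the combining structure to depend on $x$ only through the $\SYM/\THR\circ\AND$ subcircuits (or through variables a protocol/cylinder-decomposition can see for free); without the decomposition step, the class you would actually have to handle is again constant-depth circuitry mixing inputs and gate outputs, for which neither \cite{HG91}/\cite{Nis93}+\cite{BNS92} nor the Fourier-into-cylinder-intersections expansion directly applies.

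A secondary, more repairable problem is the parameter accounting in the hardness step. Theorem~\ref{thm:H14-SL} leaves decision trees of depth $\log s=\Theta(\log^2 n)$ at the leaves, so the bottom $\AND$ fan-in you get ``for free'' is $\Theta(\log^2 n)$, not $t=O_d(\log n)$; the paper needs an explicit trimming step (fixing all but a $q\approx n^{-0.01}$ fraction of the surviving variables) to cut the fan-in down to $k=0.0005\log m$ so that it lies below the $\GIP$ arity while $4^k$ remains a tiny polynomial. Your stated requirements --- fan-in $t=\Theta(\log n)$, $\GIP$ arity $k+1=\Theta(\log n)>t$, and $4^k=n^{o(1)}$ --- are mutually inconsistent ($4^{\Theta(\log n)}=n^{\Theta(1)}$), and the $\exp(O(s))=\exp(O(n^{0.249}))$ loss you wave away is of the same order as the bound being proved, so it survives only under the same careful constant-balancing that the paper performs; these points need the trimming step and explicit bookkeeping, but they are fixable, unlike the missing reduction above.
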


Via the Nisan--Wigderson framework, Theorem~\ref{thm:cor-bound-many-gates} immediately yields the following, which is our most general PRG result: 

\begin{corollary} \label{cor:prg-many-gates} For some sufficiently small absolute constant $c>0$, there is a PRG with seed length $2^{O(\sqrt{\log S})} + \polylog(1/\eps)$ that $\eps$-fools the class of size-$S$ constant-depth circuits that contain $2^{\sqrt{c \log S}}$ many $\SYM$ or $\THR$ gates.
\end{corollary}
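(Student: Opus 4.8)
The plan is to derive Corollary~\ref{cor:prg-many-gates} from the correlation bound of Theorem~\ref{thm:cor-bound-many-gates} by a direct instantiation of the Nisan--Wigderson ``hardness versus randomness'' framework~\cite{NW94}, exactly as sketched in the table caption and carried out in detail in Appendix~\ref{sec:NW}. The first step is to recall the standard NW translation: given an explicit function $H$ on $m$ bits that has correlation at most $\frac12 + \alpha$ with every circuit of type $\calD$ and size $s$, one gets a PRG that $\eps$-fools the class $\calC$ whenever $\calC \circ \ANY_{\ell}$ circuits of the appropriate size embed into $\calD$-circuits of size $s$, where $\ell = \Theta(\log(s/\eps))$ is the ``blow-up'' parameter governing the NW design, and the resulting seed length is $O(\ell^2 / \log(1/(\text{something}))) = O(m)$ with $m$ chosen so that $\alpha(m) \le \eps / (\text{size of }\calC)$. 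Concretely, one sets the hard-function input length $m$ just large enough that $\exp(-\Omega(m^{0.249})) \le \eps \cdot 2^{-O(\log S)}$, i.e.\ $m = \polylog(1/\eps) + O((\log S)^{O(1)})$, and then sizes the NW design so that the reconstruction circuit (which is a size-$S$ circuit in $\calC$ composed with $\ANY_{O(\log(S/\eps))}$ ``selector'' gates) collapses, after rewriting each $\ANY$ gate as a CNF/DNF, into a depth-$d$ circuit of size $n^{\tau \log n}$ with at most $n^{0.249}$ symmetric/threshold gates on $n = m$ variables.

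**Second**, I would verify the parameter arithmetic that makes the two exponents $2^{O(\sqrt{\log S})}$ and $\polylog(1/\eps)$ come out as claimed. The seed length of an NW-PRG built from an $m$-bit hard function is $O(m^2/\ell)$ times a design overhead, but the dominant term is governed by how large $m$ must be: the circuit-size constraint forces $m^{\tau \log m} \ge S \cdot \poly(1/\eps)$, which after taking logs gives $\tau (\log m)^2 \gtrsim \log S$, hence $\log m = O(\sqrt{\log S})$ and $m = 2^{O(\sqrt{\log S})}$; separately the correlation constraint forces $m^{0.249} \gtrsim \log(S/\eps)$, contributing the additive $\polylog(1/\eps)$ term (with the precise polylog exponent, near $4$, tracked as in the table). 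The count of $\SYM/\THR$ gates translates in the same way: a size-$S$ circuit with $G$ such gates, composed with an $\ANY_\ell$ layer rewritten as ANDs, still has exactly $G$ symmetric/threshold gates, and we need $G \le m^{0.249}$; substituting $m = 2^{\Theta(\sqrt{\log S})}$ gives the threshold $G \le 2^{\sqrt{c\log S}}$ for a suitably small constant $c$, which is exactly the bound in the statement.

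**Third**, I would record the two bookkeeping points that make the ``two layers of depth for free'' argument go through: (i) each $\ANY_{\ell}$ selector gate on $\ell = O(\log(S/\eps))$ bits is computed by a CNF (equivalently DNF) of size $2^{\ell} = \poly(S/\eps)$, and merging its bottom $\AND$/$\OR$ layer with the existing $\acz$ structure of the $\calC$-circuit only increases the size polynomially and the depth by one, so the composed object is still depth $d' = d+O(1)$ — and since Theorem~\ref{thm:cor-bound-many-gates} holds for every constant depth, this is harmless; (ii) the total size after composition is $S^{O(1)} \cdot \poly(1/\eps)$, which we need to be at most $m^{\tau \log m}$, and this is precisely what the choice $\log m = \Theta(\sqrt{\log S}) + \polylog(1/\eps)$-scale padding guarantees. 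Putting these together with the explicit $\poly(n)$-time computability of $H$ (so the PRG is itself explicit) yields the corollary.

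**The main obstacle** I anticipate is not conceptual but lies in the parameter tracking: making sure that the same choice of $m$ simultaneously satisfies the circuit-size inequality $m^{\tau \log m} \ge \poly(S,1/\eps)$, the correlation inequality $\exp(-\Omega(m^{0.249})) \le \eps/\poly(S)$, and the gate-count inequality $2^{\sqrt{c\log S}} \le m^{0.249}$, with all the hidden constants (the $\tau$ from Theorem~\ref{thm:cor-bound-many-gates}, the NW design constants, the CNF blow-up) consistent, and that the final seed length really is $O(m^2/\ell) = 2^{O(\sqrt{\log S})} + \polylog(1/\eps)$ rather than a product of the two terms. This is exactly the kind of routine-but-delicate calculation that Appendix~\ref{sec:NW} is set up to handle, so I would simply invoke that appendix's general lemma with the parameters $(\text{hard-function length},\ \text{correlation},\ \text{circuit size}) = (m,\ \exp(-\Omega(m^{0.249})),\ m^{\tau\log m})$ plugged in, and let the gate count ride along unchanged through the reduction.
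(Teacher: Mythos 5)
Your overall route is the same as the paper's: instantiate the Nisan--Wigderson generator (Theorem~\ref{thm:NW} with the designs of Lemma~\ref{lem:design}) using the hard function of Theorem~\ref{thm:cor-bound-many-gates}, note that rewriting each $\ANY$ gate as a CNF adds only constant depth and polynomial size, and observe that this composition leaves the number of $\SYM/\THR$ gates unchanged, so the gate bound $2^{\sqrt{c\log S}}\le r^{0.249}$ rides along once the hard-function input length $r$ is $2^{\Omega(\sqrt{\log S})}$. All of that matches Appendix~\ref{sec:NW}.

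There is, however, a genuine flaw in your parameterization, and it sits exactly where the content of the corollary lies. You take the design-intersection/$\ANY$-arity parameter to be $\ell=\Theta(\log(S/\eps))$, and correspondingly require the hard function to be hard against composed circuits of size $S^{O(1)}\cdot\poly(1/\eps)$, i.e.\ you impose $r^{\tau\log r}\ge S\cdot\poly(1/\eps)$ (you write $m$ for what the paper calls $r$). Taking logs, this forces $\tau(\log r)^2\gtrsim \log S+\log(1/\eps)$, not $\gtrsim\log S$ as you assert --- you silently drop the $\log(1/\eps)$ term. For small $\eps$ (say $\eps=2^{-(\log S)^{10}}$) this makes $r\ge 2^{\Omega(\sqrt{\log(1/\eps)})}$ and hence seed length $O(r^2/\ell)=2^{\Omega(\sqrt{\log(S/\eps)})}$, which is the old \cite{LVW93,Vio06} bound, not the claimed $2^{O(\sqrt{\log S})}+\polylog(1/\eps)$. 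The paper's instantiation decouples the two constraints: the design intersection is $\ell=\log S$, independent of $\eps$, so the size against which hardness is needed is only $S\cdot 2^{\ell}=S^2$ and $\eps$ never enters the size constraint; $\eps$ appears solely in the advantage requirement $\eps/S\ge\exp(-r^{\Omega(1)})$, which the exponentially small correlation bound of Theorem~\ref{thm:cor-bound-many-gates} meets with only an additive $(\log(S/\eps))^{O(1)}$ contribution to $r$, giving $r=2^{O(\sqrt{\log S})}+(\log(S/\eps))^{O(1)}$ and seed $O(r^2/\ell)=2^{O(\sqrt{\log S})}+\polylog(1/\eps)$. (Relatedly, your phrase ``$\log m=\Theta(\sqrt{\log S})+\polylog(1/\eps)$-scale padding'' cannot be right as written, since $\log r$ of order $\polylog(1/\eps)$ would make the seed $2^{\polylog(1/\eps)}$; and there is no general-purpose lemma in Appendix~\ref{sec:NW} to invoke --- the per-case parameter instantiation is the proof, and it is exactly the step your version gets wrong.) Fixing $\ell=\log S$ repairs the argument, and the rest of your reasoning, including the gate-count bookkeeping, then goes through as in the paper.
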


This strictly improves the main \cite{Vio06} PRG (Theorem~1 of \cite{Vio06}), which achieves seed length $2^{O(\sqrt{\log (S/\eps)})}$ for size-$S$ constant-depth circuits that contain $O((\log S)^2)$ many $\SYM$ or $\THR$ gates.  We prove Theorem~\ref{thm:cor-bound-many-gates} in Appendix \ref{sec:many-gates}.

\subsection{Barriers to further progress: correlation bounds for $\F_2$ polynomials and $\mathsf{ACC}^0$ lower bounds}
\label{sec:barriers} 
In this section we outline why achieving better dependence on $S$ will require groundbreaking new correlation bounds or circuit lower bounds. 

The seminal work of Babai, Nisan, and Szegedy~\cite{BNS92} gave an explicit function and established that it has exponentially small correlation $\exp(-\Omega(n/4^d \, d))$ with any $n$-variable $\F_2$ polynomials of degree $d$
(see Theorem~3 of \cite{Viola09now}). This result (and the multiparty communication-based techniques underlying it) have had far-reaching consequences in complexity theory; 25 years later, improving on this correlation bound remains a prominent open problem. In particular, even achieving correlation bounds of the form $\frac1{2} + n^{-1}$ against polynomials of degree $\log n$ with respect to any explicit distribution $\calD$ would constitute a significant breakthrough (see e.g.~``Open Question 1'' in Viola's excellent survey~\cite{Viola09now}).  Since every degree-$d$ polynomial is $s$-sparse for $s = {n \choose d}$, this is clearly a special case of obtaining $\frac1{2} + n^{-1}$ correlation bounds against polynomials of sparsity $s = {n \choose \log n}$.   Via a standard connection between PRGs and correlation bounds (see e.g.~Proposition 3.1 of~\cite{Vio09}),  an improvement in the dependence on $S$ in Theorem~\ref{thm:main} to $2^{o(\sqrt{\log S})} + \polylog(1/\eps)$, even for the special case of $S$-sparse $\F_2$ polynomials, would immediately yield exponentially-small correlation bounds against $s$-sparse $\F_2$ polynomials for $s = n^{\omega(\log n)} \gg {n\choose \log n}$ with respect to an explicit distribution. (We remark that the same correlation bounds are also open for the class of degree $\log n$ polynomial threshold functions, and hence the same barrier applies to improving the $S$-dependence of PRGs for size-$S$ $\THR \circ \AND$ circuits.)

Further improvements of Theorem~\ref{thm:main} would have even more dramatic consequences.  Classical ``depth-compression'' results of Yao~\cite{Yao:90} and Beigel and Tauri~\cite{BeigelTarui:94} (see also~\cite{CP16}) show that every size-$s$ depth-$d$ $\mathsf{ACC}^0$ circuit can be computed by a size-$S$ $\SYM \circ \AND$ circuit where $S = \exp((\log n)^{O_d(1)})$.  Improving the seed length of Theorem~\ref{thm:main} for the class of size-$S$ $\SYM \circ \AND$ circuits to $2^{(\log S)^{o(1)}}$ (even for constant $\eps$) would therefore separate $\mathsf{NP}$ from $\mathsf{ACC}^0$, a significant strengthening of Williams's celebrated separation of $\mathsf{NEXP}$ from $\mathsf{ACC}^0$~\cite{Williams11ccc}.

\subsection{The high-level structure of our correlation bound argument}
\label{sec:struc}

We recall the ``bottom-up'' approach to proving correlation bounds via the method of random restrictions.  This approach dates back to the classic correlation bounds between {\sc Parity} and $\acz$  of \cite{ajtai1983} and \cite{Hastad86}; in particular, the relevant prior works of~\cite{Vio06,LS11} also operate within this framework. 

Fix a hard function $H$, and let $F$ be any function belonging to a given class $\mathscr{F}$ of Boolean functions (in our case $\mathscr{F}$ is the class of $\{ \SYM,\THR\} \circ \acz$ circuits of size $n^{\Omega(\log n)}$). Our goal is to show that $F$ has small \emph{correlation} with $H$, i.e. that $\Pr_{\bx \leftarrow \zo^n}[F(\bx) = H(\bx)] \leq {\frac 1 2} + \alpha$ for some small $\alpha$ where $\bx$ is uniform over $\zo^n.$  This can be achieved by designing a  \emph{fair} distribution $\calR$\ignore{(which may depend on $F$)} over random restrictions that satisfies the following two competing requirements.  (A distribution $\calR$ over restrictions is said to be fair if first drawing a restriction $\brho \leftarrow \calR$ and then filling in all $\ast$'s to independent uniform values from $\zo$ results in a uniform random string from $\zo^n.$)  

\begin{enumerate}

\item [(1)] {\bf Approximator ($F$) simplifies:} With high probability $1-\gamma_{\SL}$ over $\brho\leftarrow\calR$, $F$ ``collapses" when it is hit by $\brho$, meaning that $F \uhr\brho \in \mathscr{F}_{\text{simple}}$ for some class $\mathscr{F}_{\text{simple}} \sse \mathscr{F}$. Looking ahead, in our case 
\[ \mathscr{F}_{\text{simple}} = \Big\{\text{$\{\SYM, \THR\} \circ \AND_{k := 0.0005\log m}$ circuits}\Big\}  \]
where $m \approx \sqrt{n}$ and $\AND_k$ denotes the class of fan-in $k$ $\AND$ gates. A collapse to this $\mathscr{F}_{\text{simple}}$ is useful for us because there are efficient multiparty communication protocols for functions computable by $\mathscr{F}_{\text{simple}}$ (due to \cite{HG91} when the top gate is $\SYM$ and to \cite{Nis93} when it is $\THR$).

\item [(2)] {\bf Target ($H$) retains structure:}  With high probability $1-\gamma_{\target}$ over $\brho\leftarrow\calR$, the restricted hard function $H \uhr \brho$ ``retains structure'', in the sense that it has small correlation with every function in $\mathscr{F}_{\text{simple}}$. In our case our notion of structure will be that $H \uhr\brho$ ``contains a perfect copy of'' the generalized inner product function: 
\[ \GIP_{m/2,k+1}(x) := \bigoplus_{i=1}^{m/2} \bigwedge_{j=1}^{k+1} x_{i,j},  \]
where $m$ and $k$ are the same $m$ and $k$ as above.
\end{enumerate} 

Suppose we have such a fair distribution $\calR$ over random restrictions satisfying (1) and (2) above.  The remaining step in the argument is to show the following:  (3) for any $\rho$ such that both of the above happen (approximator simplifies and the hard function retains structure), $F \uhr \rho$ and $H \uhr \rho$ have small correlation, i.e. they agree on at most $\frac1{2} + \gamma_{\corr}$ fraction of all inputs.  As in the previous works of \cite{Vio06,LS11}, the fact that $\mathscr{F}_{\text{simple}}$ and $\mathrm{GIP}_{m/2,k+1}$ have small correlation follows from  a celebrated theorem of Babai, Nisan, and Szegedy~\cite{BNS92} lower bounding the multiparty communication complexity of $\GIP_{m,k+1}$. 

It is straightforward to see that items (1)--(3) above establish a correlation bound of 
\[ \Prx_{\bx \leftarrow \zo^n}[\,F(\bx) = H(\bx)\,] \le \frac1{2} + \gamma_{\SL} + \gamma_{\target} + \gamma_{\corr}.\] 
The goal is therefore to carry out the above with $\max\{ \gamma_{\SL}, \gamma_{\target}, \gamma_{\corr} \}$ as small as possible for the class $\mathscr{F}$ of $\{\SYM,\THR\} \circ \acz_d$ circuits of as large a size as possible.  As indicated earlier, for any constant $d$ and circuits of size up to $s=n^{\tau \log n}$ we achieve $\max\{ \gamma_{\SL}, \gamma_{\target}, \gamma_{\corr} \}
= \exp(-\Omega(n^{0.499})).$  

After giving some technical preliminaries in Section~\ref{sec:prelim}, we upper bound
$\gamma_{\SL}$,
$\gamma_{\target}$, and
$\gamma_{\corr}$ 
in Sections~\ref{sec:bash},~\ref{sec:target}, and~\ref{sec:endgame} respectively.

\subsection{How this work differs from~\cite{Vio06,LS11}: improved depth reduction} 

A simple observation (due to~\cite{HM04}) that is used in both~\cite{Vio06,LS11} and in our work as well is the fact that a symmetric function of depth-$k$ decision trees can be simulated by a (different) symmetric function of width-$k$ $\AND$'s, and likewise for a threshold function of depth-$k$ decision trees. (See Fact~\ref{fact:fold} for a precise statement.)  Consequently we can think of $\mathscr{F}_{\text{simple}}$ as $\{ \SYM, \THR \} \circ \DT_k$ rather than $\{ \SYM, \THR\} \circ \AND_k$ (where $\DT_k$ denotes the class of decision trees of depth $k$), and for depth reduction it suffices to prove that a family of $s$ many $\acz$ circuits collapses to a family of small-depth decision trees with high probability under a random restriction. This is exactly what is shown by \emph{switching lemmas}. 

The loss in the previous works of~\cite{Vio06,LS11} is due to the switching lemmas they use and the limitations of these switching lemmas.  \cite{Vio06} uses the standard~\cite{Hastad86} switching lemma: 
 
 \begin{theorem}[H{\aa}stad's switching lemma] 
 \label{thm:HSL}
 Let $F$ be computed by a depth-$2$ circuit with bottom fan-in $w$. Then 
 \[ \Prx_{\brho\leftarrow\calR_p}[\,\text{$F\uhr\brho$ is not a depth-$t$ decision tree}\,\big] \le (5pw)^t.   \] 
 \end{theorem}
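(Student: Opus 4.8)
The plan is to prove this via Razborov's encoding argument (as presented, e.g., in Beame's primer). First I would reduce to the case that $F$ is a DNF of width $w$: if $F$ is a depth-$2$ circuit with bottom fan-in $w$ then either it or its negation is such a DNF, and $F\uhr\brho$ is a depth-$t$ decision tree iff $\neg F\uhr\brho$ is. Write $F = T_1 \vee T_2 \vee \cdots$ with each $T_i$ an $\AND$ of at most $w$ literals, and define the \emph{canonical decision tree} $\mathrm{CDT}(F\uhr\rho)$: repeatedly take the first term not yet falsified by the restriction built so far, query all of its still-live variables in index order, and recurse (a branch that satisfies the current term becomes a $1$-leaf). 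The standard observation is that if $F\uhr\rho$ is not computed by any depth-$t$ decision tree, then $\mathrm{CDT}(F\uhr\rho)$ in particular has a root-to-leaf path of length at least $t$. Hence it suffices to bound $\Prx_{\brho \leftarrow \calR_p}[\mathrm{CDT}(F\uhr\brho)\text{ has a path of length} \ge t]$ by $(5pw)^t$.

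For this I would exhibit an injective map from the set $\mathcal{B}$ of ``bad'' restrictions — those $\rho$ whose star-set supports a length-$\ge t$ path $\pi$ in $\mathrm{CDT}(F\uhr\rho)$ — into pairs $(\rho', \mathrm{adv})$, where $\rho'$ fixes exactly $t$ more variables than $\rho$ and $\mathrm{adv}$ ranges over an advice set of size at most $(O(w))^{t}$. Concretely, following the first $t$ queries along $\pi$, the queried variables split into consecutive blocks $\beta_1, \beta_2, \ldots$ coming from terms $T_{j_1}, T_{j_2}, \ldots$, where $\beta_i$ consists of the variables of $T_{j_i}$ still live when the tree reaches it and $T_{j_i}$ is the first term not falsified by $\rho$ together with the $\pi$-labels on $\beta_1,\ldots,\beta_{i-1}$. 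I let $\rho'$ agree with $\rho$ off the blocks and, on each $\beta_i$, set those variables to the \emph{unique} values that would satisfy $T_{j_i}$. The advice records, per block: which positions of the (at most $w$-literal) term $T_{j_i}$ constitute $\beta_i$, plus a correction string marking where the true $\pi$-labels differ from these satisfying values; this is at most $O(w)$ possibilities per revealed variable, i.e. $(O(w))^{t}$ in total.

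To conclude, I would observe that since $\rho'$ turns $t$ stars of $\rho$ into fixed values, the definition of $\calR_p$ gives $\Prx_{\calR_p}[\rho] = \big(\tfrac{2p}{1-p}\big)^{t}\,\Prx_{\calR_p}[\rho']$; summing over $\mathcal B$ and using injectivity of $\rho \mapsto (\rho',\mathrm{adv})$,
\[ \Prx_{\calR_p}[\mathcal B] \;\le\; \Big(\tfrac{2p}{1-p}\Big)^{t}\cdot (O(w))^{t}\cdot \sum_{\rho'} \Prx_{\calR_p}[\rho'] \;\le\; (5pw)^{t}, \]
where the last sum is at most $1$ and the constants are absorbed for $p$ at most a small constant (the fully optimized encoding removes even that caveat).

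\medskip
\noindent\textbf{Main obstacle.} The delicate point is making the encoding genuinely \emph{injective} — exhibiting the decoding procedure and checking it recovers $\rho$ and $\pi$ unambiguously from $(\rho',\mathrm{adv})$ alone — while keeping the advice alphabet down to $O(w)$ per fixed variable (a naive scheme that records the identity of each queried variable among all $n$ variables, or that records block boundaries sloppily, blows the bound up to something useless). The decoder walks the canonical tree of $F\uhr{\rho'}$: at each stage $\rho'$ satisfies exactly the term $T_{j_i}$ that the advice's position pattern points to, the correction string then reveals the original $\pi$-labels on $\beta_i$, and ``unfixing'' those back to $\ast$ recovers that part of $\rho$; iterating $t$ steps reconstructs everything. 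Getting the bookkeeping exactly right — in particular handling the final, possibly truncated, block so that the number of revealed variables is exactly $t$ — is the one place real care is needed. An alternative I would keep in mind is the ``random-adversary''/entropy-counting proof of Impagliazzo--Matthews--Paturi and Thapen, which reveals $\brho$ one block at a time and bounds the chance each step extends the path by $O(pw)$; it avoids explicit encodings but needs the partial-revelation process set up carefully.
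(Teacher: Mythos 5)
A point of orientation first: the paper does not prove Theorem~\ref{thm:HSL} at all. It is H{\aa}stad's classical switching lemma, quoted from \cite{Hastad86} purely for the expository discussion of why prior work loses (the tool actually used in this paper's proofs is the multi-switching lemma, Theorem~\ref{thm:H14-SL}, likewise cited without proof). So there is no paper proof to compare against; I can only assess your sketch on its own terms.

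Your outline is the standard Razborov-style encoding proof, and its skeleton is sound: the reduction to a width-$w$ DNF, the canonical decision tree, the map $\rho \mapsto (\rho',\mathrm{adv})$ in which $\rho'$ sets each revealed block to the values satisfying the corresponding term, the weight identity $\Prx_{\brho\leftarrow\calR_p}[\brho=\rho] = \big(\tfrac{2p}{1-p}\big)^{t}\,\Prx_{\brho\leftarrow\calR_p}[\brho=\rho']$, and injectivity via the decoder you describe; you also correctly flag the genuinely delicate points (unambiguous decoding and the possibly truncated final block). The one concrete gap relative to the literal statement is the constant. The advice you describe costs about $4w$ per revealed variable (position within the term, an end-of-block marker, a correction bit), which yields a bound of the shape $\big(\tfrac{8pw}{1-p}\big)^{t}$, not $(5pw)^{t}$; and the theorem as stated carries no smallness hypothesis on $p$, so ``absorbing constants for $p$ at most a small constant'' is not available to you. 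Reaching $5pw$ (or better) requires H{\aa}stad's original conditional-probability argument or a genuinely more refined weighted encoding, not the naive count. For this paper the constant is immaterial---Theorem~\ref{thm:HSL} is invoked only to explain that a failure probability of the form $(\Theta(pw))^{t}$ cannot be made exponentially small when $t = O(\log n)$ and $p \gg 1/n$---but as a proof of the stated bound, that sharpening is the step you would still owe.
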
 
 
 This failure probability of $(5pw)^t$ cannot be made exponentially small in our setting: since correlation bounds strong enough to be useful for the \cite{NW94} framework are not known for $\SYM \circ \AND_{\omega(\log n)}$ (see ``Open Question 1'' of \cite{Viola09now}) the value of $t$ has to has to be taken to  be at most $k = O(\log n)$, and moreover $p$ certainly has to be $\gg 1/n$ (since taking $p = 1/n$ would leave only a constant number of coordinates alive, and $H \uhr \brho$ would not ``retain structure'' in the sense of containing a copy of $\GIP_{m/2,k+1}$).  Indeed,~\cite{Vio06} applies Theorem~\ref{thm:HSL} with $p = n^{-\Theta(1)}$  \ignore{\rnote{Here it said ``--- harsher than standard applications of the HSL ---'' but this may confuse people who are familiar with applying it with $p=1/(10 \log S)$ in a setting where $S=2^{n^{1/d}}$ --- they may think of $p$ in such a setting as being $n^{-\Theta(1)}$ since $d$ is typically viewed as constant.  Okay to leave this ``--- harsher than standard applications of the HSL ---'' out?}}in order to make the failure probability as small as $n^{-\Omega(\log n)}$, and this is why~\cite{Vio06} only achieves quasi-polynomial correlation bounds $n^{-\Omega(\log n)}$. 
 
Faced with this obstacle, instead of using the standard~\cite{Hastad86} switching lemma, \cite{LS11} reverts to the earlier ``multi-switching lemma'' of \cite{ajtai1983} which applies to a collection of depth-2 circuits rather than a single such circuit.  The \cite{ajtai1983} multi-switching lemma, stated below, \emph{does} achieve exponentially small failure probability, but is only able to handle collections of $n^{O(\log\log n)}$ many $k$-DNFs, for $k = O(\log\log n)$.  Recall that a restriction tree $T$ is like a decision tree except that leaves do not have labels associated with them (so each root-to-leaf path is a restriction).  The distribution $\mu_T$ corresponds to the distribution over restrictions obtained by making a random walk from the root of $T$. 

\begin{theorem}[Ajtai's switching lemma \cite{ajtai1983}] \label{thm:ajtaiSL}
Let $\calF = \{F_1,\dots,F_{s}\}$ be a family of $s$ many DNFs over $x_1,\dots,x_n$, each of width $k$.  For any $t \geq 1$, there is a restriction tree $T$ of height at most $nk (\log s)/(\log n)^t$ such that
\[
\Prx_{\brho \leftarrow \mu_T}\big[F_i \uhr \brho \text{~is not a $(\log n)^{10kt2^k}$-junta}\big] \leq
2^{-n/(2^{10k}(\log n)^t)}.
\]
\end{theorem}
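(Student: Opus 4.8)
The plan is to follow Ajtai's original argument, which simplifies the whole family $\calF$ \emph{simultaneously} rather than one DNF at a time; handling the DNFs one at a time loses a factor of $s$ in the failure probability and, worse, cannot push it below quasipolynomially small. Concretely, I would build the restriction tree $T$ in a sequence of $t$ \emph{stages}. Entering stage $j$ we have, along each root-to-leaf path constructed so far, a partial restriction $\rho$ under which every $F_i\uhr\rho$ is a width-$k$ DNF on the live variables; below that node the tree hangs, first, a subtree that queries a small fraction of the live variables in such a way that $\mu_T$ restricted to this subtree behaves like a fair random restriction of survival rate $p=\Theta(1/\log n)$, and then, for each $i$ in turn, the canonical depth-$d$ decision tree of $F_i$ restricted by the current path, where $d$ is the per-stage depth budget supplied by the single-DNF switching lemma (Theorem~\ref{thm:HSL}). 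The number of stages $t$, the rate $p$, and the budget $d$ are tuned so that after the $t$ stages (i) every path has length at most $nk(\log s)/(\log n)^t$; (ii) each surviving $F_i$ has collapsed to a junta of size at most $(\log n)^{10kt2^k}$, where the exponent arises because each stage multiplies a DNF's relevant-variable count by a $(\log n)^{O(k2^k)}$ factor; and (iii) the individual switching steps fail rarely enough for the aggregation below.

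The step I expect to be the main obstacle is getting the overall failure probability down to the \emph{exponentially} small $2^{-\Omega(n/(2^{10k}(\log n)^t))}$, rather than the quasipolynomially small $s\cdot(O(pk))^d$ that a union bound over the $s$ DNFs and $t$ stages would give --- this gap is exactly what separates Theorems~\ref{thm:HSL} and~\ref{thm:ajtaiSL}, and it is why~\cite{LS11} can afford a multi-switching lemma where~\cite{Vio06} could not. For this I would partition $[n]$ into $m:=n/w$ disjoint blocks of size $w:=2^{10k}(\log n)^t$ and argue that a \emph{global} failure --- some $F_i\uhr\brho$ failing to become a junta of the stated size --- forces the switching argument to fail \emph{locally} inside a constant fraction of the blocks. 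The reason is the canonical-decision-tree analysis: a DNF that does not collapse must retain a long chain of partially-satisfied width-$k$ terms; such a chain necessarily meets many distinct blocks; and within each such block, the event that the restriction's action on that block's coordinates fails to kill the relevant terms has probability bounded well away from $1$. Since distinct blocks are disjoint the block events are independent, so the chain survives only with probability exponentially small in the number of blocks it meets; the remaining --- and most delicate --- point is to encode a surviving chain block-by-block cheaply enough that the union bound over all possible chains is dominated, with the factor $2^{10k}$ built into $w$ providing exactly the per-block room needed to absorb the $2^{O(k)}$ per-term overhead. Putting these together yields the claimed $2^{-\Omega(n/w)}$ bound.

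The cleanest organization is to first state and prove, by Razborov's encoding-based proof of Theorem~\ref{thm:HSL} adapted to families, the single-stage claim ``for $s$ width-$k$ DNFs and a fair survival-rate-$p$ restriction there is a common depth-$d$ partial decision tree after which every DNF is a $2^k$-junta, except with probability $s\cdot(O(pk))^d$''; then realize it inside a restriction tree; and finally obtain Theorem~\ref{thm:ajtaiSL} by composing $t$ stages and invoking the block-independence bound above for the exponential tail. What remains is the routine bookkeeping of verifying (i)--(iii) for the chosen parameters, which I would do last.
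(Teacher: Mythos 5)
A framing point first: the paper does not prove Theorem~\ref{thm:ajtaiSL} at all --- it is quoted from Ajtai's work \cite{ajtai1983} (in the form restated by \cite{LS11}) purely as background motivating the switch to H{\aa}stad's multi-switching lemma, so there is no in-paper proof to compare against. Judged on its own terms, your proposal is a plan rather than a proof, and it has two concrete problems. The first is that your stage construction is inconsistent with the height bound you must meet. A restriction tree realizes a restriction by querying exactly the variables that get fixed, so a subtree whose induced distribution ``behaves like a fair random restriction of survival rate $p=\Theta(1/\log n)$'' must query a $(1-p)$-fraction of the live variables; after even one such stage every root-to-leaf path already has length close to $n$, whereas the theorem requires height at most $nk(\log s)/(\log n)^t$, which is $o(n)$ in the regime where the lemma is used ($k=O(\log\log n)$, $s=n^{O(\log\log n)}$, $t\ge 2$). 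Hanging ``for each $i$ in turn'' the canonical decision tree of each $F_i$ at each node likewise adds up to $s$ times the depth budget per stage, again far over budget. So Ajtai's lemma cannot be obtained by simulating $\calR_p$ inside a restriction tree and iterating Theorem~\ref{thm:HSL}; the tree must fix only a small, adaptively chosen set of variables, and the junta conclusion for the (many) unqueried variables has to come from a structural argument, not from killing terms the way a $p$-random restriction does.

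The second problem is that the exponential tail --- which you yourself identify as the crux separating this statement from Theorem~\ref{thm:HSL} --- is only asserted. You posit that a global failure forces independent ``local failures'' in a constant fraction of blocks of size $2^{10k}(\log n)^t$, but since the tree queries only $o(n)$ variables, most blocks contain no queried variable at all, so ``the restriction's action on that block'' is not a nondegenerate event under $\mu_T$, and the forcing claim (a surviving chain meets many blocks, each of which independently kills it with constant probability) is precisely the statement that needs proof. The per-block encoding step, the per-stage junta-growth factor $(\log n)^{O(k2^k)}$, and the verification of your conditions (i)--(iii) are all deferred. The shape of the parameters you reverse-engineer is sensible (the number of blocks $n/(2^{10k}(\log n)^t)$ matches the exponent in the failure probability), but as written the proposal does not establish the theorem; to complete it you would need to replace the random-restriction stages with an adaptive, query-few-variables construction and then carry out the independence-plus-encoding analysis in full.
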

Hence~\cite{LS11} achieves exponentially small correlation bounds (the main point of their paper), but only against circuits of size $n^{O(\log\log n)}$. 

The key new ingredient that we employ in this work is a recent powerful multi-switching lemma from \cite{Has14}.  (We note that \cite{IMP12} gives an essentially equivalent multi-switching lemma which we could also use.)  Roughly speaking the \cite{Has14} multi-switching lemma, whose precise statement we defer to Section~\ref{sec:bash} as it is somewhat involved, lets us achieve an exponentially small failure probability (like Ajtai's multi-switching lemma) of achieving a significantly more drastic simplification than Ajtai's multi-switching lemma (recall the doubly-exponential-in-$k$ dependence on the junta size in Theorem~\ref{thm:ajtaiSL}).  This quantitative improvement in depth reduction translates into our stronger correlation bounds.

\ignore{
}

\subsection{Relation to \cite{ST18:improved}}
We close this introduction by discussing the connection between this paper and recent concurrent work of the authors \cite{ST18:improved}.  The high-level approaches of the two paper are fairly different: unlike the current paper, \cite{ST18:improved} does not use the Nisan--Wigderson hardness-versus-randomness paradigm (and does not establish any new correlation bounds); instead it establishes a derandomized version of the \cite{Has14} multi-switching lemma and combines this with other ingredients to obtain its final PRG in a manner reminiscent of \cite{AjtaiWigderson:85,TX13}.

The results of the two papers are also incomparable (briefly,~\cite{ST18:improved} obtains significantly shorter seed length for significantly more restricted classes of functions).  The first main result of \cite{ST18:improved} is an $\eps$-PRG for the class of size-$S$ depth-$d$ $\acz$ circuits with seed length $\log(S)^{d+O(1)} \cdot \log(1/\eps)$.  This is incomparable to the most closely related result of the present paper (Corollary~\ref{cor:prg-many-gates}, which gives a $2^{O(\sqrt{\log S})} + \polylog(1/\eps)$ seed length PRG for $\acz$ circuits augmented with polynomially many $\SYM$ or $\THR$ gates), since the \cite{ST18:improved} result gives a significantly better seed length but for the significantly more limited class of ``un-augmented'' constant-depth circuits (indeed, the~\cite{ST18:improved} result does not apply to $\acz$ circuits augmented even with a single $\SYM$ or $\THR$ gate).  The second main result of \cite{ST18:improved} is an $\eps$-PRG for the class of $S$-sparse $\F_2$ polynomials with seed length $2^{O(\sqrt{\log S})} \cdot \log(1/\eps).$  Here too the seed length of \cite{ST18:improved} is shorter than that of the current paper (giving the optimal $\log(1/\eps)$ dependence on $\eps$ as opposed to the $(\log(1/\eps))^{4.01}$ of the current paper), but the result of \cite{ST18:improved} only holds for $S$-sparse $\F_2$ polynomials, which are a very restricted case of the $\{\SYM,\THR\} \circ \acz_d$ circuits which are handled in the current paper.

\section{Preliminaries} \label{sec:prelim}

We use bold font like $\bx$, $\brho$, etc. to denote random variables.

We write ``size-$S$ $\acz_d$'' to denote the class of circuits of depth $d$ consisting of at most $S$ unbounded fan-in $\AND/\OR$ gates with variables and negated variables as the inputs (we include these literals in the gate count).

\pparagraph{Pseudorandomness.}  For $r < n$, we say that a distribution $\calD$ over $\zo^n$ can be \emph{sampled efficiently with $r$ random bits} if (i) $\calD$ is the uniform distribution over a multiset of size exactly $2^r$ of strings from $\zo^n$, and (ii) there is a deterministic algorithm $\mathrm{Gen}_{\calD}$ which, given as input a uniform random $r$-bit string $\bx \leftarrow \zo^r$, runs in time $\poly(n)$ and outputs a string drawn from $\calD$.

For $\delta>0$ and a class $\calC$ of functions from $\zo^n$ to $\zo$, we say that a distribution $\calD$ over $\zo^n$ \emph{$\delta$-fools $\calC$ with seed length $r$} if (a) $\calD$ can be sampled efficiently with $r$ random bits via algorithm $\mathrm{Gen}_\calD$, and (b) for every function $f \in \calC$, we have
\[
\bigg|\Ex_{\bs \leftarrow \{0,1\}^r}[f(\mathrm{Gen}_{\calD}(\bs))] - 
\Ex_{\bx \leftarrow \{0,1\}^n}[f(\bx)]\bigg| \leq \delta.
\]
Equivalently, we say that $\mathrm{Gen}_\calD$ is a \emph{$\delta$-PRG for $\calC$ with seed length $r$.}

\ignore{
}

\pparagraph{Restrictions.}  A \emph{restriction} $\rho$ of variables $x_1,\dots,x_n$ is an element of $\{0,1,\ast\}^n$).
Given a function $f(x_1,\dots,x_n)$ and a restriction $\rho$, we write $f \uhr \rho$ to denote the function obtained by fixing $x_i$ to $\rho(i)$ if $\rho(i) \in \{0,1\}$ and leaving $x_i$ unset if $\rho(i)=\ast.$
 For two restrictions $\rho,\rho'\in \{0,1,\ast\}^{n}$, their \emph{composition}, denoted $\rho\rho'\in \{0,1,\ast\}^{n}$, is the restriction defined by
\[ (\rho\rho')_i = \left\{
\begin{array}{cl}
\rho_i & \text{if $\rho_i \in \{0,1\}$} \\
\rho'_i & \text{otherwise.}
\end{array}
\right.\]
We write $\calR_p$ to denote the standard distribution over random restrictions with $\ast$-probability $p$, i.e. $\brho$ drawn from $\calR_p$ is a random string in $\{0,1,\ast\}$ obtained by independently setting each coordinate to $\ast$ with probability $p$ and to each of $0,1$ with probability ${\frac {1-p} 2}.$

\ignore{
}

\subsection{Multiparty communication complexity} 
\label{sec:BNS} 

We recall a celebrated lower bound of Babai, Nisan, and Szegedy \cite{BNS92} on the multi-party ``number on forehead'' (NOF) communication complexity of the generalized inner product function:

\begin{theorem}[\cite{BNS92}] \label{thm:BNS}
\ignore{\rnote{This version of the statement is from \cite{LS11}, right?  \cite{Vio06} states a version that doesn't have the $\gamma_{\err}$ parameter. As a nervous pedant I would like to understand and maybe state the statement more clearly:  $P$ is a randomized NOF protocol and its output agrees with $f$ w.p. $(1-\gamma_{\err})$ on each input (i.e. it's the exact same communication complexity model as the Nisan theorem, Theorem \ref{thm:nisan-PTF} below, correct?}}
There is a partition of the $m\cdot (k+1)$ inputs of 
\[ \GIP_{m,k+1}(x) := \bigoplus_{i=1}^m \bigwedge_{j=1}^{k+1} x_{i,j}   \]
into $k+1$ blocks such that the following holds: Let $P$ be a $(k+1)$-party randomized NOF communication protocol exchanging at most $\frac1{10}(m/4^{k+1} - \log(1/\gamma_{\comm}))$ bits of communication and computing a Boolean function $f$ with error $\gamma_{\err}$ (meaning that on every input $x$ the protocol outputs the correct value $f(x)$ with probability at least $1-\gamma_{\err}$). Then 
\[ \Prx_{\bx\leftarrow \zo^{m(k+1)}}\big[f(\bx) = \GIP_{m,k+1}(\bx)\big] \le \frac1{2} + \gamma_{\err} + \gamma_{\comm}.\] 
\end{theorem}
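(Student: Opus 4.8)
Write $\GIP$ as shorthand for $\GIP_{m,k+1}$. The plan is to prove this by the \emph{discrepancy method}, following \cite{BNS92}. Take the promised partition of the $m(k+1)$ inputs into blocks $B_1,\dots,B_{k+1}$ to be $B_j := \{x_{i,j} : i\in[m]\}$, so that in the $(k+1)$-party NOF model party~$j$ sees every input coordinate except those in $B_j$. First recall the standard structural fact that a \emph{deterministic} $(k+1)$-party NOF protocol communicating at most $c$ bits partitions $\zo^{m(k+1)}$ into at most $2^c$ \emph{cylinder intersections} $\bigcap_{j=1}^{k+1}C_j$ --- where each $C_j\subseteq\zo^{m(k+1)}$ has an indicator that does not depend on the coordinates in $B_j$ --- and that the protocol's output is constant on each such cylinder intersection. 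For a randomized protocol $P$ with error $\gamma_{\err}$, I would first fix its internal coins to obtain a deterministic protocol $P_r$ whose average disagreement with $f$ over a uniform input is at most $\gamma_{\err}$ (possible since the pointwise error is at most $\gamma_{\err}$), giving $\Prx_{\bx}[f(\bx)=\GIP(\bx)]\le\Prx_{\bx}[P_r(\bx)=\GIP(\bx)]+\gamma_{\err}$. So it suffices to bound $\Prx_\bx[P(\bx)=\GIP(\bx)]$ for deterministic $c$-bit protocols $P$. Writing $(-1)^{P(\bx)}=\sum_S\sigma_S\,\mathbf{1}[\bx\in S]$ as a $\pm1$-combination of the (at most $2^c$) cylinder-intersection indicators and bounding term by term, this reduces to upper-bounding the \emph{discrepancy}
\[ \mathrm{disc}\ :=\ \max_{S\text{ a cylinder intersection}}\ \Big|\Ex_{\bx\leftarrow\zo^{m(k+1)}}\big[(-1)^{\GIP(\bx)}\cdot\mathbf{1}[\bx\in S]\big]\Big|, \]
since then $\big|\Ex_\bx[(-1)^{P(\bx)}(-1)^{\GIP(\bx)}]\big|\le 2^c\cdot\mathrm{disc}$, hence $\Prx_\bx[P(\bx)=\GIP(\bx)]\le\frac12+2^{c-1}\cdot\mathrm{disc}$.

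The heart of the proof is the inequality $\mathrm{disc}\le\big\|(-1)^{\GIP}\big\|$, where for $\phi:X_1\times\cdots\times X_{k+1}\to\mathbb{R}$ (with $X_j=\zo^m$ indexing block $B_j$) the \emph{cube norm} is defined by
\[ \|\phi\|^{2^{k+1}}\ :=\ \Ex\Big[\,\prod_{\epsilon\in\zo^{k+1}}\phi\big(\bx_1^{\epsilon_1},\dots,\bx_{k+1}^{\epsilon_{k+1}}\big)\Big], \]
the expectation over $2(k+1)$ independent uniform strings $\bx_1^0,\bx_1^1,\dots,\bx_{k+1}^0,\bx_{k+1}^1$ with $\bx_j^b\in X_j$. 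I would prove $\mathrm{disc}\le\|(-1)^{\GIP}\|$ by the classical ``peeling'' argument: given $S=\bigcap_jC_j$ with indicators $c_j=\mathbf{1}_{C_j}$, apply Cauchy--Schwarz once in each direction $j=1,\dots,k+1$; the $j$-th application pulls the (then $2^{j-1}$ many) surviving copies of $c_j$ out of the expectation over the block-$B_j$ variable --- they do not depend on block $B_j$ and lie in $[0,1]$, so after squaring they can be dropped --- while duplicating that block variable into an independent pair. After all $k+1$ steps every cylinder indicator has been discarded and each block variable has been split into the pair $(\bx_j^0,\bx_j^1)$, leaving exactly $\|(-1)^{\GIP}\|^{2^{k+1}}$; since each intermediate quantity is nonnegative, taking $2^{k+1}$-th roots gives $\mathrm{disc}\le\|(-1)^{\GIP}\|$. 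I expect this repeated Cauchy--Schwarz, and in particular the bookkeeping of which cylinder-indicator copies survive into each step, to be the most delicate part of the write-up.

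Finally I would compute $\|(-1)^{\GIP}\|$. Writing $\bx_{i,j}^b$ for the $i$-th coordinate of $\bx_j^b$ and using $\GIP=\bigoplus_{i=1}^m\bigwedge_{j=1}^{k+1}x_{i,j}$, pushing the product over $\epsilon$ inside gives $\prod_{\epsilon}\phi(\bx_1^{\epsilon_1},\dots,\bx_{k+1}^{\epsilon_{k+1}})=\prod_{i=1}^m(-1)^{\sum_\epsilon\prod_j\bx_{i,j}^{\epsilon_j}}=\prod_{i=1}^m(-1)^{\prod_j(\bx_{i,j}^0+\bx_{i,j}^1)}$, so the expectation factorizes over $i\in[m]$ with the $i$-th factor equal to $\Ex\big[(-1)^{\prod_{j=1}^{k+1}(\bx_{i,j}^0+\bx_{i,j}^1)}\big]$. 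That product is odd exactly when $\bx_{i,j}^0\neq\bx_{i,j}^1$ for all $j$, which has probability $2^{-(k+1)}$, so the $i$-th factor equals $1-2\cdot2^{-(k+1)}=1-2^{-k}$ and $\|(-1)^{\GIP}\|^{2^{k+1}}=(1-2^{-k})^m$. Hence
\[ \mathrm{disc}\ \le\ (1-2^{-k})^{m/2^{k+1}}\ \le\ \exp\!\big(-m/2^{2k+1}\big)\ =\ \exp\!\big(-2m/4^{k+1}\big)\ \le\ 2^{-2m/4^{k+1}}. \]
Plugging this into $\Prx_\bx[P(\bx)=\GIP(\bx)]\le\frac12+2^{c-1}\cdot\mathrm{disc}$ together with $c\le\frac1{10}(m/4^{k+1}-\log(1/\gamma_{\comm}))$ --- and noting we may assume $\log(1/\gamma_{\comm})\le m/4^{k+1}$, as otherwise the communication budget is negative and there is nothing to prove --- a one-line calculation gives $2^{c-1}\cdot\mathrm{disc}\le\gamma_{\comm}$. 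Combined with the reduction from randomized to deterministic protocols, this yields $\Prx_\bx[f(\bx)=\GIP(\bx)]\le\frac12+\gamma_{\err}+\gamma_{\comm}$.
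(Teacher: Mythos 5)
The paper does not prove Theorem~\ref{thm:BNS} at all: it is imported as a black box from \cite{BNS92}, in the form stated in \cite{LS11}, and used only in Section~\ref{sec:endgame}. Your proposal correctly reconstructs the classical proof from that reference --- the cylinder-intersection decomposition of a deterministic NOF protocol, the reduction $\Pr[P=\GIP]\le\frac12+2^{c-1}\cdot\mathrm{disc}$, the repeated Cauchy--Schwarz bound $\mathrm{disc}\le\|(-1)^{\GIP}\|$, the factorized computation $\|(-1)^{\GIP}\|^{2^{k+1}}=(1-2^{-k})^m$, and the coin-fixing step that converts pointwise error $\gamma_{\err}$ into the additive $\gamma_{\err}$ term --- and the arithmetic has ample slack (your argument only needs $c\le 2m/4^{k+1}+1-\log(1/\gamma_{\comm})$, which the stated bound $\frac1{10}(m/4^{k+1}-\log(1/\gamma_{\comm}))$ implies once one discards the vacuous case $\log(1/\gamma_{\comm})>m/4^{k+1}$), so the plan is sound.
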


The connection between $\SYM\circ\mathsf{AND}_k$ circuits and $(k+1)$-party communication complexity is due to the following simple but influential observation of H{\aa}stad and Goldmann:

\begin{fact}[\cite{HG91}]
\label{fact:HG}
Let $f : \zo^n\to\zo$ be a Boolean function computed by a size-$s$ $\SYM\circ\AND_k$ circuit. Then for any partition of the $n$ inputs of $f$ into $k+1$ blocks, there is a deterministic NOF $(k+1)$-party communication protocol that computes $f$ using $O(k\log s)$ bits of communication.
\end{fact}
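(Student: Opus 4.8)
The plan is to use the single crucial structural observation behind the fact: every bottom $\AND$ gate $g$ of the circuit has fan-in at most $k$, so it reads variables from at most $k$ of the $k+1$ blocks of the partition, and hence misses at least one block. Fix any such missed block and call it $j(g)$ (breaking ties by taking the least index, so that $j(\cdot)$ is a well-defined deterministic function of the circuit and the partition, independent of the input). In the number-on-forehead model, party $i$ sees every block except block $i$; therefore party $j(g)$ sees all of the (at most $k$) variables feeding $g$, and so party $j(g)$ can evaluate $g$ on any given input with no communication whatsoever. This assignment of $\AND$ gates to parties who can compute them ``for free'' is the entire point.

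Given this, here is the protocol. Working in the standard (blackboard/broadcast) NOF model, each party $i \in \{1,\dots,k+1\}$ first privately computes $c_i$, the number of $\AND$ gates $g$ of the circuit with $j(g)=i$ that evaluate to $1$ on the input; since the circuit has at most $s$ gates, $c_i$ is an integer in $\{0,1,\dots,s\}$ and can be written using $\lceil \log_2(s+1)\rceil = O(\log s)$ bits. Each party broadcasts $c_i$, for a total of $(k+1)\cdot O(\log s) = O(k\log s)$ bits of communication. Finally every party forms $c := \sum_{i=1}^{k+1} c_i$, which is exactly the total number of bottom $\AND$ gates that evaluate to $1$ on the input, and outputs $\phi(c)$, where $\phi:\{0,\dots,s\}\to\zo$ is the function describing how the top $\SYM$ gate depends on its number of $1$-inputs. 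Since the top gate is symmetric, $\phi(c) = f(x)$, and the protocol is deterministic.

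The remaining points to check are purely bookkeeping: the assignment $g \mapsto j(g)$ is defined for every gate precisely because $k < k+1$ (literals fed directly into the top gate are $\AND$ gates of fan-in $1$ and are handled identically, while any constant bottom gates may be assigned to party $1$ arbitrarily), and the communication accounting is immediate. I do not expect any genuine obstacle here: the whole content of the fact is the pigeonhole observation that a width-$k$ conjunction is fully visible to some party among $k+1$, together with the trivial remark that a symmetric gate needs only the population count of its inputs, which is an $O(\log s)$-bit quantity.
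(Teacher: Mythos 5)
Your proof is correct and is exactly the standard Håstad--Goldmann argument that the paper cites from \cite{HG91} without reproducing: each width-$k$ $\AND$ is missed by some block, so the corresponding party evaluates it for free, each of the $k+1$ parties broadcasts an $O(\log s)$-bit count of its satisfied gates, and the symmetric top gate is then determined by the total count. No gaps; the bookkeeping about literals and constant gates is handled appropriately.
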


For $\THR \circ \acz$ circuits we use an analogous result from~\cite{Nis93} on the $(k+1)$-party randomized $\gamma$-error communication complexity of $\THR \circ \AND_k$ circuits:

\begin{theorem}[\cite{Nis93}] \label{thm:nisan-PTF}
Let $f : \zo^n\to\zo$ be a Boolean function computed by a $\THR\circ\AND_k$ circuit. Then for any partition of the $n$ inputs of $f$ into $k+1$ blocks, there is a randomized NOF $(k+1)$-party communication protocol that computes $f$ with error $\gamma_{\err}$ using $O(k^3\log n\log(n/\gamma_{\err}))$ bits of communication. 
\end{theorem}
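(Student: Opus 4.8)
The plan is to prove this along the standard lines for the multiparty communication complexity of threshold gates --- this is Nisan's theorem~\cite{Nis93} --- in three stages: compress the circuit to a threshold of polynomially many bounded-weight conjunctions, distribute those conjunctions among the $k+1$ players so that the problem becomes a ``generalized \textsc{Greater-Than}'' on a sum of $k+1$ privately held integers, and then solve that problem by a binary search assembled from randomized equality tests. First, for the preprocessing: there are at most $(2n+1)^k = n^{O(k)}$ distinct $\AND$'s of width at most $k$ over $n$ variables, so after merging repeated bottom gates I may assume the $\THR\circ\AND_k$ circuit has $m \le n^{O(k)}$ terms $T_1,\dots,T_m$, with the top gate outputting $1$ exactly when $\sum_{i=1}^m w_i T_i(x) \ge \theta$ for real weights $w_i$. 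Viewing $f$ as a linear threshold function of the Boolean features $T_i(x)$, Muroga's theorem lets me replace the $w_i$ and $\theta$ by integers with $W := \sum_i |w_i| \le 2^{O(m\log m)} = 2^{n^{O(k)}}$. The single consequence I need downstream is that $L := \lceil\log W\rceil = n^{O(k)}$, equivalently $\log L = O(k\log n)$.

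Second, I would set up the communication problem. For the prescribed partition of $x_1,\dots,x_n$ into $k+1$ blocks, each term $T_i$ reads variables from at most $k$ of the blocks, so some block $B_{\pi(i)}$ is disjoint from $T_i$, and the player $\pi(i)$ --- who sees every block except $B_{\pi(i)}$ --- can evaluate $T_i$ with no communication. Grouping the terms according to $\pi$, player $j$ privately computes the partial sum $S_j := \sum_{i : \pi(i) = j} w_i T_i(x) \in [-W,W]$; a publicly known shift makes each $S_j$ a nonnegative integer of $O(L)$ bits. The task is now purely a communication problem: $k+1$ players, player $j$ holding an $O(L)$-bit nonnegative integer $S_j$, must decide whether $\sum_{j} S_j \ge \theta'$ for a public threshold $\theta'$.

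Third, I would solve this generalized \textsc{Greater-Than} by a binary search reminiscent of the classical two-party protocol: bisecting the bit positions of the sum $s := \sum_j S_j$, one needs $O(\log L) = O(k\log n)$ comparison queries of the form ``is $\lfloor s/2^b\rfloor$ equal to the public value $\lfloor\theta'/2^b\rfloor$?''. Each such query reduces, via $\lfloor s/2^b\rfloor = \sum_j \lfloor S_j/2^b\rfloor + c$ where the carry $c = \lfloor(\sum_j (S_j\bmod 2^b))/2^b\rfloor$ lies in $\{0,\dots,k\}$, to tests of the form ``a sum of $k+1$ privately held integers equals a public target'', which are settled by reducing modulo a public random prime of $O(\log(n/\gamma'))$ bits and having each player broadcast its residue --- an equality test of $O(k\log(n/\gamma'))$ bits that errs with probability at most $\gamma'$; the carry $c$, being bounded by $k$, is itself localized by the same mechanism applied recursively to the low-order parts $S_j\bmod 2^b$. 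Setting $\gamma' = \gamma_{\err}/\poly(k,\log n)$ and union-bounding the error over every equality test performed, a careful accounting of the communication --- across the $O(k\log n)$ binary-search rounds, the $O(k)$ carry candidates, the $k+1$ players, and the recursion on the bit length --- yields a protocol that errs with probability at most $\gamma_{\err}$ on each input and communicates $O(k^3\log n\log(n/\gamma_{\err}))$ bits.

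I expect the third stage to be where essentially all the difficulty lies: because the sum $s$ is split across $k+1$ parties, each of its bits depends on the low-order bits through carries, so the comparisons driving the binary search are not directly available and must be emulated through equality tests together with a careful --- but, crucially, bounded --- treatment of carries, all while keeping the overall error budget and the $\poly(k)$ per-round overhead under control so that the three sources of factors of $k$ and the two sources of factors of $\log n$ multiply out to exactly $O(k^3\log n\log(n/\gamma_{\err}))$. By contrast, the first two stages --- counting distinct width-$k$ terms, invoking Muroga's weight bound, and the block-disjointness observation underlying the term assignment --- are routine.
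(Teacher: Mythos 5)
First, a point of reference: the paper does not prove this statement at all --- it is imported verbatim from \cite{Nis93} --- so your proposal can only be judged against the standard proof of Nisan's theorem. Your first two stages do match that argument and are correct: merging the at most $n^{O(k)}$ distinct width-$k$ terms, invoking Muroga to get integer weights with $\log W = n^{O(k)}$, and assigning each term to a player whose forehead block it avoids, so that the problem becomes: $k+1$ players, player $j$ privately holding a $B$-bit integer $S_j$ with $B = n^{O(k)}$, decide whether $\sum_j S_j \ge \theta'$. The bound $\log B = O(k\log n)$ is indeed the one that drives everything afterwards.

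The gap is in your third stage, which you yourself identify as carrying all the difficulty. Your mechanism for the carries --- determine the carry $c=\lfloor(\sum_j (S_j\bmod 2^b))/2^b\rfloor$ \emph{exactly}, ``by the same mechanism applied recursively to the low-order parts'' --- does not work within the stated budget: pinning down that carry is itself an instance of the same sum-versus-threshold problem on $b$-bit inputs, and running such a recursion inside each of the $O(k\log n)$ binary-search rounds blows the cost up well beyond $O(k^3\log n\log(n/\gamma_{\err}))$; no accounting of the kind you gesture at lands on the claimed bound. Two further problems: the residues must be taken modulo a random prime of $O(\log(B/\delta)) = O(k\log n + \log(1/\delta))$ bits (not $O(\log(n/\delta))$, since the integers have $n^{O(k)}$ bits), and an equality test that fails tells you nothing about \emph{which side} of the target $\sum_j\lfloor S_j/2^b\rfloor$ lies on, which your sketch never recovers. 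The standard way to finish (essentially Nisan's protocol) avoids exact carries altogether: binary search over bit positions for the predicate ``$\sum_j\lfloor S_j/2^b\rfloor$ lies within $O(k)$ of $\lfloor\theta'/2^b\rfloor$'', which is monotone in $b$ precisely because the carry uncertainty is only $\pm k$ and halving only degrades the window by a constant factor; in each round every player broadcasts the residue of $\lfloor S_j/2^b\rfloor$ modulo a single random $O(k\log(n/\gamma_{\err}))$-bit prime, and the same broadcast is checked against all $O(k)$ candidate values in the public window (so the carry candidates cost no extra communication), which at the boundary position reveals the value, hence the side, hence the answer, while at $b=0$ there is no carry and the window test decides outright. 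The tally is then $O(k\log n)$ rounds $\times$ $(k+1)$ players $\times$ $O(k\log(n/\gamma_{\err}))$ bits, with a union bound over the $O(k^2\log n)$ candidate equalities --- exactly $O(k^3\log n\log(n/\gamma_{\err}))$. Without this (or an equivalent) replacement for your exact-carry recursion, the proposal does not establish the theorem as stated.
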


\section{Ingredient (1):  Simplifying the approximator}\ignore{: \cite{Has14} multi-switching lemma}
\label{sec:bash}

The main result of this section is  the following:

\begin{lemma} \label{lem:fair-1}
Let $F$ be any $\{\SYM,\THR\} \circ \acz_d$ circuit of size $s=n^{\tau \log n}$.
There is a fair distribution $\calR$ over restrictions $\brho \in \{0,1,\ast\}^n$ such that the following holds:  With probability
$1-\gamma_{\SL} = 1-\exp(-\Omega_d(\sqrt{n/\log n}))$ over the draw of $\brho \leftarrow \calR$,
it is the case that $F \uhr \brho$ belongs to the class $\mathscr{F}_{\text{simple}} = \{\SYM, \THR\} \circ \AND_{k := 0.0005\log m}$.  
\end{lemma}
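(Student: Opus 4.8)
The plan is to instantiate Håstad's multi-switching lemma on the collection of depth-2 subcircuits that sit just above the bottom layer of $F$, and then apply it iteratively up the circuit, reducing the depth by one at each stage, exactly as in the standard ``bottom-up'' proof of $\acz$ switching lemmas. The key structural observation is Fact~\ref{fact:fold}: a $\{\SYM,\THR\}$-gate applied to depth-$k$ decision trees can be re-expressed as a (possibly different) $\{\SYM,\THR\}$-gate applied to $\AND$'s of fan-in $k$. So it suffices to show that, under a suitable fair random restriction, all $s$ of the $\acz_{d-1}$ subcircuits feeding the top gate simultaneously collapse to shallow decision trees.

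\textbf{Step 1: setting up the iteration.} I would fix an $\ast$-probability schedule: compose $d-1$ independent random restrictions $\brho_1 \cdots \brho_{d-1}$, where at each stage the circuit has depth decreasing by one. The final composed restriction $\brho = \brho_1 \cdots \brho_{d-1}$ (together with the uniform completion of $\ast$'s) must be fair; since each $\brho_i \leftarrow \calR_{p_i}$ is fair and composition of fair restriction distributions is fair, this is automatic. The parameters $p_i$ must be chosen so that (a) the product $p_1 \cdots p_{d-1}$ leaves roughly $m \approx \sqrt n$ coordinates alive (needed so that $H \uhr \brho$ can still contain a copy of $\GIP_{m/2,k+1}$, i.e.\ compatibility with Ingredient~(2)), and (b) at each stage the multi-switching lemma's failure probability is $\exp(-\Omega_d(\sqrt{n/\log n}))$.

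\textbf{Step 2: one stage of multi-switching.} At a generic stage, the bottom two layers of the remaining circuit form $s' \le s$ many depth-2 circuits (DNFs or CNFs after pushing negations, whichever orientation is convenient). Håstad's multi-switching lemma says that, except with probability $\exp(-\Omega(\,\cdot\,))$ over $\brho \leftarrow \calR_p$, there is a small \emph{common} restriction-tree (of height $\ell$, roughly $\Theta(\log s)$) such that conditioned on every branch of that tree, \emph{each} of the $s'$ subcircuits collapses to a depth-$k'$ decision tree; gluing the common tree on top of the individual decision trees yields a single decision tree of depth $\ell + k'$ computing each subcircuit. Choosing $p$ and the width bound $w$ so that $\log s \cdot (\text{width factor})$ is a small constant makes the failure probability $\exp(-\Omega(n \cdot p / \cdots))$, which with $p$ polynomially small (to hit the target survival of $\sqrt n$ coordinates over $d-1$ stages, each $p_i \approx n^{-\Theta(1/d)}$, hence $n p_i$ is polynomially large) gives the claimed $\exp(-\Omega_d(\sqrt{n/\log n}))$. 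Absorbing the depth-$(\ell + k')$ decision trees into the layer above via Fact~\ref{fact:fold} reduces the depth by one.

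\textbf{Step 3: union bound and final collapse.} After $d-1$ stages (each failing with probability $\exp(-\Omega_d(\sqrt{n/\log n}))$, and $d$ is constant, so the union bound costs only a constant factor), the circuit has collapsed to a $\{\SYM,\THR\}$-gate feeding decision trees of depth $k = 0.0005 \log m$; applying Fact~\ref{fact:fold} one last time puts $F\uhr\brho$ into $\mathscr{F}_{\mathrm{simple}} = \{\SYM,\THR\}\circ\AND_k$. The arithmetic that the accumulated decision-tree depth stays below $k$ is where the constant $\tau$ in $s = n^{\tau\log n}$ gets pinned down — each stage contributes $O(\log s) = O(\tau (\log n)^2)$ to the height of the common restriction-tree, and we need the total, over $O(d)$ stages, to come in under $0.0005\log m = \Theta(\log n)$; this forces $\tau = \tau(d)$ to be a small constant depending on $d$, consistent with the statement ($s = n^{\tau\log n}$ with absolute $\tau$ but $n$ taken large depending on $d$, or $\tau = \tau_d$).

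\textbf{Main obstacle.} The delicate point is the bookkeeping in Step~2–3: Håstad's multi-switching lemma produces a \emph{common} restriction tree whose depth grows with $\log(\text{number of circuits})$, and iterating it $d-1$ times means the ``common tree'' depths accumulate and must be folded in correctly at each level without blowing up the number of subcircuits at the next level (each decision-tree leaf could in principle create many sub-cases). Keeping the total accumulated depth below the target $k$ while simultaneously keeping every stage's failure probability exponentially small in $\sqrt{n/\log n}$ — and ensuring the surviving-coordinate count lands at $\approx\sqrt n$ so that Ingredient~(2) can subsequently fire — is the real content; the rest is standard switching-lemma plumbing. I would defer the precise statement of the \cite{Has14} multi-switching lemma to this section and quote it in the form most convenient for this iteration (bounding the probability that there is \emph{no} small common decision tree simultaneously capturing all $s'$ subcircuits).
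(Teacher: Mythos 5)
There is a genuine gap, and it sits exactly where you flag ``the real content.''  Your plan is for the multi-switching stages alone to bring every subcircuit below the top gate down to a decision tree of depth $k = 0.0005\log m = \Theta(\log n)$, and in Step 3 you argue that each stage contributes $O(\log s) = O(\tau(\log n)^2)$ of depth, so choosing $\tau$ small keeps the total under $k$.  This arithmetic cannot work: $\tau(\log n)^2 \le 0.0005\log m$ fails for \emph{every} constant $\tau>0$ once $n$ is large, since the comparison is quadratic versus linear in $\log n$.  Moreover, the $\log s$ here is not a tunable quantity: in the \cite{Has14} lemma as quoted (Theorem~\ref{thm:H14-SL}), a family of $s$ circuits gets a common \emph{$(\log s)$-partial} restriction tree, i.e.\ the depth of the residual decision trees hanging off the common tree is pinned at $\log(\text{number of circuits}) = \Theta(\tau(\log n)^2)$; only the common tree's depth $t$ is free (and it governs the failure probability $s(24pw)^t$).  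So after the $d-1$ stages the best you can conclude is a tree of depth about $m/2$ whose leaves carry $\{\SYM,\THR\}\circ\DT_{\log s}$ circuits, which Fact~\ref{fact:fold} converts to $\{\SYM,\THR\}\circ\AND_{\log s}$ --- bottom fan-in $\log s \gg k$, not the claimed $\AND_k$.  (You also fold the common tree into the subcircuits' decision trees in Step 2, which makes the accumulated common-tree depth count against the fan-in budget; that makes matters strictly worse.)

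Two ingredients of the actual argument are missing.  First, the common restriction tree should be treated as part of the \emph{restriction}, not of the simplified circuit: $\calR$ draws $\brho'\leftarrow\calR_p$ with $p=\frac1{48}(48\log s)^{-(d-1)}$ (only $1/\polylog(n)$, not polynomially small as in your Step 1) and then takes a uniform random walk down the resulting depth-$(m/2)$ tree, so that depth only needs to stay below $m/2$ (leaving $m/2$ untouched $\GIP$ blocks for Ingredient~(2)), not below $k$.  Second, a separate \emph{trimming} step is needed to pass from bottom fan-in $\log s$ to $k$: at each leaf $\ell$ one chooses $\bL(\ell)\sse_q[n]$ with $q\approx n^{-0.01}$, picked so that $s^2\binom{\log s}{k}q^k\le 1/s$ (Fact~\ref{fact:trim}), keeps those variables free, and has $\calR$ set all other surviving variables uniformly; with probability $1-1/s$ every bottom $\AND$ then has at most $k$ live variables, and the bottom parities of $\RW$ (of size $r\approx\sqrt{n/\log n}$) still retain $\gg 1$ live variables each, so compatibility with Ingredient~(2) survives the extra $q$-thinning.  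Without this trimming step --- or a replacement such as a stronger form of the multi-switching lemma with partial depth $k$, for which one must pay roughly $s^{t/k}$ in the failure bound and redo the parameter bookkeeping --- the lemma as you argue it does not follow.
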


The recent ``multi-switching lemma'' of \cite{Has14} is the main technical tool we use to establish Lemma~\ref{lem:fair-1}.  To state the \cite{Has14} lemma we need some terminology.  Let $\mathscr{G}$ be a family of Boolean functions.  A restriction tree $T$ is said to be a \emph{common $\ell$-partial restriction tree (RT) for $\mathscr{G}$} if every $g \in \mathscr{G}$ can be expressed as 
$T$ with depth-$\ell$ decision trees hanging off its leaves. (Equivalently, for every $g \in \mathscr{G}$ and root-to-leaf path $\pi$ in $T$, we have that $g \uhr \pi$ is computed by a depth-$\ell$ decision tree.)

\begin{theorem}[\cite{Has14} multi-switching lemma]
\label{thm:H14-SL}
Let $\mathscr{F} = \{ F_1,\ldots,F_s\}$ be a collection of depth-$2$ circuits with bottom fan-in $w$.  Then for any $t \geq 1$,
\[ \Prx_{\brho'\leftarrow\calR_p} \big[\,\mathscr{F}\uhr\brho' \text{~does not have a common $(\log s)$-partial RT of depth $\le t$~} \big] \le s(24pw)^t.\]
\end{theorem}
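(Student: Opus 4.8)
The plan is to prove this by adapting Razborov's encoding-based proof of the single-circuit switching lemma (Theorem~\ref{thm:HSL}) to the setting of $s$ simultaneous bottom-level circuits. We may assume each $F_i$ is a $w$-DNF (an $\AND$-of-$\OR$s is a $w$-CNF, handled dually via the canonical decision tree for CNFs, and a mixed collection is no harder), and we write $\ell := \log s$. We may also assume $pw < 1/24$, as otherwise $s(24pw)^t\ge 1$ and the statement is trivial; in particular $p<1/3$, so $\tfrac{2p}{1-p}\le 3p$. The first step is to fix a \emph{canonical} common partial restriction tree $T_\rho$ attached to each $\rho$, via the following deterministic greedy process. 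While some leaf $\pi$ of the current tree has some $F_i$ for which $\DTc(F_i\uhr\rho\pi)$ has depth $\ge \ell+1$, pick the lexicographically-first such leaf $\pi$ and then the smallest-index such $i$; take the lexicographically-first deepest path of $\DTc(F_i\uhr\rho\pi)$, and graft onto $\pi$ the \emph{complete} binary decision tree querying the first $\ell$ variables on that path. Grafting a complete depth-$\ell$ tree, rather than the depth-$\ell$ truncation of the canonical tree, is the key design choice: it forces every ``fragment'' we ever graft to have length exactly $\ell$. Since fixing the free variables of the first term of a DNF strictly decreases the depth of its canonical decision tree, the process terminates; and since extending a restriction never increases canonical-decision-tree depth, once some $F_i$ is an $\ell$-decision-tree at a leaf it stays one at all descendants. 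Hence when the process halts, $T_\rho$ is a common $\ell$-partial restriction tree for $\mathscr{F}\uhr\rho$, and the failure event in the statement is \emph{exactly} $\{\depth(T_\brho) > t\}$; it suffices to show $\Prx_{\brho\leftarrow\calR_p}[\depth(T_\brho) > t] \le s(24pw)^t$.

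I would bound this probability by the standard injective-decoding argument. Fix a bad $\rho$, let $\Pi$ be the leftmost root-to-leaf path of $T_\rho$ truncated to its first $t$ edges, and let $\sigma$ set the $t$ variables appearing on $\Pi$ to their $\Pi$-values. The path $\Pi$ is a concatenation of length-$\ell$ fragments (the last possibly shorter), fragment $q$ having been grafted while processing some circuit $F_{i_q}$; inside $F_{i_q}$ the variables of that fragment are the free variables of the first few terms of $F_{i_q}\uhr(\text{restriction at that point})$, exactly as in the single-DNF construction. I would map $\rho\mapsto(\rho\sigma,\ \mathrm{adv})$, where $\mathrm{adv}$ records, fragment by fragment: (i) the index $i_q\in[s]$ of the responsible circuit ($\lceil\log s\rceil$ bits), and (ii) the usual per-variable Razborov advice --- for each variable of the fragment, its position among the $\le w$ literals of the term that introduced it together with one ``boundary'' bit marking where one term's block of variables ends, i.e. $\le 2w$ choices per variable. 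The crux is injectivity: given $(\rho\sigma,\mathrm{adv})$, one recovers $\rho$ by replaying the greedy process fragment by fragment, using (i) to know which circuit is being processed and (ii), together with the canonical-decision-tree rule (``the first term not yet falsified, all of whose remaining literals are satisfied once the advice-indicated variables are un-set''), to recover term by term exactly which variables were $\ast$ in $\rho$; then $\rho$ agrees with $\rho\sigma$ off those variables. This is precisely Razborov's decoding, carried through the extra bookkeeping of which-fragment-belongs-to-which-circuit.

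Finally I would do the counting. Since $\rho$ and $\rho\sigma$ agree off the $t$ variables of $\Pi$, where $\rho$ has $\ast$ and $\rho\sigma$ is fixed, $\Prx_{\brho\leftarrow\calR_p}[\brho=\rho] = \bigl(\tfrac{2p}{1-p}\bigr)^{t}\Prx_{\brho\leftarrow\calR_p}[\brho=\rho\sigma]$. Summing over bad $\rho$, using injectivity of the map and $\sum_{\rho'}\Pr[\brho=\rho']\le 1$,
\[
\Prx_{\brho\leftarrow\calR_p}[\depth(T_\brho) > t]
\ \le\ \Bigl(\tfrac{2p}{1-p}\Bigr)^{t}\cdot\bigl(\#\text{ advice strings}\bigr)
\ \le\ (3p)^{t}\cdot (2w)^{t}\,2^{t}\cdot s^{\lceil t/\ell\rceil},
\]
where $(2w)^t$ is the total per-variable literal-index advice over all fragments, $2^t$ the total boundary bits, and $s^{\lceil t/\ell\rceil}$ the total which-circuit advice. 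With $\ell=\log s$ we have $s^{\lceil t/\log s\rceil}\le s\cdot s^{t/\log s}=s\cdot 2^{t}$, so the whole bound is $s\cdot(6pw)^t\cdot 2^t\cdot 2^t = s(24pw)^t$. This gives the claimed inequality.

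I expect the main obstacle to be getting both ingredients right \emph{at once}: the canonical tree must be defined so that every fragment has length exactly $\ell=\log s$ --- this is what makes the which-circuit information amortize to $s^{\lceil t/\log s\rceil}=s\cdot 2^{O(t)}$ rather than a fatal $s^{\Theta(t)}$, which is what one would get by charging $\log s$ bits to each of the up to $\Theta(t)$ short fragments produced by a naive ``graft the truncated canonical tree'' process --- while at the same time keeping $\rho\mapsto(\rho\sigma,\mathrm{adv})$ provably injective, which forces a careful reconciliation between the ``complete depth-$\ell$ tree'' fragments and the term-by-term structure on which Razborov's decoding relies (in particular handling branches of a grafted complete tree along which $F_{i_q}$ is already decided, or along which its canonical decision tree is shorter than $\ell$). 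The probability bookkeeping and the absorption of constants into the ``$24$'' are then routine.
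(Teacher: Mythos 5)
First, a point of reference: the paper does not prove Theorem~\ref{thm:H14-SL} at all --- it is imported as a black box from \cite{Has14} (the introduction notes that the essentially equivalent lemma of \cite{IMP12} would also serve). So there is no internal proof to compare yours against, and what follows is an assessment of your argument on its own terms. Your high-level plan does match the known encoding-style proofs of this lemma: build a canonical common tree out of fragments of length exactly $\ell=\log s$ so that the $\log s$ bits naming the responsible circuit amortize to one bit per query (giving $s^{\lceil t/\ell\rceil}\le s\cdot 2^{t}$ rather than a fatal $s^{\Theta(t)}$), and then run a Razborov-style injective encoding along one deep branch. You are also right that it suffices to bound $\Pr_{\brho}[\depth(T_{\brho})>t]$, since a shallow $T_\rho$ is itself a witness of success --- though this is an inclusion of events in the direction you need, not the equality you assert.

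The genuine gap is in the encoding step: you define $\sigma$ to set the $t$ variables of $\Pi$ \emph{to their $\Pi$-values}, and this breaks Razborov's decoding. The decoder identifies the responsible term of $F_{i_q}$ as the first term \emph{satisfied} (equivalently, not falsified) by the encoded restriction; this works only because in Razborov's argument $\sigma$ sets each segment's variables to the unique values satisfying the corresponding term, while the actual path values are carried as extra advice bits. Along a deep branch of a canonical decision tree the assigned values necessarily falsify every term whose segment they complete (otherwise that branch would have terminated with output $1$), so under your $\sigma$ the responsible term is falsified and indistinguishable from the terms already killed by $\rho$. Your substitute decoding rule (``the first term all of whose remaining literals are satisfied once the advice-indicated variables are un-set'') is circular: the positional advice is only meaningful relative to the responsible term, so it cannot be used to locate that term. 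The fix is the standard one --- use the term-satisfying $\sigma$ and add one value bit per variable to the advice --- and the constant still closes: $(3p)^{t}(4w)^{t}\cdot 2^{t}\cdot s = s(24pw)^{t}$. Two smaller slips: ``the leftmost root-to-leaf path of $T_\rho$'' must be replaced by a path of length $>t$ (the leftmost branch may be short), and the truncation issues you flag at the end (a final fragment shorter than $\ell$, a final segment that does not complete its term) each cost a further advice bit per fragment or segment, which the slack in the constants absorbs.
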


Theorem~\ref{thm:H14-SL} is the main tool we use to simplify any $\{\SYM,\THR\} \circ \acz$ circuit down to an $\mathscr{F}_{\text{simple}}$-circuit.  Conceptually, we think of this transformation as being done in three steps: 

\begin{enumerate}
\item (Main step) Apply a random restriction $\brho' \leftarrow \calR_{p}$ to convert a $\{\SYM,\THR\} \circ \acz$ circuit into a decision tree with a $\{\SYM,\THR\} \circ \DT$ circuit at each leaf.
\item Observing that $\{\SYM,\THR\} \circ \DT \equiv \{\SYM,\THR\} \circ \AND$, this is equivalent to a decision tree with a $\{\SYM,\THR\} \circ \AND$ circuit at each leaf.
\item Trim the fan-in of the $\AND$ gates in each $\{\SYM,\THR\} \circ \AND$ circuit (by increasing the depth of the decision tree).  The last step in the draw of a random restriction $\brho$ from the overall fair distribution $\calR$ corresponds to a random walk down this final decision tree.
\end{enumerate} 


In the rest of this section we describe each of these steps in detail and thereby prove Lemma~\ref{lem:fair-1}.

\pparagraph{First (main) step.}
If $g$ is a Boolean function and $\calC$ is a class of circuits, we say that
$g$ is \emph{computed by a $(d,\calC)$-decision tree} if $g$ is computed by a decision tree of depth $d$ (with a single Boolean variable at each internal node as usual) in which each leaf is labeled by a function from $\calC.$  We require the following corollary of Theorem~\ref{thm:H14-SL}:

\begin{corollary}
\label{cor:bash} Let $G$ be any Boolean function and $\G$ be a gate computing $G$, and let $F$ be a  $\G \circ \acz_d$ circuit of size $s$ . Then for $p = \frac1{48}(48\log s)^{-(d-1)}$ and any $t \geq 1$,
\[ \Prx_{\brho'\leftarrow\calR_{p}}\big[\,F\uhr \brho' \text{~is not computed by a $(2^dt,\G \circ \DT_{\log s})$-decision tree}\,\big]  \le s \cdot 2^{-t}.\] 
\end{corollary}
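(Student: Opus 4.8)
The plan is to obtain Corollary~\ref{cor:bash} from $d$ successive applications of the multi-switching lemma (Theorem~\ref{thm:H14-SL}), peeling the $\acz_d$ subcircuits feeding $\G$ one layer at a time. First I would factor the $\ast$-probability as
\[ p \;=\; q_0\cdot q_1\cdots q_{d-1}, \qquad q_0 = \tfrac1{48},\quad q_1=\cdots=q_{d-1}=\tfrac1{48\log s}, \]
which is exactly $\frac1{48}(48\log s)^{-(d-1)}$, and I would realize $\brho'\leftarrow\calR_p$ as the composition $\brho_0\brho_1\cdots\brho_{d-1}$ of independent restrictions $\brho_i\leftarrow\calR_{q_i}$ (composing restrictions with $\ast$-probabilities $q_i$ gives $\ast$-probability $\prod_i q_i$). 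This particular split is chosen so that in every round the quantity ``$24\cdot(\ast\text{-probability})\cdot(\text{bottom fan-in})$'' equals $\tfrac12$, making each invocation of Theorem~\ref{thm:H14-SL} fail with probability at most $s\cdot 2^{-t}$.

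Round~$0$ handles the obstacle that an arbitrary $\acz_d$ circuit has unbounded bottom fan-in, so Theorem~\ref{thm:H14-SL} cannot be applied to its bottom two layers as stated. I would use the standard device of viewing each literal as a trivial depth-$1$ decision tree: then every bottom-layer gate of every one of the $\le s$ $\acz_d$ subcircuits feeding $\G$ is a depth-$2$ circuit with bottom fan-in $1$, and applying Theorem~\ref{thm:H14-SL} to this family (of at most $s$ members) with $\brho_0\leftarrow\calR_{q_0}$ yields, except with probability $s\cdot2^{-t}$, a common $(\log s)$-partial restriction tree of depth $\le t$ after which every bottom-layer gate collapses to a depth-$\log s$ decision tree. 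Rewriting each such decision tree in the normal form of the gate just above it---a depth-$\ell$ decision tree is simultaneously an $\ell$-DNF and an $\ell$-CNF---and absorbing it into that gate leaves each subcircuit a depth-$d$ circuit of bottom fan-in at most $\log s$.

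From here I would iterate. In round~$i$ (for $1\le i\le d-1$) the subcircuits feeding $\G$ have bottom fan-in $\le\log s$, so their bottom two layers constitute a family of at most $s$ many $\log s$-DNFs (or $\log s$-CNFs)---i.e.\ depth-$2$ circuits with bottom fan-in $\log s$---and Theorem~\ref{thm:H14-SL} with $\brho_i\leftarrow\calR_{q_i}$ (again the product is $\tfrac12$) produces, except with probability $s\cdot2^{-t}$, a common $(\log s)$-partial restriction tree of depth $\le t$ collapsing each of those depth-$2$ pieces to a depth-$\log s$ decision tree; converting to the opposite normal form and merging into the layer above reduces the depth of each subcircuit by one. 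After round~$0$ and these $d-1$ depth-reducing rounds, every $\acz_d$ subcircuit has been restricted to a single depth-$\log s$ decision tree, so $F\uhr\brho'$ is a $\G\circ\DT_{\log s}$ circuit sitting at the leaves of the decision tree obtained by stacking the $d$ common restriction trees. The restriction-tree depths accumulate and a union bound over the $d$ rounds controls the failure probability; a routine (if slightly tedious) accounting then gives the stated $(2^d t,\G\circ\DT_{\log s})$-decision tree with failure probability $s\cdot2^{-t}$, the factor $2^d$ comfortably absorbing the $d$-fold accumulation and a little extra restriction-tree budget distributed across the rounds folding the $d$-fold union bound into the clean $s\cdot2^{-t}$ bound.

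The one genuinely delicate point---the part needing the most care---is the chaining of the random restrictions across rounds: after round $i-1$ some coordinates have been fixed both by $\brho_{i-1}$ and by the walk down that round's common restriction tree, and one has to argue that $\brho_i$ together with the next round's tree walk is consistent with what has already been built. This is exactly why Theorem~\ref{thm:H14-SL} is phrased via \emph{common partial restriction trees} rather than via a single switched function, and why $\calR_p$ is realized as an explicit composition of the $\calR_{q_i}$'s; handling this chaining carefully (in the standard manner, following \cite{Has14}) is the crux, with everything else being bookkeeping.
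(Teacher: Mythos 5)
Your overall plan is the paper's: factor $p=\frac1{48}\cdot(48\log s)^{-(d-1)}$ into one restriction of $\ast$-probability $1/48$ plus $d-1$ restrictions of $\ast$-probability $1/(48\log s)$, start by treating the bottom-layer gates as depth-$2$ circuits of bottom fan-in $1$ so that Theorem~\ref{thm:H14-SL} trims the bottom fan-in to $\log s$, and then peel one layer per round, absorbing the resulting depth-$\log s$ decision trees into the gates above. The gap is in your rounds $i\ge 1$. After the earlier rounds, $F$ restricted so far is a decision tree of some depth $D$ with a \emph{different} $\G\circ\acz$ circuit at each of its (up to $2^{D}$) leaves: the bottom-layer decision trees that were absorbed in the previous round depend on the leaf, so the width-$\log s$ depth-$2$ pieces you want to switch are leaf-dependent, and there is no single family of at most $s$ depth-$2$ circuits to which one invocation of Theorem~\ref{thm:H14-SL} applies ``except with probability $s\cdot 2^{-t}$.'' You must invoke the lemma separately at each leaf and union bound over the $2^{D}$ leaves, which forces the lemma's `$t$'-parameter in that round to be taken as $D+t$ (so the per-leaf failure $s_i\cdot 2^{-(D+t)}$ survives the $2^{D}$-fold union bound). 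This is precisely why the accumulated depth doubles each round, $t, 3t, 7t,\dots,(2^d-1)t$: the factor $2^d$ in the statement is not slack ``comfortably absorbing'' an additive $d$-fold accumulation (that would only cost about $dt$), it is the exact outcome of this doubling schedule, and the budget is essentially tight.

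Relatedly, your proposal to recover the clean $s\cdot 2^{-t}$ by distributing ``a little extra restriction-tree budget'' across rounds to pay for a $d$-fold union bound does not fit inside depth $2^d t$ for all $t\ge 1$ once the doubling is accounted for. The paper avoids any such surcharge: it first makes the circuit layered (costing only a factor $d$ in size, absorbed elsewhere), writes $s=s_1+\cdots+s_d$ for the layer sizes, and charges round $i$ to the $s_i$ gates of layer $i$, so round $i$ fails with probability at most $s_i\cdot 2^{-t}$ and the total is $\sum_i s_i\cdot 2^{-t}\le s\cdot 2^{-t}$ with no extra depth spent. Finally, the ``chaining'' issue you single out as the crux is in fact routine: the common partial restriction trees built in earlier rounds are part of the \emph{representation} of $F\uhr\brho_0\cdots\brho_{i-1}$ (the corollary asserts a $(2^dt,\G\circ\DT_{\log s})$-decision tree, not a further restriction), so round $i$ simply conditions on the previous outcomes and applies fresh randomness $\brho_i$; the step that actually needs care is the leafwise union bound and the attendant growth of the `$t$'-parameter described above.
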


\begin{proof} 
We may assume without loss of generality that the depth-$(d+1)$ circuit $F$ is \emph{layered}, meaning that for
any gate $g$ it contains, every directed path from an input variable to $g$ has the same length (converting an unlayered circuit to a layered one increases its size only by a factor of $d$, which is negligible for our purposes). Let $s_i$ denote the number of gates in layer $i$ (at distance $i$ from the inputs), so $s = s_1 + \cdots + s_d$.

We begin by trimming the bottom fan-in of $F$: applying Theorem~\ref{thm:H14-SL} with $\mathscr{F}$ being the $s_1$ many bottom layer gates of $F$ (viewed as depth-$2$ circuits of bottom fan-in $w=1$) and $p_0 := 1/48$, we get that 
\[ \Prx_{\brho_0\leftarrow\calR_{p_0}}\big[\, F\uhr\brho_0\text{~is not computed by a $(t,G \circ \acz$(depth $d$, bottom fan-in $\log s))$-decision tree}\,\big] \le  s_1 \cdot 2^{-t}.\]

Let $F^{(0)}$ be any good outcome of the above, a $(t,G \circ \acz$(depth $d$, bottom fan-in $\log s))$-decision tree. 
Note that there are at most $2^t$ many $\acz(\text{depth $d$, fan-in $\log s$})$ circuits at the leaves of the depth-$t$ decision tree.  Applying Theorem~\ref{thm:H14-SL} to each of them with $p_1 := 1/(48\log s)$ (and the `$t$' of Theorem~\ref{thm:H14-SL} being $2t$) and taking a union bound over all $2^t$ many of them, we get that 
\begin{align*} &\Prx_{\brho_1\leftarrow\calR_{p_1}}\big[\, F^{(0)}\uhr\brho_1\text{~is not a $(t+2t,G \circ \acz$(depth $d-1$, fan-in $\log s))$-decision tree}\,\big]  \\
&\le s_2 \cdot 2^{-2t} \cdot 2^t = s_2 \cdot 2^{-t}. 
\end{align*}
Repeat with $p_2 = \ldots = p_{d-1} := 1/(48\log s)$, each time invoking Theorem~\ref{thm:H14-SL} with its `$t$' being the one more than the current depth of the decision tree . The claim then follows by summing the $s_1 2^{-t}$, $s_2 2^{-t}, \dots,
s_d 2^{-t}$ failure probabilities over all $d$ stages and the fact that 
\[ \prod_{j=0}^{d-1} p_i = \frac1{48}\cdot \frac1{(48\log s)^{d-1}} = p.\] 
\vskip -.4in\end{proof} 

\bigskip

\pparagraph{Second step:  From $\{\SYM,\THR\} \circ \DT$ to $\{\SYM,\THR\} \circ \AND$.}  We recall the following fact from~\cite{HM04}: 

\begin{fact}
\label{fact:fold} 
Every $\SYM_s \circ \DT_{\log s}$ function (resp.~$\THR_s \circ \DT_{\log s}$) can be computed by a $\SYM_{s^2} \circ \AND_{\log s}$ (resp.~$\THR_{s^2} \circ\AND_{\log s}$) circuit. 
\end{fact}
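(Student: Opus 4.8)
The plan is to use the standard correspondence ``each leaf of a shallow decision tree is a narrow $\AND$'', together with the sharper observation that a depth-$\log s$ decision tree, viewed as a $\{0,1\}$-valued function, is \emph{arithmetically} equal to the \emph{sum} (not merely the $\OR$) of the $\AND$-terms attached to its $1$-leaves, because on any input exactly one leaf is reached. First I would fix one of the $s$ decision trees $T_i$ feeding the top gate. It has at most $2^{\log s}=s$ leaves, and each root-to-leaf path of length $\ell\le\log s$ corresponds to the conjunction $A_{i,\ell}$ of the $\ell\le\log s$ literals read along the path (each queried variable appearing positively or negatively according to the branch taken). Restricting attention to the leaves labeled $1$, say there are $r_i\le s$ of them, with corresponding conjunctions $A_{i,1},\dots,A_{i,r_i}$. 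For every input $x$ the computation of $T_i$ reaches exactly one leaf, so exactly one $\AND$-term over \emph{all} leaves of $T_i$ is satisfied; restricting to the $1$-leaves this gives the key identity $\sum_{\ell=1}^{r_i}A_{i,\ell}(x)=T_i(x)\in\{0,1\}$, an equality of integers.

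For the $\SYM$ case, let $\phi\colon\{0,\dots,s\}\to\{0,1\}$ be the symmetric predicate with which the circuit computes $\phi\!\left(\sum_{i=1}^s T_i(x)\right)$. Summing the identity above over $i$ gives $\sum_{i=1}^s T_i(x)=\sum_{i=1}^s\sum_{\ell=1}^{r_i}A_{i,\ell}(x)$, which is a sum of at most $\sum_i r_i\le s^2$ conjunctions, each of width at most $\log s$. Since this integer always lies in $\{0,\dots,s\}$, I define a symmetric function $\psi$ on $s^2$ bits by $\psi(w):=\phi(w)$ for $w\le s$ and $\psi(w):=0$ for $w>s$ (the latter case never arising); then $\psi$ applied to the bits $\{A_{i,\ell}(x)\}_{i,\ell}$ equals $\phi\!\left(\sum_i T_i(x)\right)$, exhibiting the original function as a $\SYM_{s^2}\circ\AND_{\log s}$ circuit.

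For the $\THR$ case, suppose the top gate outputs $1$ iff $\sum_{i=1}^s w_i T_i(x)\ge\theta$. Substituting $T_i(x)=\sum_{\ell=1}^{r_i}A_{i,\ell}(x)$ yields $\sum_{i=1}^s w_i T_i(x)=\sum_{i=1}^s\sum_{\ell=1}^{r_i}w_i A_{i,\ell}(x)$, a linear form in the width-$\le\log s$ conjunctions $A_{i,\ell}$ having at most $s^2$ terms, where the $(i,\ell)$-th term carries weight $w_i$; thresholding this form at $\theta$ is a $\THR_{s^2}\circ\AND_{\log s}$ circuit computing the same function. I do not expect a genuine obstacle here; the only point that warrants care is the appeal to ``exactly one leaf is reached'' to obtain the \emph{arithmetic} identity $\sum_\ell A_{i,\ell}=T_i$ over $\mathbb{Z}$ rather than the weaker Boolean identity $\bigvee_\ell A_{i,\ell}=T_i$, since it is precisely this arithmetic identity that lets the outer $\SYM$ or $\THR$ gate ``see through'' the replacement of each decision tree by its collection of $\AND$-terms. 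The remaining bookkeeping is just the leaf-count bound $2^{\log s}=s$ and hence the bound $s^2$ on the total number of $\AND$ gates.
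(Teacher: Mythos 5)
Your proof is correct and is exactly the paper's argument: the paper's one-line justification (``a decision tree is a DNF over its $1$-leaves in which at most one term is ever satisfied'') is precisely your arithmetic identity $\sum_{\ell} A_{i,\ell}(x)=T_i(x)$, which lets the top $\SYM$ or $\THR$ gate absorb the at most $s\cdot s=s^2$ width-$\log s$ conjunctions. You have simply spelled out the bookkeeping that the paper leaves implicit, so there is nothing to add.
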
 

(This is an easy consequence of the fact that any decision tree may be viewed as a DNF whose terms corresponds to the paths to 1-leaves, and that this DNF has the property that any input assignment makes at most one term true.)  Applying Fact~\ref{fact:fold} and choosing $t = m/2^{d+1}$ in Corollary~\ref{cor:bash}, (where $m = \Theta(\sqrt{n/\log n})$ will be defined precisely in the next section), we get the following special case of Corollary~\ref{cor:bash}: 

\begin{corollary} \label{cor:collapse}
Let $F$ be a $\{\SYM,\THR\} \circ \acz_d$ circuit of size $s =  n^{\tau \log n}$. Then for $p = \frac1{48}(48\log s)^{-(d-1)}$,
\ignore{\rnote{The first line of the math was
\[ 
\Prx_{\brho\leftarrow\calR_p}\big[\,F\uhr \brho \text{~is not computed by a $(m/2,\{\SYM_{s^2},\THR_{s^2}\}\circ \red{\DT_{\log s}})$-decision tree}\,\big]
\]
but the $\DT$ should be $\AND$, right?}}
\begin{align} 
& \Prx_{\brho'\leftarrow\calR_p}\big[\,F\uhr \brho' \text{~is not computed by a $(m/2,\{\SYM_{s^2},\THR_{s^2}\}\circ \AND_{\log s})$-decision tree}\,\big] \nonumber \\
& \le  s \cdot 2^{-t} = s \cdot 2^{-m/2^{d+2}} \nonumber \\
&= \exp(-\Omega_d(\sqrt{n/\log n})) \ := \ \gamma_{\SL}. \label{eq:gamma-SL}
\end{align}
\end{corollary}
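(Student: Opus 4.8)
The plan is to derive Corollary~\ref{cor:collapse} directly from Corollary~\ref{cor:bash} together with the folding fact (Fact~\ref{fact:fold}), by instantiating the free parameter $t$ appropriately. First I would view the given $\{\SYM,\THR\}\circ\acz_d$ circuit $F$ as a $\G\circ\acz_d$ circuit whose top gate $\G$ is a $\SYM$ (resp.\ $\THR$) gate, so that $G$ is a symmetric (resp.\ threshold) function of the at most $s$ outputs of the underlying $\acz_d$ part; Corollary~\ref{cor:bash} then applies verbatim with this $\G$ and with $\ast$-probability $p=\frac{1}{48}(48\log s)^{-(d-1)}$, which is exactly the value of $p$ in the statement. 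I would then choose $t$ so that the decision-tree depth $2^d t$ produced by Corollary~\ref{cor:bash} equals exactly $m/2$, i.e.\ $t:=m/2^{d+1}$. With this choice, Corollary~\ref{cor:bash} gives that, except with probability $s\cdot 2^{-t}=s\cdot 2^{-m/2^{d+1}}$ over $\brho'\leftarrow\calR_p$, the restricted circuit $F\uhr\brho'$ is computed by an $(m/2,\{\SYM_s,\THR_s\}\circ\DT_{\log s})$-decision tree, i.e.\ a depth-$(m/2)$ decision tree each of whose leaves is a single $\SYM$ or $\THR$ gate fed by at most $s$ decision trees of depth at most $\log s$.

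Next I would apply Fact~\ref{fact:fold} leaf by leaf: every $\SYM_s\circ\DT_{\log s}$ (resp.\ $\THR_s\circ\DT_{\log s}$) leaf function can be recomputed by a $\SYM_{s^2}\circ\AND_{\log s}$ (resp.\ $\THR_{s^2}\circ\AND_{\log s}$) circuit. Since this rewriting only relabels the leaves and leaves the underlying decision tree untouched, it converts the $(m/2,\{\SYM_s,\THR_s\}\circ\DT_{\log s})$-decision tree into an $(m/2,\{\SYM_{s^2},\THR_{s^2}\}\circ\AND_{\log s})$-decision tree at no additional cost in failure probability, which is precisely the object claimed in Corollary~\ref{cor:collapse}. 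It then only remains to bound $s\cdot 2^{-t}$: plugging in $s=n^{\tau\log n}$ (so $\log s=\tau(\log n)^2$) and $m=\Theta(\sqrt{n/\log n})$, and using that for any fixed $d$ one has $(\log n)^2=o_d\big(\sqrt{n/\log n}\big)$, we get $s\cdot 2^{-t}=2^{\tau(\log n)^2-m/2^{d+1}}\le 2^{-m/2^{d+2}}=\exp(-\Omega_d(\sqrt{n/\log n}))=\gamma_{\SL}$ for $n$ large enough.

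I do not anticipate any real obstacle: all of the substance is already contained in Corollary~\ref{cor:bash} (which itself packages the \cite{Has14} multi-switching lemma through a short induction on depth) and in Fact~\ref{fact:fold}. The only points needing a little care are the bookkeeping for $t$---choosing $t=m/2^{d+1}$ so that the unavoidable factor-$2^d$ blow-up in decision-tree depth in Corollary~\ref{cor:bash} leaves us at depth exactly $m/2$---and the routine observation that the quasipolynomial circuit size $s=n^{\tau\log n}$ is dwarfed by the $2^{\Theta_d(\sqrt{n/\log n})}$ savings from the switching lemma, so that the union-bound loss of a multiplicative factor of $s$ is asymptotically harmless.
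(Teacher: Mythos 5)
Your proposal is correct and follows essentially the same route as the paper: instantiate Corollary~\ref{cor:bash} with $\G$ the top $\SYM$/$\THR$ gate and $t=m/2^{d+1}$, then fold the $\{\SYM_s,\THR_s\}\circ\DT_{\log s}$ leaves into $\{\SYM_{s^2},\THR_{s^2}\}\circ\AND_{\log s}$ circuits via Fact~\ref{fact:fold}, and absorb the factor $s=2^{\tau(\log n)^2}$ into the exponentially small bound. Your explicit check that $s\cdot 2^{-m/2^{d+1}}\le 2^{-m/2^{d+2}}$ for large $n$ also cleanly reconciles the paper's minor $2^{d+1}$ versus $2^{d+2}$ bookkeeping in the displayed bound.
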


\pparagraph{Third step:  Trimming to reduce bottom fan-in.}  
The $\{ \SYM,\THR\} \circ \AND$ circuits hanging off the leaves of our decision tree have bottom fan-in at most $\log s$, but we will need them to have fan-in at most $k$ in order to invoke the~\cite{BNS92} lower bound later. At each leaf $\ell$ we achieve this smaller fan-in by  identifying a set (call it $S_\ell$) of additional variables and restricting them in all possible ways; we argue that every fixing of the variables in $S_\ell$ gives the desired upper bound of $k$ on the bottom-$\AND$ fan-in. We use a probabilistic argument to establish the existence of the desired set $S_\ell$ (this is important because in the next section we will need each $S_\ell$ to satisfy an additional property, and the probabilistic argument makes it easy to achieve this).

 Let us write ``$\bL \sse_q X$'' to indicate that $\bL$ is a subset of $X$ that is randomly chosen by independently including each element of  $X$ with probability $q$.  We will use the following easy result:
\begin{fact}
\label{fact:trim}
Let $\{ C_1,\ldots,C_{s^2}\}$ be a collection of subsets of $[n]$ where each $|C_i| \leq w.$ Then for $\bL \sse_q [n]$ and $k \le w$, we have
\begin{align*} \Prx_{\bL \sse_q [n]}\big[\,\exists\, i \in [s^2] \text{~such that $|C_i \cap \bL| > k$} \,\big] &\le s^2 {w\choose k} q^k. \end{align*}
\end{fact}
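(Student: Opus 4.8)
The plan is to prove Fact~\ref{fact:trim} by a two-level union bound: an outer union bound over the $s^2$ sets $C_1,\dots,C_{s^2}$, and, for each fixed $C_i$, an inner union bound over the $k$-element subsets of $C_i$ that could witness the bad event $|C_i \cap \bL| > k$.

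First I would fix an index $i \in [s^2]$ and bound $\Prx_{\bL \sse_q [n]}[\,|C_i \cap \bL| > k\,]$. Observe that this event is contained in the event $\{\,|C_i \cap \bL| \ge k\,\}$, which holds if and only if there is some $k$-element set $A \subseteq C_i$ with $A \subseteq \bL$. Since $\bL \sse_q [n]$ includes each coordinate of $[n]$ independently with probability $q$, for any fixed set $A$ with $|A| = k$ we have $\Pr[A \subseteq \bL] = q^k$. There are exactly $\binom{|C_i|}{k} \le \binom{w}{k}$ candidate sets $A$ (using $|C_i| \le w$ and the monotonicity of $j \mapsto \binom{j}{k}$), so a union bound over them gives $\Prx_{\bL \sse_q [n]}[\,|C_i \cap \bL| > k\,] \le \binom{w}{k} q^k$.

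Finally I would take the outer union bound over $i \in [s^2]$, which yields $\Prx_{\bL \sse_q [n]}[\exists\, i \in [s^2] : |C_i \cap \bL| > k] \le \sum_{i=1}^{s^2}\Prx_{\bL \sse_q [n]}[\,|C_i \cap \bL| > k\,] \le s^2\binom{w}{k}q^k$, which is the claimed bound. There is no genuine obstacle here; the only minor point worth noting is that we bound the probability of $|C_i \cap \bL| > k$ (equivalently $|C_i \cap \bL| \ge k+1$) by the probability of the larger event $|C_i \cap \bL| \ge k$, which loses nothing since the stated bound $\binom{w}{k}q^k$ already suffices for that weaker event.
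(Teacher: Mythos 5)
Your proof is correct, and it is the standard two-level union bound that the paper implicitly has in mind (the paper states Fact~\ref{fact:trim} as an ``easy result'' without writing out a proof). Relaxing $|C_i \cap \bL| > k$ to $|C_i \cap \bL| \ge k$ and counting $k$-subsets $A \subseteq C_i$ with $A \subseteq \bL$, each occurring with probability $q^k$, gives exactly the stated bound $s^2\binom{w}{k}q^k$, so there is nothing missing.
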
 
 
Recall that $s = n^{\tau \log n}$ where $\tau > 0$ is a small absolute constant to be specified later and that $k = 0.0005\log n$.
We set 
\[ q := \frac{k}{e\log s}\cdot 2^{-(3\log s)/k} = \frac1{\Theta(\log n)} \cdot n^{-\Theta(1) \cdot \tau}< n^{-0.01}, \]
where the last inequality holds for a suitably small choice of the constant $\tau$. 
Observe that $q$ is chosen so as to ensure  
\begin{equation}
s^2 {\log s \choose k}q^k \le 2^{2\log s} \left(\frac{e\log s}{k}\cdot  q \right)^k = \frac1{s} \ll 1.
\label{eq:zero-point-one}
\end{equation}

Fix $T$ to be an $(m/2,\{\SYM_{s^2},\THR_{s^2}\}\circ \AND_{\log s})$-decision tree as given by Corollary~\ref{cor:collapse}.  At each leaf $\ell$ of $T$, draw a set $\bL(\ell) \sse_q [n]$ and let $\bS_\ell$ be $([n] \setminus \fixed(\ell)) \setminus \bL(\ell)$, where
$\fixed(\ell) \subseteq [n]$ is the subset of variables that are fixed on the root-to-$\ell$ path in $T$. By Fact~\ref{fact:trim} and (\ref{eq:zero-point-one}), at each leaf $\ell$ it is the case that with probability at least $1-1/s$ over the random draw of $\bL(\ell)$, every extension of the root-to-$\ell$ path in $T$ that additionally fixes all the variables in $\bS_\ell$ collapses the $\{\SYM,\THR\}\circ \AND_{\log s}$ circuit that was at $\ell$ in $T$ down to a $\{\SYM,\THR\}\circ \AND_{k}$ circuit.  We say that such an outcome of $\bL(\ell)$ is a \emph{good} outcome (we will refer back to this notion in the next section).

In summary, the above discussion establishes Lemma \ref{lem:fair-1}, where the fair distribution $\calR$ corresponds to 

\begin{itemize}

\item [(a)] first drawing $\brho' \leftarrow \calR_{p}$, 

\item [(b)] then walking down a random root-to-leaf path $\pi$ in the resulting depth-$(m/2)$ decision tree given by Corollary~\ref{cor:collapse}, 

\item [(c)] and then finally, at the resulting leaf $\ell$, choosing a random assignment to the variables in the set $S_\ell$ that corresponds to $L(\ell)$, where $L(\ell)$ is a good outcome of the random variable $\bL(\ell) \subseteq_q [n].$  (Note that the randomness over $\bL(\ell)$ is not part of the random draw of $\brho \leftarrow \calR$; all we require is the existence of a good $L(\ell).$)

\end{itemize}

Based on our discussion thus far each $L(\ell)$ may be fixed to be any good outcome of $\bL(\ell)$; we will give an additional stipulation on $L(\ell)$ in Remark \ref{rem:stipulation}.

%
%
%
%

\section{Ingredient (2) (target retains structure): $\GIP\circ \mathsf{PAR}$ under random restrictions}
\label{sec:target}

Like~\cite{Vio06,LS11}, our hard function will be the generalized inner product function composed with parity:
\begin{equation} \RW_{m,k,r}(x) = \bigoplus_{i=1}^m \bigwedge_{j=1}^{k+1} \bigoplus_{\ell=1}^r x_{i,j,\ell}.
\label{eq:RW}
\end{equation}
This function was introduced by Razborov and Wigderson~\cite{RW93} to show $n^{\Omega(\log n)}$ lower bounds against depth-3 threshold circuits with $\AND$ gates at the bottom layer.  We will set 
\[ m = r = \sqrt{n/(k+1)} \quad \text{(recall that~}k = 0.0005\log m\text{)}. \] 
Note that $m= r = \Theta(\sqrt{n/\log n})$ and $k = \Theta(\log n)$. Given parameters $m',k',r'$, we say that a function $g: \zo^n \to \zo$ \emph{contains a perfect copy of $\RW_{m',k',r'}$} if there is a restriction $\kappa$ such that $(g \uhr \kappa)(x) = b \oplus \bigoplus_{i=1}^{m'} \bigwedge_{j=1}^{k+1} \left( b_{i,j} \bigoplus_{\ell=1}^{r'} x_{i,j,k}\right)$ for some bits $b,b_{i,j}.$

Roughly speaking, the motivation behind augmenting $\GIP$ with a layer of parities is to ensure that $\RW$ is resilient to random restrictions (i.e.~that $\RW\uhr\brho$ ``remains complex'', containing a copy of $\GIP$ with high probability after a suitable random restriction). In our setting we need that $\RW$ is resilient to a random restriction $\brho \leftarrow \calR$ for the fair distribution $\calR$ from Lemma~\ref{lem:fair-1}; we establish this in the rest of this section.\ignore{with $*$-probability $p = 1/(\log n)^{\Theta(d)}$, i.e. the random restriction in Corollary~\ref{cor:collapse}; this is given by Proposition \ref{prop:chernoff} below.}  

\begin{proposition}
\label{prop:chernoff} 
Consider the space of formal variables of $\RW_{m,k,r} : \zo^n \to \zo$: 
\[ X = \big\{ x_{i,j,t} \colon (i,j,t) \in [m] \times [k+1] \times [r]\big\}, \quad |X| = m(k+1)r := n. \] 
Then for $p = \frac1{48}(48\log s)^{-(d-1)}$ (as in Corollary~\ref{cor:collapse}),
\begin{align*}
 \Prx_{\bL \sse_p X}\bigg[\, \exists\, (i,j) \colon \big| \{ t \in [r] \colon x_{i,j,t} \in \bL\}\big| < \frac{pr}{2} \bigg] &\le  m(k+1) \cdot \exp\left(-\Omega(pr)\right).
 \end{align*}
\end{proposition}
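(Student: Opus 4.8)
The plan is a routine Chernoff bound followed by a union bound over all $m(k+1)$ pairs $(i,j)$. Fix a pair $(i,j) \in [m]\times[k+1]$ and consider the $r$ variables $\{x_{i,j,t} : t \in [r]\}$. Since $\bL \subseteq_p X$ includes each variable of $X$ independently with probability $p$, the random variable $N_{i,j} := |\{ t \in [r] : x_{i,j,t} \in \bL\}|$ is a sum of $r$ independent Bernoulli$(p)$ random variables, so $\Ex[N_{i,j}] = pr$. By the multiplicative Chernoff bound (lower tail), $\Pr[N_{i,j} < pr/2] \le \exp(-pr/8) = \exp(-\Omega(pr))$.

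The second step is a union bound:
\[ \Prx_{\bL \sse_p X}\bigg[\, \exists\, (i,j) \colon N_{i,j} < \frac{pr}{2} \bigg] \le \sum_{(i,j) \in [m]\times[k+1]} \Prx_{\bL}\bigg[N_{i,j} < \frac{pr}{2}\bigg] \le m(k+1)\cdot \exp(-\Omega(pr)),\]
which is exactly the claimed inequality.

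There is no real obstacle here — the only thing to be slightly careful about is that the events $\{N_{i,j} < pr/2\}$ for different $(i,j)$ are not independent, but the union bound does not require independence, so this causes no difficulty. It is worth remarking (for the reader's orientation, though not strictly needed for this proposition) that with the stated parameters $pr = \Theta(n^{0.499}/\polylog(n))$ grows polynomially in $n$ while $m(k+1) = \poly(n)$, so the right-hand side is $\exp(-\Omega(pr)) = \exp(-n^{\Omega(1)})$ after absorbing the polynomial prefactor; this is what makes the bound useful downstream when establishing that $\RW\uhr\brho$ retains a perfect copy of $\GIP$. I would present only the two displayed steps above as the actual proof.
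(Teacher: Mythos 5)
Your proposal is correct and matches the paper's proof, which likewise applies a standard multiplicative Chernoff bound to each $N_{i,j}$ (a $\mathrm{Bin}(r,p)$ variable with mean $pr$) and then union bounds over the $m(k+1)$ pairs $(i,j)$. The paper simply states this in one line, so your write-up is just a more explicit version of the same argument.
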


\begin{proof}
This follows directly from a standard multiplicative Chernoff bound\ignore{(Fact~\ref{fact:chernoff} with $c = 1/2$)} and a union bound over all $(i,j) \in [m]\times [k+1]$.  
\end{proof}

Recall that a random restriction $\brho'\leftarrow\calR_p$ can be thought of as being sampled by first drawing $\bK \sse_p X$ and setting $\brho_i$ to $\ast$ for each $i \in \bK$, and then setting the coordinates of $\brho'$ in $X \setminus \bK$ according to a uniform random draw from $\zo^{X\setminus \bK}.$  Proposition~\ref{prop:chernoff} and the definition of $\RW$ thus yield the following:

\begin{corollary} \label{cor:copy}
For $\brho' \leftarrow \calR_p$, for $p = \frac1{48}(48\log s)^{-(d-1)}$,
 $\RW_{m,k,r}(x) \uhr \brho'$ contains a perfect copy of
\begin{equation*} \RW_{m,k,r'}(x) = \bigoplus_{i=1}^m \bigwedge_{j=1}^{k+1} \bigoplus_{t=1}^{r'} x_{i,j,t}, \qquad \text{where~~} r' = \frac{pr}{2} \ignore{\text{~~and~ $p = \frac1{(\Theta(\log s))^{d-1}}$}}  
\end{equation*}
with failure probability at most 
\begin{equation}\exp(-\Omega(pr)) = \exp\left(-\frac{\sqrt{n/\log n}}{(\Theta(\log s))^{d-1}}\right) := \gamma_{\target}. \label{eq:gamma-target}
\end{equation}
\end{corollary}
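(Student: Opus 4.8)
The plan is to show that Corollary~\ref{cor:copy} follows immediately by combining Proposition~\ref{prop:chernoff} with the definition of $\RW_{m,k,r}$ and the sampling description of $\calR_p$. First I would recall the two-stage view of a draw $\brho' \leftarrow \calR_p$: draw $\bK \sse_p X$ (the coordinates that will become $\ast$), then assign every coordinate of $X \setminus \bK$ a uniform random bit. The event we care about is that $\RW_{m,k,r}\uhr\brho'$ contains a perfect copy of $\RW_{m,k,r'}$, so it suffices to exhibit, for a ``good'' outcome of $\bK$, a restriction $\kappa$ extending $\brho'$ that leaves exactly $r' = pr/2$ of the inner parity variables alive in each block $(i,j)$ and witnesses the required form $b \oplus \bigoplus_i \bigwedge_j (b_{i,j} \oplus \bigoplus_{t} x_{i,j,t})$.

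The key structural point is that $\RW_{m,k,r}$ has a single parity $\bigoplus_{\ell=1}^r x_{i,j,\ell}$ feeding each of its $m(k+1)$ bottom $\AND$ wires, and a parity gate is insensitive to restrictions in the following sense: fixing some of its inputs to constants only flips its output by a constant and leaves a parity over the surviving variables. Concretely, given an outcome of $\bK$, for each $(i,j)$ let $A_{i,j} = \{ t : x_{i,j,t} \in \bK\}$ be the surviving inner variables; then after $\brho'$ the $(i,j)$-th parity computes $c_{i,j} \oplus \bigoplus_{t \in A_{i,j}} x_{i,j,t}$ where $c_{i,j} \in \zo$ is determined by the random bits $\brho'$ assigns to $X \setminus \bK$. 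Now invoke Proposition~\ref{prop:chernoff}: with failure probability at most $m(k+1)\exp(-\Omega(pr))$, every block has $|A_{i,j}| \ge pr/2 = r'$. On this good event, for each $(i,j)$ pick an arbitrary subset $A'_{i,j} \subseteq A_{i,j}$ of size exactly $r'$ and define $\kappa$ to further fix every variable $x_{i,j,t}$ with $t \in A_{i,j} \setminus A'_{i,j}$ to $0$ (this does not change the parity value on the surviving variables), renaming the variables in $A'_{i,j}$ as $x_{i,j,1},\dots,x_{i,j,r'}$. Then $(\RW_{m,k,r}\uhr\brho')\uhr\kappa = b \oplus \bigoplus_{i=1}^m \bigwedge_{j=1}^{k+1}\big(b_{i,j} \oplus \bigoplus_{t=1}^{r'} x_{i,j,t}\big)$ with $b_{i,j} := c_{i,j}$ and $b := 0$, which is precisely a perfect copy of $\RW_{m,k,r'}$.

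Finally I would just bound the failure probability: by the choice $m = r = \sqrt{n/(k+1)}$ we have $pr = p\sqrt{n/(k+1)} = \Theta(\sqrt{n/\log n})/(\Theta(\log s))^{d-1}$, and $m(k+1) = \sqrt{n(k+1)} = \poly(n)$ is absorbed into the exponent, so the bound $m(k+1)\exp(-\Omega(pr))$ simplifies to $\exp(-\Omega(pr)) = \exp\big(-\sqrt{n/\log n}/(\Theta(\log s))^{d-1}\big) =: \gamma_{\target}$, matching~(\ref{eq:gamma-target}). There is essentially no obstacle here: the only thing to be a little careful about is the bookkeeping that fixing ``extra'' alive inner variables to constants preserves the perfect-copy structure (because a parity restricted to a sub-parity plus a constant is still of the required affine form), and that the constants $c_{i,j}$ introduced by the random bits of $\brho'$ are harmlessly absorbed into the $b_{i,j}$'s allowed in the definition of ``contains a perfect copy.''
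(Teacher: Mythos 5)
Your proposal is correct and matches the paper's (very terse) argument: the paper likewise views $\brho' \leftarrow \calR_p$ via the two-stage sampling and invokes Proposition~\ref{prop:chernoff} together with the definition of $\RW$ and of ``contains a perfect copy,'' leaving implicit exactly the bookkeeping you spell out (trimming surplus live variables and absorbing the constants into the $b_{i,j}$'s). Your write-up is simply a more detailed rendering of the same proof.
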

Note that 
\[ r' = \frac{pr}{2} = \frac{\sqrt{n/\log n}}{(\Theta(\log s))^{d-1}} >n^{0.49},\] 
where the inequality uses the fact that $d$ is a constant and the fact that $s = n^{O(\log n)}$; we will use this later.

Corollary \ref{cor:copy} states that with very high probability over $\brho' \leftarrow \calR_p$, the function $\RW_{m,k,r} \uhr \brho'$ ``does not simplify too much''; however we need $\RW_{m,k,r}$ to ``not simplify too much'' under a full random restriction drawn from $\calR$ (recall the discussion at the end of Section~\ref{sec:bash}).  We proceed to establish this.

Fix any outcome $\rho'$ of $\brho' \leftarrow \calR_p$ such that (i) the conclusion of Corollary~\ref{cor:collapse} holds (i.e. $F \uhr \rho'$ is computed by a $(m/2,\{\SYM_{s^2},\THR_{s^2}\}\circ \DT_{\log s})$-decision tree, which we call $T$), and (ii) the conclusion of Corollary~\ref{cor:copy} holds (i.e. $\RW_{m,k,r} \uhr \rho'$ contains a perfect copy of $\RW_{m,k,r'}$).  (A random $\brho' \leftarrow \calR_p$ is such an outcome with probability at least $1 - \gamma_{\SL} - \gamma_{\target}$.)  For ease of notation let us write $\RW'$ to denote $\RW_{m,k,r} \uhr \rho'$.  

Fix any path $\pi$ that reaches a leaf $\ell$ in $T.$  (Note that a random choice of such a path corresponds to part (b) in the random draw of $\brho \leftarrow \calR$, recalling the discussion at the end of Section~\ref{sec:bash}.)  Since $|\pi| \le m/2$, we have that the set 
\[ A_\ell := \{ i \in [m] \colon \pi_{i,j,t} = \ast \text{~for all $j \in [k+1]$ and all $t$}\} \]  
has cardinality at least $m-|\pi| \ge m/2$. In words, at least $m/2$ of the $m$ many depth-$2$ subcircuits of $\RW'$ are completely ``untouched'' by $\pi$.  For part (c) of the draw from $\calR$, recall that the set $L(\ell)$ could be taken to be any good outcome of $\bL(\ell)$, and that a random $\bL(\ell) \subseteq_q [n]$ is good with probability at least $1-1/s.$  By the same Chernoff bound argument as the one in Proposition~\ref{prop:chernoff}, we have that 
\begin{align*} \Prx_{\bL\sse_q [n]}\bigg[\, \exists\, (i,j) \in A_\ell \times [k+1] \colon \big| \{ t  \colon x_{i,j,t} \in \bL(\ell)\}\big| < \frac{qr'}{2} \bigg] &\le  |A_\ell| (k+1)  \exp\left(-\Omega(qr')\right)\\
&\ll \exp(-\Omega(n^{0.48})),
\end{align*}
recalling that $|A_\ell| \leq m,$ $k = \Theta(\log n)$, $q \geq n^{-0.01}$ and $r' > n^{0.49}.$  Since $1-1/s + 1 - \exp(-\Omega(n^{0.48})) > 1$,   
there must exist a good outcome $L(\ell)$ of $\bL(\ell)$ such that for the corresponding $S_\ell$, every restriction $\rho^{\text{trim}}$ fixing precisely the variables in $S_\ell$ is such that $\RW' \uhr \pi \rho^{\text{trim}}$ contains a perfect copy of 
\begin{equation*} \RW_{m,k,r''}(x) = \bigoplus_{i \in A_\ell} \bigwedge_{j=1}^{k+1} \bigoplus_{t=1}^{r''} x_{i,j,t}, \qquad \text{where~~} r'' = \frac{qr'}{2} \gg 1. 
\end{equation*}
Having $r'' \geq 1$ is crucial for us because, together with $|A_\ell| \geq m/2$, it means that $\RW_{m,k,r''}$ contains a perfect copy of $\GIP_{m/2,k+1}$ (i.e. by possibly restricting and renaming some variables of $\RW_{m,k,r''}$ and possibly negating the result, we obtain a function identical to $\GIP_{m/2,k+1}$).  

\begin{remark} \label{rem:stipulation}
We refine the definition of $\calR$ to require that in (c) it use an $L(\ell)$ as specified above at each leaf $\ell$.
\end{remark}

Summarizing, the above discussion establishes that $\RW_{m,k,r}$ ``retains structure'' with high probability under a random $\brho \leftarrow \calR$. The formal statement of this result (incorporating also Lemma~\ref{lem:fair-1}) is as follows:
  
\begin{lemma} \label{lem:retain-structure}
Let $F$ be any $\{\SYM,\THR\} \circ \acz_d$ circuit of size $s=n^{\tau \log n}$.
The fair distribution $\calR$ over restrictions $\brho \in \{0,1,\ast\}^n$ from Lemma~\ref{lem:fair-1} satisfies the following:  With probability 
$1-\gamma_{\SL} - \gamma_{\target}$ over a draw of $\brho \leftarrow \calR$, both of the following hold:

\begin{enumerate}

\item [(i)] $F \uhr \brho$ belongs to  $\{\SYM, \THR\} \circ \AND_k$;

\item [(ii)] $\RW_{m,k,r} \uhr \brho$ contains a perfect copy of $\GIP_{m/2,k+1}$.

\end{enumerate}

\end{lemma}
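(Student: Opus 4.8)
The plan is to collect the pieces established in this section; no new argument is needed beyond bookkeeping. Recall from Lemma~\ref{lem:fair-1} that $\calR$ is sampled in three stages: (a) $\brho' \leftarrow \calR_p$ for $p = \frac1{48}(48\log s)^{-(d-1)}$; (b) a random root-to-leaf walk $\pi$ down the $(m/2, \{\SYM_{s^2},\THR_{s^2}\} \circ \AND_{\log s})$-decision tree $T$ guaranteed by Corollary~\ref{cor:collapse}; and (c) a uniform random assignment to the variables of $S_\ell$, where $\ell$ is the leaf reached by $\pi$ and $S_\ell = [n] \setminus \fixed(\ell) \setminus L(\ell)$ for a good outcome $L(\ell)$ of $\bL(\ell) \sse_q [n]$, now further refined by Remark~\ref{rem:stipulation}.

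First I would condition on stage (a). By Corollary~\ref{cor:collapse}, except with probability $\gamma_{\SL}$ the circuit $F \uhr \brho'$ is computed by such a tree $T$; by Corollary~\ref{cor:copy}, except with probability $\gamma_{\target}$ the function $\RW_{m,k,r} \uhr \brho'$ contains a perfect copy of $\RW_{m,k,r'}$ with $r' > n^{0.49}$. A union bound shows that with probability at least $1 - \gamma_{\SL} - \gamma_{\target}$ both good events occur; fix such an outcome $\rho'$ and write $\RW' := \RW_{m,k,r} \uhr \rho'$.

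Next I would argue that, conditioned on this good $\rho'$, \emph{every} completion of stages (b) and (c) yields a $\brho$ satisfying both (i) and (ii). Fix a path $\pi$ reaching a leaf $\ell$; since $|\pi| \le m/2$, the set $A_\ell$ of blocks $i \in [m]$ left entirely unset by $\pi$ has $|A_\ell| \ge m/2$. As shown just above the lemma statement, by the Chernoff-bound computation, with probability at least $1 - \exp(-\Omega(n^{0.48}))$ over $\bL(\ell) \sse_q [n]$ every block $(i,j) \in A_\ell \times [k+1]$ retains at least $r'' := qr'/2 \ge 1$ of its parity inputs, while (Fact~\ref{fact:trim} with our choice of $q$) with probability at least $1 - 1/s$ the set $S_\ell$ is good for trimming, i.e.\ fixing it drives the bottom $\AND$-fan-in down to $k$. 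Since $(1 - 1/s) + (1 - \exp(-\Omega(n^{0.48}))) > 1$, there is a single outcome $L(\ell)$ with both properties, and Remark~\ref{rem:stipulation} stipulates that $\calR$ uses it. For this $L(\ell)$ and any restriction $\rho^{\text{trim}}$ fixing exactly $S_\ell$: property (i) holds because the $\{\SYM_{s^2},\THR_{s^2}\} \circ \AND_{\log s}$ circuit at $\ell$ collapses to a $\{\SYM,\THR\} \circ \AND_k$ circuit; and property (ii) holds because $\RW' \uhr \pi\rho^{\text{trim}}$ still contains a perfect copy of $\bigoplus_{i \in A_\ell} \bigwedge_{j=1}^{k+1} \bigoplus_{t=1}^{r''} x_{i,j,t}$, which---using $|A_\ell| \ge m/2$ and $r'' \ge 1$---contains (after restricting, renaming variables, and possibly negating) a perfect copy of $\GIP_{m/2,k+1}$.

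Putting these together: the only way a draw $\brho \leftarrow \calR$ fails to satisfy (i) and (ii) is for stage (a) to land on a $\brho'$ that is bad for Corollary~\ref{cor:collapse} or for Corollary~\ref{cor:copy}, which happens with probability at most $\gamma_{\SL} + \gamma_{\target}$. The one point genuinely requiring care is precisely the last one---that stages (b) and (c) contribute \emph{zero} additional failure probability---which rests on the complementary probabilities $1/s$ and $\exp(-\Omega(n^{0.48}))$ at a leaf summing to less than $1$; this in turn uses the earlier parameter choices $q \ge n^{-0.01}$, $r' > n^{0.49}$, $k = \Theta(\log n)$, and $s = n^{O(\log n)}$.
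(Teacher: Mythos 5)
Your proposal is correct and follows essentially the same route as the paper: a union bound over the two stage-(a) events (Corollary~\ref{cor:collapse} and Corollary~\ref{cor:copy}), followed by the observation that for every path $\pi$ the probabilistic-existence argument (the two complementary probabilities $1/s$ and $\exp(-\Omega(n^{0.48}))$ summing to more than $1$) yields a single good $L(\ell)$, fixed via Remark~\ref{rem:stipulation}, so that stages (b) and (c) contribute no additional failure probability. This matches the paper's own argument in Section~\ref{sec:target}, including the roles of $|A_\ell| \ge m/2$ and $r'' = qr'/2 \ge 1$ in extracting the copy of $\GIP_{m/2,k+1}$.
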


\section{Bounding the correlation between the approximator and target post-restriction} 
\label{sec:endgame} 

With Lemma~\ref{lem:retain-structure} in hand it is a simple matter to finish the argument.  Fix any outcome $\rho$ of $\brho \leftarrow \calR$ such that $F \uhr \rho$ and $\RW_{m,k,r} \uhr \rho$ satisfy (i) and (ii) of Lemma~\ref{lem:retain-structure}.
Applying either Fact~\ref{fact:HG} or Theorem~\ref{thm:nisan-PTF} (depending on whether the top gate of $F$ is $\SYM$ or $\THR$ along with the lower bound of Theorem~\ref{thm:BNS}), we get that
\begin{equation}
\Prx_{\bx \leftarrow \zo^n}[\,(F \uhr \rho)(\bx) = (\RW_{m,k,r} \uhr \rho)(\bx)\,] \le
 \frac1{2} + \exp\left(-\Omega(m/4^k)\right) = 1/2 + \gamma_{\corr}, \label{eq:gamma-corr}
\end{equation}
where
\[
\gamma_{\corr} = \exp\left(-\Omega(m/4^k)\right) = \exp \left(-\Omega(m^{0.999})\right) = \exp\left(-\Omega(n^{0.499})\right).
\]
This gives ingredient (3) as described in Section~\ref{sec:struc}.  Recalling the discussion at the start of Section \ref{sec:struc},
Theorem~\ref{thm:cor-bound} follows from Lemma \ref{lem:retain-structure} and (\ref{eq:gamma-corr}).

\bibliography{allrefs}{}
\bibliographystyle{alpha}

\appendix

\section{Applying the \cite{NW94} paradigm to obtain pseudorandom generators from correlation bounds}
\label{sec:NW}

A function $f$ is said to be \emph{$(s,\tau)$-hard for a circuit class $\calC$} if every circuit $C \in \calC$ of size at most $s$ has $\Pr_{\bx}[f(\bx)=C(\bx)] \leq {\frac 1 2} + \tau$, where $\bx$ is a uniform random input string.  If this holds then we say that $f$  gives a \emph{correlation bound} of $\tau$ against $\calC$-circuits of size $s$.

Given a quadruple $(m,r,\ell,s)$ of non-negative integers, a family ${\cal F}=\{T_1,\dots,T_s\}$ of $r$-element subsets of $[m]$ is said to be an \emph{$(m,r,\ell,s)$-design} if for any two distinct subsets $T_i,T_j \in \calF$ we have $|T_i \cap T_j| \leq \ell.$

An $\ANY_t$ gate is a gate that takes in $t$ inputs and computes an arbitrary function from $\zo^t$ to $\zo$.

We recall the Nisan-Wigderson \cite{NW94} translation from correlation bounds to PRGs:
\ignore{Our statement of the NW generator is cribbed from \cite{HS16}.}

\begin{theorem} 
[The Nisan-Wigderson generator] \label{thm:NW} \ignore{ \rnote{This is a somewhat vague statement since we're not explaining what either ``explicit'' means.  If we want to be more precise about the running time of the generator then we should clarify this}} Fix a circuit class $\calC$ and let $m,r,\ell,s \in \N$ be positive parameters with $m \geq r \geq \ell$.  Given an explicit $f: \{0,1\}^r \to \{0,1\}$ that is $(s \cdot 2^\ell,
\eps/s)$-hard for $\calC \circ \ANY_{\log \ell}$ and an explicit $(m,r,\ell,s)$-design, there is an explicit PRG $G:  \zo^m \to \zo^s$ that $\eps$-fools size-$s$ circuits in $\calC$.  (Hence for $s \geq n$, by taking the first $n$ output bits of $G$ there is an explicit PRG mapping $\zo^m$ to $\zo^n$ that $\eps$-fools size-$s$ $n$-variable circuits in $\calC.$)
\end{theorem}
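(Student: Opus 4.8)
\textbf{Proof proposal for Theorem~\ref{thm:NW}.}

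The plan is to recall the classical Nisan--Wigderson argument and simply verify that the parameters line up with the way we have stated it. Let $f : \zo^r \to \zo$ be the hard function and let $\calF = \{T_1,\dots,T_s\}$ be the given $(m,r,\ell,s)$-design. The generator $G : \zo^m \to \zo^s$ is defined in the standard way: on seed $y \in \zo^m$, the $i$-th output bit is $f(y\!\uhr\!{T_i})$, i.e.\ $f$ applied to the $r$ coordinates of $y$ indexed by $T_i$. This is explicit given that $f$ and the design are explicit, and it runs in $\poly(m,s) = \poly(n)$ time. It remains to argue that $G$ $\eps$-fools every size-$s$ circuit $D \in \calC$.

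The argument is the usual hybrid argument. Suppose for contradiction that some size-$s$ $D \in \calC$ distinguishes $G(\by)$ (for $\by \leftarrow \zo^m$) from $\bz \leftarrow \zo^s$ with advantage more than $\eps$. Introduce the $s{+}1$ hybrid distributions that interpolate between $G(\by)$ and the uniform string by replacing output coordinates one at a time; by a triangle-inequality/averaging step there is an index $i$ on which $D$ distinguishes two adjacent hybrids with advantage more than $\eps/s$. A standard application of Yao's distinguisher-to-predictor transformation then yields a (randomized, then fixed by averaging) circuit $P$ that, given the other $s-1$ output bits $f(\by\!\uhr\!{T_{i'}})$ for $i' \ne i$ together with the bits of $\by$ outside $T_i$, predicts $f(\by\!\uhr\!{T_i})$ with probability more than $\tfrac12 + \eps/s$ over $\by$. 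By averaging we may fix the coordinates of $\by$ outside $T_i$ to their best values; writing $\bx = \by\!\uhr\!{T_i} \in \zo^r$ for the remaining uniform input, each of the $s-1$ ``advice'' bits $f(\by\!\uhr\!{T_{i'}})$ becomes a function of $\bx$ that depends on only $|T_i \cap T_{i'}| \le \ell$ of its coordinates (this is exactly where the design property is used), hence is computable by a single $\ANY_\ell$ gate on $\bx$; so wait, the statement says $\ANY_{\log\ell}$, which I should double-check, but the structure is: we obtain a $\calC \circ \ANY_{\text{(intersection size)}}$ circuit on $\bx$ that agrees with $f$ on more than $\tfrac12 + \eps/s$ fraction of inputs, and its size is at most $s$ (for $D$) plus the $s-1 < s$ advice gates, wired appropriately; absorbing the factor $2^\ell$ for expressing an $\ANY_\ell$ gate as a small circuit in $\calC$'s basis if needed, this contradicts the assumed $(s \cdot 2^\ell, \eps/s)$-hardness of $f$ against $\calC \circ \ANY$. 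This contradiction completes the proof, and the final parenthetical remark follows by simply discarding all but the first $n$ of the $s \ge n$ output bits, which can only decrease the distinguishing advantage of any circuit.

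The only genuinely delicate point is bookkeeping: making sure the size bound on the resulting hard-function-approximating circuit is at most $s\cdot 2^\ell$ and that the ``junta arity'' of the advice gates (which the design guarantees is $\le \ell$) matches the $\ANY$-arity in the hardness hypothesis as we have stated it. Everything else --- the hybrid argument, Yao's lemma, fixing coordinates by averaging --- is entirely routine and standard, so I would cite \cite{NW94} (or a textbook treatment) for the details rather than reproduce them, and devote the writeup to pinning down these parameter correspondences so that the entries in Table~1 and the seed length in Theorem~\ref{thm:main} can be read off cleanly.
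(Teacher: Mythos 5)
Your argument is the standard Nisan--Wigderson hybrid/next-bit-prediction proof, which is precisely how the paper treats this statement (it is imported from \cite{NW94} without proof), and your outline --- including fixing the seed coordinates outside $T_i$ by averaging so that each advice bit becomes a junta of arity at most the design intersection bound, and absorbing a $2^\ell$ factor into the size of the approximating circuit --- is correct. On the point you flagged: the advice gates indeed have arity at most $\ell$, not $\log \ell$, so the subscript $\log\ell$ in the statement should be read as a typo for $\ell$; with the paper's choice $\ell = \log s$ this is consistent with the introduction's description of hardness against $\calC \circ \ANY_{\log n}$-type circuits and with the stated size bound $s \cdot 2^{\ell}$.
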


The existence of explicit designs is well known, in particular we recall the following:

\begin{lemma} [Problem 3.2 of \cite{Vadhan12}] \label{lem:design} There is a deterministic algorithm which, for any $r,s \in \N$, runs in time $\poly(m,s)$ and outputs an explicit $(m,r,\ell,s)$-design with $m=O(r^2/s)$ and $\ell \leq \log s.$
\end{lemma}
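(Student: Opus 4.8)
The plan is to construct the design explicitly from graphs of low-degree univariate polynomials over a finite field — the standard ``Reed--Solomon design'' — choosing the field size and the polynomial degree so as to meet the three targets $|T_i|=r$, $\ell\le\log s$, and $m=O(r^2/\log s)$ at once. The organizing principle: if we index sets by the degree-$<d$ polynomials over $\F_q$ and take $T_p$ to be (several disjoint copies of) the graph of $p$, then the degree $d$ controls $\ell$ (two distinct degree-$<d$ polynomials agree in at most $d-1$ points), the count $q^d$ controls $s$, and $q$ together with the number of copies controls $r$ and $m$. Since the statement is only informative when $r\gtrsim\log s$ (which is forced anyway by $m\ge r$ and $m=O(r^2/\log s)$, and always holds in the applications of Theorem~\ref{thm:NW}), I would assume $r\ge\log_2 s$.

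Concretely, I would let $q$ be a prime power with $q\log_2 q=\Theta(r)$ (so $q=\Theta(r/\log r)$, found by brute-force search in $\poly(r)$ time), set $b:=\lceil r/q\rceil=\Theta(\log r)$, set $d:=\lceil\log_q s\rceil$ (checking $1\le d\le q$ in this regime, so that a nonzero degree-$<d$ polynomial cannot vanish on all of $\F_q$), and for each of the $q^d\ge s$ polynomials $p\in\F_q[X]$ of degree less than $d$ define
\[ T_p\ :=\ \big\{\,(c,a,p(a))\ :\ c\in[b],\ a\in\F_q\,\big\}\ \subseteq\ [b]\times\F_q\times\F_q\ =:\ [m]. \]
Then $m=bq^2=O(r^2/\log r)=O(r^2/\log s)$; each $T_p$ has $bq\in[r,r+q)$ elements, and deleting a fixed-in-advance block of fewer than $q$ coordinates trims every $T_p$ down to exactly $r$ elements (deletion can only shrink pairwise intersections); and for $p\ne p'$ the graphs of $p$ and $p'$ agree within each of the $b$ copies in at most $d-1$ places, so $|T_p\cap T_{p'}|\le b(d-1)<b\log_q s$, which the choice of $q$ makes at most $\log_2 s$, i.e.\ $\ell\le\log s$. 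Keeping any $s$ of the $q^d$ polynomials and writing down the corresponding trimmed graphs is a deterministic $\poly(m,s)$-time procedure, so the design is explicit; this proves the lemma. (The extreme regime in which $\log s$ is enormous relative to $r$ is handled separately and trivially: there $\ell=\log s$ is so generous that an arbitrary family of $r$-subsets of an $O(\log s)$-element universe already works.)

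I expect the only real work to be the parameter bookkeeping: verifying $1\le d\le q$ in the relevant regime so the degree genuinely caps the number of agreements, and checking that the $\Theta(\log r)$-factor of ``slack'' between the naive intersection bound $(\log s)/\log r$ (what one copy of the polynomial graph gives) and the target $\log s$ is exactly the budget needed to afford $b=\Theta(\log r)$ copies, thereby inflating the set size from $q$ up to $r$ at the cost of only a factor $b$ (not $\log s$) in the universe. There is no conceptual difficulty — this is Problem~3.2 of~\cite{Vadhan12} — and if one is content with the weaker bound $m=O(r^2)$, which already suffices for every PRG in this paper, one can simply take $q$ to be a prime power just above $r$, let $T_p$ be the graph of $p$ over an $r$-element subset of $\F_q$, and skip the replication and all the bookkeeping entirely.
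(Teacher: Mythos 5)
The paper itself gives no proof of this lemma --- it is invoked as a black box, citing Problem 3.2 of \cite{Vadhan12} --- so your proposal has to be judged on its own merits, and on those merits it has a real parameter gap. (First, note the statement's ``$m=O(r^2/s)$'' is evidently a typo for $m=O(r^2/\ell)$ with $\ell=\log s$, i.e.\ $m = O(r^2/\log s)$, which is how the lemma is applied in Appendix~\ref{sec:NW} and how you read it.) Your replicated Reed--Solomon construction is a correct explicit design: with $q\log_2 q=\Theta(r)$ and $b=\lceil r/q\rceil$ copies you do get $s$ sets of size $r$, pairwise intersections at most $b(d-1)\le \log_2 s$, universe size $m=bq^2=\Theta(rq)=\Theta(r^2/\log r)$, all computable in $\poly(m,s)$ time. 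The flaw is the final step ``$O(r^2/\log r)=O(r^2/\log s)$'': this needs $\log s=O(\log r)$, i.e.\ $s\le r^{O(1)}$, which is not implied by your standing assumption $r\ge\log_2 s$ and, more to the point, fails in this paper's own applications of the lemma --- e.g.\ in the derivation of Theorem~\ref{thm:main} one takes $r=2^{\Theta(\sqrt{\log(s/\varepsilon)})}$, so $\log s=\Theta((\log r)^2)\gg\log r$. Moreover the gap is not repairable by re-tuning $q$ and $b$ inside your scheme: to force $m=rq=O(r^2/\log s)$ you must take $q=O(r/\log s)$ and hence $b=\Theta(\log s)$ copies, and then the intersection bound $b(d-1)\approx\log s\cdot\log_q s$ overshoots the budget $\log s$ by the factor $\log_q s\gg 1$ whenever $s$ is superpolynomial in $r$. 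So the lemma at full strength genuinely needs a different construction --- e.g.\ the greedy one-set-at-a-time construction derandomized by pessimistic estimators / conditional expectations, which attains $m=O(r^2/\log s)$ for all $s\le 2^{O(r)}$ in $\poly(m,s)$ time and is the intended content of the cited problem in \cite{Vadhan12}.

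Two mitigating remarks. You are right that the weaker bound $m=O(r^2)$ (which your simpler ``no replication'' variant, or your main construction, certainly achieves) suffices for every seed-length claim in Appendix~\ref{sec:NW}, since those bounds are of the form $2^{O(\sqrt{\log(s/\varepsilon)})}$ or $2^{O(\log s/\log\log s)}+\polylog(1/\varepsilon)$ and are insensitive to polynomial factors in $r$; so the gap affects the lemma as stated, not the paper's downstream theorems. Also, your side remark about the ``extreme regime'' is off: when $r\ll\log s$ a universe of size $O(\log s)$ does not meet the target $O(r^2/\log s)$ (which is then smaller than $r$, hence unattainable by any family of $r$-sets), so that regime is one where the stated bound is simply vacuous rather than ``trivially handled''; this is harmless but should be stated as such.
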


\pparagraph{A PRG from the \cite{Vio06} correlation bound.}  Viola \cite{Vio06} gives an explicit function $f: \zo^r \to \zo$ and shows that for every constant $d$ there is a constant $c_d$ such that $f$ is $(r^{c_d \log r}, r^{-c_d \log r})$-hard for $\SYM \circ \acz_d$.
Fix any $d$.  Given values for $\eps,s$ let us set the parameters
\[
\ell = \log s, \quad \quad r = 2^{10 \cdot \sqrt{{\frac 1 {c_d}} \log (s / \eps)}}.
\]
It is straightforward to verify that $s \cdot 2^\ell \leq r^{c_d \log r}$ and $\eps/s \geq r^{-c_d \log r}.$  By Lemma \ref{lem:design} there is an explicit $(m,r,\ell,s)$-design with $m =O( r^2 / \ell ) = 2^{O(\sqrt{{\frac 1 {c_d}} \log (s/ \eps)})}$, so applying the Nisan-Wigderson generator, we get that for $s \geq n$, there is an explicit PRG $G: \zo^m \to \zo^n$ with seed length $m = 2^{O(\sqrt{\log(s/\eps)})}$ that $\eps$-fools $n$-variable size-$s$ circuits in $\acz_d.$

\pparagraph{A PRG from the \cite{LS11} correlation bound.}  Lovett and Srinivasan \cite{LS11} give an explicit function $f: \zo^r \to \zo$ such that for every constant $d$ there is a constant $c_d$ such that $f$ is $(r^{c_d \log \log r}, \exp(-r^{1-o(1)})$-hard for $\SYM \circ \acz_d$.  We proceed as above but now choosing
\[
\ell = \log s, \quad \quad r = 2^{{\frac {10} {c_d}} \cdot {\frac {\log s}{\log \log s}}} + \left(\log (s/\eps)\right)^{1+o(1)}.
\]
It is straightforward to verify that $s \cdot 2^\ell \leq r^{c_d \log \log r}$ and $\eps/s \geq \exp(-r^{1-o(1)}).$  By Lemma \ref{lem:design} there is an explicit $(m,r,\ell,s)$-design with $m =O(r^2 / \ell) = 2^{O(\log s / \log \log s)} \cdot ( \log (1/ \eps))^{2+o(1)},$ so applying the Nisan-Wigderson generator, we get that for $s \geq n$, there is an explicit PRG $G: \zo^m \to \zo^n$ with seed length $m = 2^{O(\log s / \log \log s)} + ( \log (1/ \eps))^{2+o(1)}$ that $\eps$-fools $n$-variable size-$s$ circuits in $\SYM \circ \acz_d.$

For $\THR \circ \acz_d$, \cite{LS11} show that the same function $f$ is $(r^{c_d \log \log r}, \exp(-r^{1/2-o(1)}))$-hard for $\THR \circ \acz_d$; a similar analysis to the above gives an explicit PRG $G: \zo^m \to \zo^n$ with seed length $m = 2^{O(\log s / \log \log s)} + ( \log (1/ \eps))^{4+o(1)}$ that $\eps$-fools $n$-variable size-$s$ circuits in $\THR \circ \acz_d.$

\pparagraph{A PRG from our Theorem~\ref{thm:cor-bound}:  Proof of Theorem~\ref{thm:main}.}  Theorem~\ref{thm:cor-bound} gives an explicit
$f: \zo^r \to \zo$ and $\tau>0$ such that for all $d$, $f$ is $(r^{\tau \log r}, \exp(-r^{0.499}))$-hard for $\{\SYM,\THR\} \circ \acz_d.$  This time we choose
\[
\ell = \log s, \quad \quad r = 2^{10 \cdot \sqrt{ {\frac 2 \tau} \log s}} + \left(\log (s/\eps)\right)^{2.005}.
\]
We have $s \cdot 2^\ell \leq r^{\tau \log r}$ and $\eps/s \geq \exp(-r^{0.499}),$ so we get that for $s \geq n$, there is an explicit PRG $G: \zo^m \to \zo^n$ with seed length $m = O(r^2 / \ell) = 2^{O(\sqrt{\log s})} + ( \log (1/ \eps))^{4.01}$ that $\eps$-fools $n$-variable size-$s$ circuits in $\{\SYM,\THR\} \circ \acz_d.$

\pparagraph{A PRG from our Theorem~\ref{thm:cor-bound-many-gates}:  Proof of Corollary~\ref{cor:prg-many-gates}.}
Finally, Theorem~\ref{thm:cor-bound-many-gates} gives an explicit
$f: \zo^r \to \zo$ and $\tau>0$ such that for all $d$, $f$ is $(r^{\tau \log r}, \exp(-r^{0.499}))$-hard for the class of depth-$d$ circuits over $\zo^r$ that contain $r^{0.249}$ many $\SYM$ or $\THR$ gates.  We choose $\ell,r$ as above, so similar to the above, we get that there is an explicit PRG $G: \zo^m \to \zo^n$ with seed length $m = O(r^2 / \ell) = 2^{O(\sqrt{\log s})} + ( \log (1/ \eps))^{4.01}$ that $\eps$-fools $n$-variable size-$s$ depth-$d$ circuits with at most $2^{c \sqrt{\log s}}$ many $\SYM$ or $\THR$ gates.

\section{Proof of Theorem \ref{thm:cor-bound-many-gates}:  Handling multiple $\SYM$ and $\THR$ gates} \label{sec:many-gates}

We prove Theorem \ref{thm:cor-bound-many-gates} via a slight variant of Theorem~\ref{thm:cor-bound} and an argument from \cite{LS11} (a related argument appears in a somewhat different form in \cite{Vio06}).  The variant of Theorem~\ref{thm:cor-bound}, stated as Theorem~\ref{thm:cor-bound-analogue} below, is proved by combining ingredients (1), (2) and (3) as in Section \ref{sec:struc}, but now with the aim of proving a correlation bound against $\ANY_u \circ \{\SYM,\THR\} \circ \acz_d$ circuits rather than $\{\SYM,\THR\} \circ \acz_d$ circuits (where here and throughout this appendix we take $u := n^{0.249}$).  As we describe at the end of this section, once this correlation bound against $\ANY_u \circ \{\SYM,\THR\} \circ \acz_d$ is in place, the extension to circuits with $n^{0.249}$ many $\SYM$ or $\THR$ gates directly follows using an argument from \cite{LS11}.

In more detail we have:  (throughout the following the values of $m,k,r$ are as they were before)

\begin{lemma} [Lemma \ref{lem:fair-1} analogue] \label{lem:fair-1-analogue}
Fix $u := n^{0.249}$ and let $F$ be an $\ANY_{u} \circ \{\SYM,\THR\} \circ \acz_d$ circuit where each of the $u$ $\{\SYM,\THR\} \circ \acz_d$ subcircuits of $F$ has size at most $s=n^{\tau \log n}$.
There is a fair distribution $\calR$ over restrictions $\brho \in \{0,1,\ast\}^n$ such that the following holds:  With probability
$1-\gamma_{\SL} = 1-\exp(-\Omega_d(\sqrt{n/\log n}))$ over the draw of $\brho \leftarrow \calR$,
it is the case that  $F \uhr \brho$ belongs to the class $\mathscr{F}_{\text{simple},\,u} := \ANY_u \circ  \{\SYM, \THR\} \circ \AND_{k}$.
\end{lemma}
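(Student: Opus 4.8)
The plan is to run the proof of Lemma~\ref{lem:fair-1} essentially verbatim, exploiting the fact that H{\aa}stad's multi-switching lemma (Theorem~\ref{thm:H14-SL}) already applies to a \emph{collection} of depth-$2$ circuits and produces a single \emph{common} partial restriction tree. Write $F = g(C_1,\dots,C_u)$ where $g : \zo^u \to \zo$ is the top $\ANY_u$ function and each $C_i$ is a $\{\SYM,\THR\}\circ\acz_d$ circuit of size at most $s = n^{\tau\log n}$. Since $u = n^{0.249}$ and $n$ is large we have $u\le s$, hence $us\le s^2$ and $\log(us)\le 2\log s$; these two bookkeeping inequalities are what make the extension from a single top gate to an $\ANY_u$ of them essentially free.

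The first (main) step is the analogue of Corollary~\ref{cor:bash}/Corollary~\ref{cor:collapse}. I would show that for $p = \frac1{48}(48\log(us))^{-(d-1)}$ and $t = m/2^{d+1}$, with probability at least $1-us\cdot 2^{-t} = 1-\exp(-\Omega_d(\sqrt{n/\log n}))$ over $\brho'\leftarrow\calR_p$, the restricted circuit $F\uhr\brho'$ is computed by an $(m/2,\ANY_u\circ\{\SYM,\THR\}\circ\DT_{\log(us)})$-decision tree: a single depth-$(m/2)$ decision tree $T$ such that at every leaf and for every $i$ the restriction $C_i\uhr(\text{path})$ is a $\{\SYM,\THR\}\circ\DT_{\log(us)}$ circuit (with top fan-in $s^{O(1)}$), so that $F\uhr(\text{path})$, being $g$ of these $u$ functions, lies in $\ANY_u\circ\{\SYM,\THR\}\circ\DT_{\log(us)}$. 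The proof is line-for-line that of Corollary~\ref{cor:bash}, except that at each of the $d$ stages Theorem~\ref{thm:H14-SL} is invoked on the union of all layer-$i$ gates of all $u$ subcircuits --- a collection of size at most $us$ once $F$ is made layered --- rather than on the layers of a single circuit; the failure probability accumulated over the $d$ stages is at most $us\cdot 2^{-t}$, and since $\log(us) = \Theta(\log^2 n)\ll\sqrt{n/\log n} = \Theta(t)$ this is still $\exp(-\Omega_d(\sqrt{n/\log n}))$. Applying Fact~\ref{fact:fold} to each of the $u$ subcircuits at each leaf then replaces $\DT_{\log(us)}$ by $\AND_{\log(us)}$ (squaring the top fan-in, which stays $s^{O(1)}$).

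The second step is the ``trimming'' argument of Section~\ref{sec:bash}. At each leaf $\ell$ of $T$, the restriction of $F$ to the root-to-$\ell$ path is $g$ applied to $u$ circuits each of the form $\{\SYM,\THR\}\circ\AND_{2\log s}$, so across all $u$ subcircuits there are at most $s^{O(1)}$ bottom $\AND$ gates, each of width at most $2\log s$. Drawing $\bL(\ell)\sse_q[n]$ and setting $\bS_\ell = ([n]\setminus\fixed(\ell))\setminus\bL(\ell)$ exactly as in Section~\ref{sec:bash}, Fact~\ref{fact:trim} gives, for a slightly smaller choice of $q$ still of the form $n^{-\Theta(\tau)}$ (chosen so that the analogue of~(\ref{eq:zero-point-one}), namely $s^{O(1)}\binom{2\log s}{k}q^k\ll 1$, holds --- this only forces the absolute constant $\tau$ to be taken somewhat smaller), a good outcome $L(\ell)$ whose set $S_\ell$ has the property that every extension fixing $S_\ell$ collapses each of the $u$ circuits to $\{\SYM,\THR\}\circ\AND_k$, and hence collapses $F$ to $\ANY_u\circ\{\SYM,\THR\}\circ\AND_k = \mathscr{F}_{\text{simple},\,u}$. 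The fair distribution $\calR$ is then defined exactly as in Lemma~\ref{lem:fair-1} --- draw $\brho'\leftarrow\calR_p$, walk a uniform random root-to-leaf path of $T$, and assign the variables of the resulting set $S_\ell$ uniformly --- and fairness holds for the same reason as there, being a structural property of this three-stage construction that is insensitive to which gate sits above the $\{\SYM,\THR\}\circ\acz_d$ subcircuits.

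I do not expect a genuine obstacle: informally the statement just says ``the multi-switching lemma is already multi.'' The things to verify are the bookkeeping inequalities $us\le s^2$ and $\log(us)\le 2\log s$ --- which hold because $u = n^{0.249}$ is negligible against $s = n^{\tau\log n}$, so the extra $us$-factors in the switching failure probabilities are swallowed by $2^{-t}$ --- and that the trimming step survives after shrinking $q$. The closest thing to a subtlety is consistency of $q$ across the two stages of the overall argument: the later analogue of Section~\ref{sec:target} will need $\RW$ to retain structure under $\calR$, which requires $q\ge n^{-\Theta(1)}$; since $q = n^{-\Theta(\tau)}$ here and $\tau$ is a small absolute constant that we are free to choose, there is ample slack.
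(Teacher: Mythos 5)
Your proposal is correct and matches the paper's own (very brief) argument: the paper likewise handles the $\ANY_u$ top gate by treating $\ANY_u \circ \{\SYM,\THR\}$ as the gate $\G$ in Corollary~\ref{cor:bash} with total circuit size $us$ instead of $s$, observing that the resulting $us\cdot 2^{-t}$ failure probability (and the corresponding mild parameter adjustments in the trimming step) are absorbed by the slack leading to~(\ref{eq:gamma-SL}). Your extra bookkeeping ($us\le s^2$, $\log(us)\le 2\log s$, shrinking $q$ while keeping $q\ge n^{-\Theta(\tau)}$) is exactly the slack the paper invokes implicitly.
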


The proof is almost identical to that of Lemma~\ref{lem:fair-1}, with $\ANY_{u} \cdot \{\SYM,\THR\}$ taking the place of $\{\SYM,\THR\}$ throughout the argument.  Now in Corollary~\ref{cor:bash} the gate $\G$ corresponds to $\ANY_u \circ \{\SYM,\THR\}$ (rather than to just $\{\SYM,\THR\}$ as earlier) and the total circuit size of $F$ is $us$ rather than $s$ (leading to $us \cdot 2^{-t}$ rather than $s \cdot 2^{-t}$ on the RHS of the Corollary~\ref{cor:bash} bound), but this is swallowed up by the slack in the inequalities leading to (\ref{eq:gamma-SL}).

\begin{lemma} [Lemma \ref{lem:retain-structure} analogue] \label{lem:retain-structure-analogue}
Fix $u := n^{0.249}$ and let $F$ be an $\ANY_{u} \circ \{\SYM,\THR\} \circ \acz_d$ circuit where each $\{\SYM,\THR\} \circ \acz_d$ subcircuit has size at most $s=n^{\tau \log n}$. The fair distribution $\calR$ over restrictions $\brho \in \{0,1,\ast\}^n$ from Lemma \ref{lem:fair-1-analogue} satisfies the following:  With probability $1-\gamma_{\SL} - \gamma_{\target}$ over a draw of $\brho \leftarrow \calR$, both of the following hold:

\begin{enumerate}

\item [(i)] $F \uhr \brho$ belongs to $\mathscr{F}_{\text{simple},\,u} = \ANY_u \circ  \{\SYM, \THR\} \circ \AND_{k}$; and

\item [(ii)] $\RW_{m,k,r} \uhr \brho$ contains a perfect copy of $\GIP_{m/2,k+1}$.

\end{enumerate}
\end{lemma}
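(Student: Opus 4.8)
The plan is to establish Lemma~\ref{lem:retain-structure-analogue} by essentially repeating the proof of Lemma~\ref{lem:retain-structure} verbatim, observing that the argument for part~(ii) never looks at $F$ at all, and that part~(i) was already handled by Lemma~\ref{lem:fair-1-analogue}. The only quantitative change to verify is that the parameters $\gamma_{\SL}$ and $\gamma_{\target}$ are unaffected (up to constants) by the extra $\ANY_u$ gate at the top, which is the case precisely because $u = n^{0.249}$ is absorbed into the slack in the relevant inequalities.

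Concretely, I would proceed as follows. First, invoke Lemma~\ref{lem:fair-1-analogue} to obtain the fair distribution $\calR$ over restrictions $\brho \in \zo^n$ such that with probability $1 - \gamma_{\SL} = 1 - \exp(-\Omega_d(\sqrt{n/\log n}))$ over $\brho \leftarrow \calR$, we have $F \uhr \brho \in \mathscr{F}_{\text{simple},\,u} = \ANY_u \circ \{\SYM,\THR\} \circ \AND_k$; this is exactly conclusion~(i). Second, recall from Section~\ref{sec:bash} (as adapted in Lemma~\ref{lem:fair-1-analogue}) that $\calR$ is built by (a) drawing $\brho' \leftarrow \calR_p$ with $p = \frac1{48}(48\log s)^{-(d-1)}$, (b) walking down a random root-to-leaf path $\pi$ in the resulting depth-$(m/2)$ decision tree, and (c) restricting the variables in $S_\ell$ according to a good (and further stipulated) outcome $L(\ell)$ of $\bL(\ell) \subseteq_q [n]$. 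The entire analysis of how $\RW_{m,k,r}$ behaves under such a $\brho$ — Proposition~\ref{prop:chernoff}, Corollary~\ref{cor:copy}, the $A_\ell$-cardinality bound from $|\pi| \le m/2$, and the Chernoff argument producing the further-stipulated $L(\ell)$ (Remark~\ref{rem:stipulation}) — depends only on the decision tree structure of $\calR$ and on $\RW_{m,k,r}$ itself, never on the identity of $F$. Hence the exact same chain of reasoning shows that with probability $1 - \gamma_{\target}$ over $\brho \leftarrow \calR$, $\RW_{m,k,r} \uhr \brho$ contains a perfect copy of $\GIP_{m/2,k+1}$, giving conclusion~(ii). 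A union bound over the two failure events yields the claimed probability $1 - \gamma_{\SL} - \gamma_{\target}$.

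The one point requiring a sentence of justification is that the $\ANY_u$ top gate does not degrade $\gamma_{\SL}$: in the proof of Lemma~\ref{lem:fair-1-analogue}, Corollary~\ref{cor:bash} is applied with the top gate $\G$ taken to be $\ANY_u \circ \{\SYM,\THR\}$ and with total circuit size $us = n^{0.249} \cdot n^{\tau \log n}$ in place of $s$, so the right-hand side becomes $us \cdot 2^{-t}$ rather than $s \cdot 2^{-t}$; since $t = m/2^{d+1} = \Theta(\sqrt{n/\log n})$ and $\log(us) = O(\log^2 n)$, this extra polynomial factor is swallowed and we still get $\gamma_{\SL} = \exp(-\Omega_d(\sqrt{n/\log n}))$, exactly as in~(\ref{eq:gamma-SL}). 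The value of $\gamma_{\target}$ is literally unchanged since it comes from Corollary~\ref{cor:copy}, which is purely about $\RW_{m,k,r}$ under $\calR_p$.

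There is essentially no obstacle here — the lemma is a routine re-packaging. If anything, the only mild subtlety to be careful about is bookkeeping: one must confirm that the decision tree furnished by the $\ANY_u$-analogue of Corollary~\ref{cor:collapse} still has depth $\le m/2$ (so that $|A_\ell| \ge m/2$ survives), which it does because the decision-tree depth bound $2^d t$ in Corollary~\ref{cor:bash} does not depend on the fan-in of the top gate, only on the depth $d$ of the $\acz$ part and on $t$. With that checked, parts~(i) and~(ii) follow as above and the lemma is proved.
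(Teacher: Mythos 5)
Your proposal is correct and takes essentially the same route as the paper, which proves Lemma~\ref{lem:retain-structure-analogue} by simply noting that the Section~\ref{sec:target} argument is unchanged (it never uses the identity of $F$ beyond the structure of $\calR$), with conclusion~(i) supplied by Lemma~\ref{lem:fair-1-analogue} and a union bound giving $1-\gamma_{\SL}-\gamma_{\target}$. Your bookkeeping about the $\ANY_u$ top gate---replacing $s$ by $us$ in the Corollary~\ref{cor:bash} bound and checking the depth bound $2^d t = m/2$ is unaffected, so the loss is swallowed by the slack leading to (\ref{eq:gamma-SL})---is exactly the justification the paper records in its remark following Lemma~\ref{lem:fair-1-analogue}.
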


The proof of Lemma~\ref{lem:retain-structure-analogue} is unchanged from Section \ref{sec:target}.

\begin{theorem} [Theorem \ref{thm:cor-bound} analogue] \label{thm:cor-bound-analogue}
Fix $u := n^{0.249}.$  There is an absolute constant $\tau>0$ and an explicit $\poly(n)$-time computable function $H : \zo^n \to \zo$ with the following property:  for any constant $d$, for $n$ sufficiently large, for $F$ an $\ANY_{u} \circ \{\SYM,\THR\} \circ \acz_d$ circuit  where each $\{\SYM,\THR\} \circ \acz_d$ subcircuit has size at most $s=n^{\tau \log n}$, we have 
\[
\Pr_{\bx \leftarrow \zo^n}[F(\bx) = \RW_{m,k,r}(\bx)] \leq {\frac 1 2} + \exp(-\Omega(n^{0.249})).
\]  
\end{theorem}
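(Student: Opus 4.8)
The plan is to run exactly the ``endgame'' of Section~\ref{sec:endgame}, but feeding in Lemma~\ref{lem:retain-structure-analogue} in place of Lemma~\ref{lem:retain-structure}; the single new ingredient required is an efficient, low-error $(k+1)$-party NOF communication protocol for the class $\mathscr{F}_{\text{simple},\,u} = \ANY_u \circ \{\SYM,\THR\}\circ\AND_k$. Take $H := \RW_{m,k,r}$ as before. Fix any outcome $\rho$ of $\brho\leftarrow\calR$ for which conclusions (i) and (ii) of Lemma~\ref{lem:retain-structure-analogue} hold; a random $\brho$ is such an outcome with probability at least $1-\gamma_{\SL}-\gamma_{\target}$, and both $\gamma_{\SL}$ and $\gamma_{\target}$ are $\exp(-n^{0.5-o(1)})$, negligible next to the target bound $\exp(-\Omega(n^{0.249}))$. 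For such a $\rho$ we have $F\uhr\rho \in \mathscr{F}_{\text{simple},\,u}$ and $\RW_{m,k,r}\uhr\rho$ contains a perfect copy of $\GIP_{m/2,k+1}$, so by the very same reduction used to derive (\ref{eq:gamma-corr}) it suffices to bound, via Theorem~\ref{thm:BNS}, the $(k+1)$-party NOF complexity of functions in $\mathscr{F}_{\text{simple},\,u}$ with respect to the partition that Theorem~\ref{thm:BNS} provides for $\GIP_{m/2,k+1}$.

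To build the protocol, write $F\uhr\rho$ as $g(C_1,\ldots,C_u)$, where $g$ is an arbitrary $u$-variable Boolean function and each $C_i$ is a $\{\SYM,\THR\}\circ\AND_k$ circuit of size $n^{O(\log n)}$. For each $i\in[u]$, using the \emph{same} fixed $(k+1)$-block partition of the inputs, invoke Fact~\ref{fact:HG} (if the middle gate of $C_i$ is $\SYM$, giving a deterministic protocol of cost $\polylog(n)$) or Theorem~\ref{thm:nisan-PTF} (if it is $\THR$, giving a protocol of cost $O(k^3\log n\,\log(n/\gamma'))$ and error $\gamma'$). Running these $u$ subprotocols in sequence, every party learns $(C_1(\bx),\ldots,C_u(\bx))$ and then evaluates $g$ on these values locally at no additional communication cost. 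The total communication is at most $u\cdot O(k^3\log n\,\log(n/\gamma'))$ and, by a union bound over the $u$ subprotocols, the overall error is at most $u\gamma'$.

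Now set $\gamma' := \exp(-\Theta(n^{0.249}))/u$, so the overall error is $\gamma_{\err} := u\gamma' = \exp(-\Omega(n^{0.249}))$ while $\log(1/\gamma') = \Theta(n^{0.249})$. Since $u = n^{0.249}$ and $k = \Theta(\log n)$, the communication is $u\cdot\polylog(n)\cdot n^{0.249} = n^{0.498 + o(1)}$. On the other hand, setting $\gamma_{\comm} := \exp(-n^{0.249})$, Theorem~\ref{thm:BNS} tolerates communication up to $\tfrac{1}{10}\big(m/4^{k+1} - \log(1/\gamma_{\comm})\big)$; since $m = \Theta(\sqrt{n/\log n})$ and $4^{k+1} = \Theta(m^{0.001})$ we have $m/4^{k+1} = n^{0.4995 - o(1)} \gg \log(1/\gamma_{\comm}) = n^{0.249}$, so the tolerated communication is $n^{0.4995 - o(1)}$, which exceeds $n^{0.498+o(1)}$ once $n$ is large. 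Theorem~\ref{thm:BNS} therefore gives $\Prx_{\bx\leftarrow\zo^n}[(F\uhr\rho)(\bx) = (\RW_{m,k,r}\uhr\rho)(\bx)] \le \frac{1}{2} + \gamma_{\err} + \gamma_{\comm} = \frac{1}{2} + \exp(-\Omega(n^{0.249}))$ for every good $\rho$, and combining with the $\gamma_{\SL} + \gamma_{\target}$ loss from Lemma~\ref{lem:retain-structure-analogue} completes the proof. The step I expect to be the crux is exactly this parameter bookkeeping: the top $\ANY_u$ gate forces a union bound over $u$ subprotocols, which drives the affordable per-subcircuit error down to $\exp(-\Theta(n^{0.249}))$ and hence inflates each $\THR$ subprotocol by a $\log(1/\gamma') = \Theta(n^{0.249})$ factor, so the whole approach hinges on the inequality $u\cdot n^{0.249}\cdot\polylog(n) \ll m/4^{k+1} = n^{0.4995-o(1)}$, which is precisely what pins down $u = n^{0.249}$ as the largest number of extra gates we can afford.
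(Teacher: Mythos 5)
Your proposal is correct and takes essentially the same route as the paper: the paper also reruns the Section~\ref{sec:endgame} endgame with Lemma~\ref{lem:retain-structure-analogue} in place of Lemma~\ref{lem:retain-structure}, using exactly your sequential-subprotocol extensions of Fact~\ref{fact:HG} and Theorem~\ref{thm:nisan-PTF} to $\ANY_u \circ \{\SYM,\THR\}\circ\AND_k$ circuits (stated there as Fact~\ref{fact:HG-analogue} and Theorem~\ref{thm:nisan-PTF-analogue}, with the communication multiplied by $u$). Your parameter bookkeeping --- the union bound over the $u$ subprotocols, the resulting $\log(1/\gamma')=\Theta(n^{0.249})$ inflation, and the check that $u\cdot n^{0.249}\cdot\mathrm{polylog}(n)$ stays below $m/4^{k+1}$ --- is precisely the content the paper leaves implicit when it calls the proof ``virtually identical'' to that of Theorem~\ref{thm:cor-bound}.
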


The proof, using Lemmas \ref{lem:fair-1-analogue} and \ref{lem:retain-structure-analogue},
  is virtually identical to the proof of Theorem~\ref{thm:cor-bound} using Lemmas \ref{lem:fair-1} and \ref{lem:retain-structure}.  The only difference is that we use the obvious extensions of Fact~\ref{fact:HG} and Theorem~\ref{thm:nisan-PTF} to $\ANY_u \cdot \SYM \circ \AND_k$ circuits and $\ANY_u \cdot \THR \circ \AND_k$ circuits respectively; these extensions are stated for completeness below.

\begin{fact}[Fact \ref{fact:HG} analogue]
\label{fact:HG-analogue}
Let $f : \zo^n\to\zo$ be a Boolean function computed by a size-$s$ $\ANY_u \circ \SYM\circ\AND_k$ circuit. Then for any partition of the $n$ inputs of $f$ into $k+1$ blocks, there is a deterministic NOF $(k+1)$-party communication protocol that computes $f$ using $u \cdot O(k\log s)$ bits of communication.
\end{fact}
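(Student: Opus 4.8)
The plan is to reduce directly to Fact~\ref{fact:HG}. Write $F = g(h_1,\dots,h_u)$ where $g:\zo^u\to\zo$ is the function computed by the top $\ANY_u$ gate and each $h_i$ is the $\SYM\circ\AND_k$ subcircuit feeding into it; since $F$ has size $s$, each $h_i$ has size at most $s$. Fix the given partition of the $n$ input coordinates into $k+1$ blocks. The key observation is that the ``for any partition'' quantifier in Fact~\ref{fact:HG} lets us use this one fixed partition simultaneously for all of $h_1,\dots,h_u$: for each $i\in[u]$ there is a deterministic NOF $(k+1)$-party protocol $\Pi_i$ that computes $h_i$ on this partition using $O(k\log s)$ bits of communication.

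The protocol for $F$ is then to run $\Pi_1,\dots,\Pi_u$ one after another. Since every message in a deterministic NOF protocol is a broadcast seen by all players, at the end of running $\Pi_i$ every one of the $k+1$ players knows the output bit $h_i(x)$. Hence after all $u$ stages each player holds the vector $(h_1(x),\dots,h_u(x))\in\zo^u$ and can locally evaluate $g$ on it to recover $F(x)$, with no further communication. The total communication is $\sum_{i=1}^u|\Pi_i| = u\cdot O(k\log s)$, proving the claim.

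I do not expect any real obstacle: the only two points to verify are that a single partition into $k+1$ blocks serves all $u$ bottom subcircuits (immediate from Fact~\ref{fact:HG}) and that sequential composition of deterministic protocols costs only the \emph{sum} of the individual communication costs (immediate, since broadcasts accumulate, and a player reconstructs each $h_i(x)$ from the transcript of $\Pi_i$). The analogous statement needed for the $\THR$ case, an $\ANY_u\circ\THR\circ\AND_k$ version of Theorem~\ref{thm:nisan-PTF}, follows in exactly the same way using Theorem~\ref{thm:nisan-PTF} in place of Fact~\ref{fact:HG}; there one additionally sets the per-stage error to $\gamma_{\err}/u$ and takes a union bound over the $u$ stages, which multiplies the communication by $u$ and replaces $\log(n/\gamma_{\err})$ by $\log(nu/\gamma_{\err})$ --- a negligible change for our parameter settings.
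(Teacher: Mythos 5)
Your proof is correct, and it is exactly the argument the paper has in mind: the paper states Fact~\ref{fact:HG-analogue} without proof as an ``obvious extension'' of Fact~\ref{fact:HG}, and the intended justification is precisely your sequential composition of the $u$ broadcast protocols (one per $\SYM\circ\AND_k$ subcircuit, all on the same fixed partition) followed by a local evaluation of the top $\ANY_u$ gate, with the same error-splitting remark handling the $\THR$ case of Theorem~\ref{thm:nisan-PTF-analogue}. Nothing further is needed.
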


\begin{theorem}[Theorem \ref{thm:nisan-PTF} analogue] \label{thm:nisan-PTF-analogue}
Let $f : \zo^n\to\zo$ be a Boolean function computed by a $ANY_u \circ \THR\circ\AND_k$ circuit. Then for any partition of the $n$ inputs of $f$ into $k+1$ blocks, there is a randomized NOF $(k+1)$-party communication protocol that computes $f$ with error $\gamma_{\err}$ using $u \cdot O(k^3\log n\log(n/\gamma_{\err}))$ bits of communication. 
\end{theorem}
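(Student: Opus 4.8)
The plan is to reduce directly to Theorem~\ref{thm:nisan-PTF}: run the $(k+1)$-party protocols for the $u$ underlying $\THR\circ\AND_k$ subcircuits one after another, and then have every player finish by evaluating the top $\ANY_u$ gate on their own, exploiting the fact that in the number-on-forehead model every player sees all of the communication.

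In more detail, write $F = g(F_1,\dots,F_u)$, where $g : \zo^u \to \zo$ is the (arbitrary) function at the top $\ANY_u$ gate and $F_1,\dots,F_u$ are $\THR\circ\AND_k$ circuits over the same $n$ inputs, partitioned into the same $k+1$ blocks. First I would apply Theorem~\ref{thm:nisan-PTF} to each $F_i$ with error parameter $\gamma' := \gamma_{\err}/u$, obtaining a randomized NOF $(k+1)$-party protocol $P_i$ that on every input $x$ outputs $F_i(x)$ with probability at least $1-\gamma'$ using $O(k^3\log n\log(n/\gamma'))$ bits. Since $u \le n$ we have $\log(n/\gamma') = \log(nu/\gamma_{\err}) = O(\log(n/\gamma_{\err}))$, so each $P_i$ costs $O(k^3\log n\log(n/\gamma_{\err}))$ bits.

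The combined protocol $P$ simply runs $P_1,\dots,P_u$ in sequence using independent coins. Because every player sees every transmitted bit, after $P_i$ finishes every player holds the value output by $P_i$, which equals $F_i(x)$ unless $P_i$ erred; by a union bound, with probability at least $1 - u\gamma' = 1-\gamma_{\err}$ none of the $u$ subprotocols errs. In that case every player knows the tuple $(F_1(x),\dots,F_u(x))$ and can locally output $g(F_1(x),\dots,F_u(x)) = F(x)$ with no further communication. Hence $P$ computes $F$ with error at most $\gamma_{\err}$ using $u \cdot O(k^3\log n\log(n/\gamma_{\err}))$ bits, which is the claimed bound. (Fact~\ref{fact:HG-analogue} for a $\SYM$ top gate is proved the same way, except that there the subprotocols coming from Fact~\ref{fact:HG} are deterministic, so no union bound or error blow-up is needed.)

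I do not expect any real obstacle: the only two points that require attention are the routine error accounting — driving each subprotocol's error down to $\gamma_{\err}/u$ and absorbing the resulting $\log u$ overhead into the big-$O$ using $u\le n$ — and the simple observation that the top $\ANY_u$ gate is ``free'' in the NOF model since every player already possesses all of the information needed to evaluate it.
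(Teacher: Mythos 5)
Your proof is correct and is exactly the argument the paper intends: the paper presents Theorem~\ref{thm:nisan-PTF-analogue} only as an ``obvious extension'' of Theorem~\ref{thm:nisan-PTF}, and the intended justification is precisely your sequential composition of the $u$ subprotocols run with error $\gamma_{\err}/u$, followed by free local evaluation of the top $\ANY_u$ gate since all players see the full transcript. Your one auxiliary assumption, $u \le n$ (used to absorb the $\log u$ overhead), holds in the paper's setting where $u = n^{0.249}$, so the claimed $u \cdot O(k^3\log n\log(n/\gamma_{\err}))$ bound follows as you state.
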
 

Finally, the correlation bound Theorem~\ref{thm:cor-bound-many-gates} follows from Theorem \ref{thm:cor-bound-analogue} exactly as Theorem~6 of \cite{LS11} follows from Lemma~3 of that paper.

\end{document}